\documentclass[a4paper,11pt]{article}
\usepackage{jheppub}
\usepackage{amsthm}
\usepackage{mathtools}
\usepackage{amssymb}
\usepackage{mathrsfs}
\usepackage{booktabs}
\usepackage{microtype}
\hypersetup{hypertexnames=false}
\title{\boldmath Spherical-DAHA predictions for boundary monopole bubbling of non-minuscule 't Hooft lines in $\mathcal N=4$ SYM}

\author{Ruiliang Li}
\affiliation{Tsinghua University,\\ Beijing, China}
\emailAdd{lirl23@mails.tsinghua.edu.cn}

\abstract{We study non-minuscule boundary 't Hooft lines in four-dimensional $\mathcal N=4$ $U(N)$ super Yang--Mills theory with the regular Nahm-pole boundary condition. For one-row charges we construct the exact spherical-DAHA operator $\mathbf e_+h_r(Y)\mathbf e_+$, determine its finite shift expansion and discrete Macdonald spectrum, and evaluate the S-dual Neumann Wilson two-line index as a finite-$N$ Macdonald pairing. For the first non-minuscule $U(2)$ coefficient, a local two-minuscule convolution model with Higgs branch $T^*\mathbb P^1$ has two Jeffrey--Kirwan residues, while subtracting the determinant intersection-cohomology summand supplies the additional $-\mathfrak t$ term.  The same transverse $A_1$ calculation, dressed by the spectator-root factor, reproduces the first lower $U(3)$ coefficient.  The local convolution calculation and its spectator lift provide low-rank checks of the complete DAHA coefficients independent of the inverse recursion, while their boundary-localization interpretation remains conjectural. Assuming an isometric oriented boundary Fourier intertwiner, the DAHA construction is compatible with the expected boundary S-duality relation.}

\keywords{Supersymmetric Gauge Theory, Wilson, 't Hooft and Polyakov loops, Duality in Gauge Field Theories, Extended Supersymmetry}

\newtheorem{theorem}{Theorem}[section]
\newtheorem{proposition}[theorem]{Proposition}

\newtheorem{corollary}[theorem]{Corollary}
\newtheorem{conjecture}[theorem]{Conjecture}
\theoremstyle{definition}

\newcommand{\C}{\mathbb C}

\newcommand{\Neu}{\mathrm{Neu}}
\newcommand{\Nahm}{\mathrm{Nahm}}

\newcommand{\cD}{\mathcal D}
\newcommand{\cL}{\mathcal L}
\newcommand{\cO}{\mathcal O}
\newcommand{\bT}{\mathbb T}
\newcommand{\eps}{\varepsilon}
\newcommand{\one}{\mathbf 1}
\newcommand{\JKRes}{\operatorname{JKRes}}
\newcommand{\ch}{\operatorname{ch}}
\newcommand{\Sym}{\operatorname{Sym}}

\begin{document}
\maketitle
\raggedbottom

\section{Introduction and summary}
\label{sec:introduction-summary}

Boundary line defects provide a protected setting in which electric--magnetic duality can be tested beyond minuscule charge.  A Wilson line ending on a Neumann boundary is represented in the half-index by a character of the boundary holonomy.  Its S-dual is a magnetic line ending on the regular Nahm-pole boundary.  When the magnetic charge is non-minuscule, smooth monopoles may be absorbed by the singular monopole, and the corresponding difference operator contains lower effective magnetic shifts.  Determining their coefficients is the boundary monopole-bubbling problem.

We consider four-dimensional $\mathcal N=4$ $U(N)$ super Yang--Mills theory on a half-space.  The boundary S-duality of Gaiotto and Witten exchanges Neumann and regular Nahm-pole boundary conditions and maps a Wilson line of highest weight $\lambda$ to a boundary 't Hooft line of coweight $\lambda$ \cite{Gaiotto:2008sa,Gaiotto:2008ak}.  The classification of Wilson--'t Hooft operators and its relation to geometric Langlands are reviewed in \cite{Kapustin:2005py,Kapustin:2006pk}.  Boundary-line half-indices for minuscule charges were studied in \cite{Hatsuda:2025bmn}.  Here the main family is the first non-minuscule tower,
\begin{equation}
  \lambda=(r,0,\ldots,0),\qquad r\geq0.
  \label{eq:intro-one-row-charge}
\end{equation}

There are four logically distinct parts of the analysis.  First, the Neumann Wilson-line answer is evaluated exactly as a finite-$N$ Macdonald pairing.  Second, an exact spherical-DAHA operator is constructed and its complete $q$-difference expansion and discrete Macdonald spectrum are determined.  Third, for $U(2)$ with $B=(2,0)\to(1,1)$, the first non-minuscule lower coefficient is obtained independently from the local two-step convolution and its intersection-cohomology summand; the same transverse slice admits a spectator-root lift to $U(3)$ with $(2,0,0)\to(1,1,0)$, factorizing residue by residue.  These are mathematical convolution calculations, not microscopic derivations of the regular Nahm-pole boundary quantum mechanics.  Fourth, the identification of the general DAHA operator with the microscopic regular Nahm-pole localization operator is formulated as a boundary-localization conjecture.  Equality of the Neumann and Nahm half-indices is correspondingly a compatibility statement conditional on an isometric, orientation-sensitive boundary Fourier intertwiner.

The shift-orbit decomposition of a difference operator is indexed by effective magnetic charges.  Its diagonal expansion in Macdonald polynomials is indexed by spectral channels.  Although both sets of labels can be represented by partitions for the family \eqref{eq:intro-one-row-charge}, they are not the same decomposition and are not identified term by term.

\subsection{Exact electric and DAHA results}
\label{subsec:intro-main-results}

Let $\langle\ ,\ \rangle_{N;q,\mathfrak t}$ be the vacuum-normalized, finite-$N$ Macdonald pairing defined in Section~\ref{sec:boundary-line-dictionary}.  The exact Neumann two-line half-index is
\begin{equation}
  \widehat{\mathbb I}^{\Neu}_\lambda(q,\mathfrak t)
  =\langle s_\lambda,s_\lambda\rangle_{N;q,\mathfrak t}.
  \label{eq:intro-wilson-product}
\end{equation}
For a one-row representation this has the finite spectral resolution
\begin{equation}
  \boxed{
  \widehat{\mathbb I}^{\Neu}_{(r)}(q,\mathfrak t)
  =\sum_{\substack{\mu\vdash r\\ \ell(\mu)\leq N}}
  Q_\mu\!\left[\frac{1-q}{1-\mathfrak t}\right]^2
  \mathcal N^{(N)}_\mu(q,\mathfrak t) .}
  \label{eq:intro-result-wilson-spectral}
\end{equation}
Here $P_\mu$ is the monic Macdonald polynomial, $Q_\mu=b_\mu P_\mu$ is its stable dual normalization, and $\mathcal N^{(N)}_\mu$ is the vacuum-normalized finite-$N$ norm.  The distinction between the stable dual factor $b_\mu$ and the finite-$N$ norm is kept throughout.

Let $Y_i$ be the Macdonald-normalized commuting Cherednik operators and let $\mathbf e_+$ be the finite-Hecke spherical idempotent.  We define the one-row spherical-DAHA operator
\begin{equation}
  \mathsf H^{\mathrm{DAHA}}_{(r)}
  =\mathbf e_+h_r(Y_1,\ldots,Y_N)\mathbf e_+.
  \label{eq:intro-daha-operator}
\end{equation}
This is an exact algebraic definition.  If $\mathcal D_k$ denotes the $k$th Macdonald--Ruijsenaars operator, with $\mathcal D_k=0$ for $k>N$, then
\begin{equation}
  \sum_{r\geq0}u^r\mathsf H^{\mathrm{DAHA}}_{(r)}
  =\left(\sum_{k=0}^{N}(-u)^k\mathcal D_k\right)^{-1}.
  \label{eq:intro-daha-generating-function}
\end{equation}
Consequently
\begin{equation}
  \mathsf H^{\mathrm{DAHA}}_{(r)}
  =\sum_{\substack{\gamma\in\mathbb Z_{\geq0}^N\\|\gamma|=r}}
  \mathsf C^{\mathrm{DAHA}}_{r,\gamma}(\mathbf x;q,\mathfrak t)
  T_{q^\gamma}.
  \label{eq:intro-shift-expansion}
\end{equation}
The Weyl orbit of $\gamma$ is a candidate effective magnetic sector; every composition in the orbit is retained in \eqref{eq:intro-shift-expansion}.  The same operator is diagonal on the Macdonald basis,
\begin{equation}
  \mathsf H^{\mathrm{DAHA}}_{(r)}P_\mu
  =h_r(\boldsymbol\xi_\mu)P_\mu,
  \qquad
  \boldsymbol\xi_\mu
  =(q^{\mu_1}\mathfrak t^{N-1},\ldots,q^{\mu_N}).
  \label{eq:intro-discrete-spectrum}
\end{equation}
Equating coefficients in \eqref{eq:intro-daha-generating-function} gives the exact recursion
\begin{equation}
\begin{split}
  \mathsf C^{\mathrm{DAHA}}_{r,\gamma}(\mathbf x)
  ={}&\sum_{\substack{\varnothing\neq I\subset\{1,\ldots,N\}\\
  \varepsilon_I\leq\gamma}}
  (-1)^{|I|+1}\mathfrak t^{|I|(|I|-1)/2}
  \prod_{\substack{i\in I\\j\notin I}}
  \frac{\mathfrak t x_i-x_j}{x_i-x_j}
  \\
  &\hspace{2.1cm}\times
  \mathsf C^{\mathrm{DAHA}}_{r-|I|,\gamma-\varepsilon_I}
  (q^{\varepsilon_I}\mathbf x),
\end{split}
  \label{eq:intro-localization-recursion}
\end{equation}
with $\mathsf C^{\mathrm{DAHA}}_{0,0}=1$.  Equation \eqref{eq:intro-localization-recursion} is exactly the lambda-ring identity $H(u)E(-u)=1$ evaluated on the commuting $Y$-operators, or equivalently an identity in the polynomial representation of the spherical DAHA.  Classical geometric Satake gives an unrefined Grothendieck-group counterpart, in which the same symmetric-function relation is represented by an alternating relation among convolution classes.  It does not construct the $q,\mathfrak t$-refined coefficients in \eqref{eq:intro-localization-recursion}, nor a canonical refined derived projector for arbitrary $r$.  The summands in the recursion will therefore be called algebraic recursion paths, not localization fixed points or objects of a constructed all-$r$ convolution complex.

\subsection{The first non-minuscule IC-summand subtraction}
\label{subsec:intro-first-projector}

For $U(2)$ and $B=(2,0)\to v=(1,1)$, the local resolved two-minuscule convolution admits a rank-one gauge-theory presentation with two full hypermultiplets.  Its Higgs branch is
\begin{equation}
  X\simeq T^*\mathbb P^1,
\end{equation}
and its positive-chamber JK integral has precisely two poles.  Writing $y=x_1/x_2$ and $F_{\mathfrak t}(z)=(1-\mathfrak t z)/(1-z)$, their residues are
\begin{equation}
  R_{12}=F_{\mathfrak t}(y)F_{\mathfrak t}((qy)^{-1}),
  \qquad
  R_{21}=F_{\mathfrak t}(y^{-1})F_{\mathfrak t}(y/q).
  \label{eq:intro-two-convolution-residues}
\end{equation}
The convolution therefore gives $R_{12}+R_{21}$.  The one-row intersection-cohomology summand is selected from the semismall contraction $T^*\mathbb P^1\to\mathbb C^2/\mathbb Z_2$.  The exceptional summand is removed by a Tate-twisted Koszul class supported on the zero section, whose localized equivariant fixed-point functional is $\mathfrak t$.  Thus
\begin{equation}
  \boxed{
  \mathsf C^{\mathrm{IC}}_{2,(1,1)}
  =R_{12}+R_{21}-\mathfrak t .}
  \label{eq:intro-u2-projector-coefficient}
\end{equation}
This calculation is independent of \eqref{eq:intro-daha-generating-function}.  In particular, $-\mathfrak t$ is not a third JK residue: it is the contribution from subtracting the determinant IC/Satake summand of the two-fold fundamental convolution.

For $U(3)$ and the lower shift $(1,1,0)$, the same $A_1$ slice is accompanied by the spectator-root factor
\begin{equation}
 \mathsf S_{12|3}(\mathbf x)=
 \prod_{a=1}^{2}\frac{\mathfrak t x_a-x_3}{x_a-x_3}.
\end{equation}
The two residues and the antisymmetric IC subtraction are all multiplied by
$\mathsf S_{12|3}$, giving
\begin{equation}
 \mathsf C^{\mathrm{IC}}_{2,(1,1,0)}
 =\mathsf S_{12|3}(\mathbf x)
 \bigl(R_{12}+R_{21}-\mathfrak t\bigr).
 \label{eq:intro-u3-projector-factorization}
\end{equation}
Here $(1,1,0)=\omega_2$ is the antisymmetric minuscule channel of $U(3)$.
This spectator lift is the first rank-three check of the transverse-slice
normalization.

The exact local convolution result provides mathematical evidence for the physical interpretation of the all-$r$ algebraic recursion, but does not by itself construct the complete D1--D3--D5 world-line theory.  We therefore state the general identification as
\begin{equation}
  \widehat{\mathcal T}^{\partial,\mathrm{loc}}_{(r)}
  \stackrel{\mathrm{conj.}}{=}
  \mathsf H^{\mathrm{DAHA}}_{(r)}.
  \label{eq:intro-localization-conjecture}
\end{equation}
More generally, for a dominant weight $\lambda$ the algebraic candidate is
\begin{equation}
  \mathsf H^{\mathrm{DAHA}}_\lambda
  =\mathbf e_+s_\lambda(Y)\mathbf e_+,
  \label{eq:intro-general-operator}
\end{equation}
and Conjecture~\ref{conj:general-boundary-Hecke} identifies its shift-orbit coefficients with the regular Nahm-pole localization sectors after the same unbubbled normalization is used.

\subsection{Conditional boundary S-duality}
\label{subsec:intro-conditional-duality}

Let $U_S$ denote the boundary Fourier transform between the boundary-adapted pairings.  The standard boundary S-duality input is
\begin{align}
  U_S\mathbf1_{\Neu}&=\mathbf1_{\Nahm},
  &
  \left\langle U_Sf,U_Sg\right\rangle_{\Nahm;q,q/\mathfrak t}
  &=\langle f,g\rangle_{\Neu;q,\mathfrak t},
  \nonumber\\
  U_S h_r(X)U_S^{-1}&=\mathsf H^{\mathrm{DAHA}}_{(r)},
  &
  U_Sh_r(X^{-1})U_S^{-1}&=\mathsf H^{\mathrm{DAHA}}_{(r)^\vee}.
  \label{eq:intro-boundary-fourier-input}
\end{align}
Here the boundary wavefunction has already been absorbed into each pairing.  In the alternative universal-gluing convention, the first relation is equivalently written $\Psi_{\Nahm}=U_S\Psi_{\Neu}$, but the two conventions are not combined.  These relations imply
\begin{equation}
  \widehat{\mathbb I}^{\Nahm}_{(r)}\!\left(q,\frac q{\mathfrak t}\right)
  =\widehat{\mathbb I}^{\Neu}_{(r)}(q,\mathfrak t).
  \label{eq:intro-main-index-equality}
\end{equation}
Equation \eqref{eq:intro-main-index-equality} is a compatibility theorem for the DAHA candidate under the input \eqref{eq:intro-boundary-fourier-input}; it is not an independent derivation of the microscopic regular Nahm-pole boundary defect or pairing.  If the localization conjecture \eqref{eq:intro-localization-conjecture} holds, it becomes the physical S-duality equality for the bubbling-corrected boundary 't Hooft line.

For $U(N)$ a common determinant shift is separated by
\begin{equation}
  m=B_N,\qquad \bar B=B-m\mathbf1,\qquad \bar B_N=0.
\end{equation}
The corresponding determinant translations carried by an oriented line and its dual are inverse and cancel in their bilinear two-line insertion.  Hence, for $U(2)$, all dominant charges reduce to one-row charges at the level of normalized two-line observables.

\subsection{Relation to earlier work and organization}
\label{subsec:intro-relation-earlier-work}

The boundary conditions and their S-duality are those of \cite{Gaiotto:2008sa,Gaiotto:2008ak}; regular Nahm poles also enter the gauge-theoretic construction of knot invariants \cite{Witten:2011zz,Gaiotto:2011nm}.  Magnetic line defects as difference operators are central to the bulk localization and integrable-system literature \cite{Ito:2012aa,Gomis:2011pf,Gang:2012yr,Maruyoshi:2021bwe,Costello:2024cgj}.  Bulk bubbling quantum mechanics and its possible non-compact contributions were analyzed in \cite{Brennan:2018yuj,Assel:2020bubbling}.  The low-rank model used here is instead the local two-minuscule convolution.  Its semismall decomposition realizes the representation-theoretic direct-sum decomposition expected from geometric Satake \cite{Mirkovic:2007}.  Affine-Grassmannian slices and handsaw spaces provide related geometric structures \cite{Nakajima:2012handsaw,Braverman:2016wma}, but no general identification of either space with the boundary localization problem is assumed.

Section~\ref{sec:boundary-line-dictionary} fixes the oriented pairings, dual-line convention, determinant reduction, and the distinction between pure bubbling factors and complete shift coefficients.  Section~\ref{sec:wilson-side} derives the exact Wilson answer.  Section~\ref{sec:boundary-bubbling} presents the algebraic lambda-ring recursion, the independent $U(2)$ transverse-slice check, and its $U(3)$ spectator lift.  Section~\ref{sec:daha-kernel} establishes the spherical-DAHA construction, its discrete Macdonald spectrum, and its compatibility with boundary S-duality.  Section~\ref{sec:examples-limits-checks} gives low-rank checks and special limits.  Section~\ref{sec:hecke-outlook} states the localization conjecture and its geometric motivation.  The appendices collect conventions, low-rank algebraic coefficients, and expansion data.

\section{Boundary line half-indices and the S-duality dictionary}
\label{sec:boundary-line-dictionary}

This section fixes the normalization of the two boundary descriptions and separates three pieces of data which enter the magnetic calculation: the pure bubbling factor, the unbubbled effective-monopole operator, and their complete coefficient in the $q$-difference expansion.  We also specify the orientation of the two line insertions and the treatment of the determinant direction of $U(N)$.

Throughout, $|q|<1$ in the analytic interpretation, while all formulae may alternatively be read as identities in a completed Laurent ring.  The Macdonald parameter $\mathfrak t$ is fixed by the Neumann vector-multiplet determinant
\begin{equation}
  \Delta_N(\mathbf x;q,\mathfrak t)
  =\prod_{i\neq j}
  \frac{(x_i/x_j;q)_\infty}{(\mathfrak t x_i/x_j;q)_\infty}.
  \label{eq:section2-macdonald-density}
\end{equation}
Let $u_R$ denote the fugacity conjugate to $H-C$, denoted by $t$ in
Ref.~\cite{Hatsuda:2025bmn}.  In the JHEP version of record, Eq.~(2.20)
contains the weight $u_R^{H-C}$ and Eq.~(2.27) defines the Macdonald parameter
$\mathfrak t=q^{1/2}u_R^{-2}$.  The general S-duality comparison stated there
inverts $u_R$; Eq.~(3.11) displays this map for the $U(2)$ fundamental-line
correlator.  Therefore the Macdonald coordinate on the Nahm side is
\begin{equation}
  \mathfrak t_{\Nahm}:=q^{1/2}u_R^2=\frac q{\mathfrak t},
  \qquad
  (q,\mathfrak t)\longmapsto
  \left(q,\frac q{\mathfrak t}\right).
  \label{eq:physical-parameter-map-section2}
\end{equation}
All DAHA formulae in this paper are nevertheless written in the electric
parameter $\mathfrak t$.  Thus, when a DAHA operator is used for a regular Nahm-pole
observable at the physical parameters $(q,q/\mathfrak t)$, our convention is
\begin{equation}
  \mathsf H^{\mathrm{DAHA};\Nahm}_{\lambda}
  \left(q,\frac q{\mathfrak t}\right)
  :=\mathsf H^{\mathrm{DAHA}}_{\lambda}(q,\mathfrak t).
  \label{eq:section2-Nahm-DAHA-parameter-convention}
\end{equation}
In particular, the rank-two IC/Satake subtraction found below is
$-\mathfrak t$, not $-q/\mathfrak t$.  The transformation in
\eqref{eq:physical-parameter-map-section2} compares the physical boundary parameters; it
does not alter the parameter in the DAHA formula on the right-hand side of
\eqref{eq:section2-Nahm-DAHA-parameter-convention}.

\subsection{Boundary states and oriented bilinear pairings}
\label{subsec:half-index-geometry}

We work on $HS^3\times S^1$, whose boundary is $S^2\times S^1$.  The half-BPS boundary condition preserves three-dimensional $\mathcal N=4$ supersymmetry.  For a boundary condition $\mathcal B$ and a line wrapping $S^1$, the protected trace has the schematic form
\begin{equation}
  \mathbb I_{\mathcal B,\mathcal L}
  =\operatorname{Tr}_{\mathcal H_{\mathcal B,\mathcal L}}
  (-1)^F q^{\mathcal J}\mathfrak t^{\mathcal R}.
  \label{eq:abstract-half-index}
\end{equation}
Only the induced meromorphic pairing will be needed below.

Let $\mathscr P_N^{S_N}$ be the symmetric Laurent-polynomial representation.
We use boundary-adapted pairings: the boundary wavefunction and its one-loop
measure are absorbed into the pairing itself, and the empty boundary is
represented by the unit vector $\mathbf 1_{\mathcal B}=1$.  For the Neumann
boundary this gives the vacuum-normalized bilinear Macdonald pairing
\begin{equation}
  \langle f,g\rangle_{\Neu;q,\mathfrak t}
  \equiv\langle f,g\rangle_{N;q,\mathfrak t}
  =
  \frac{1}{\mathcal Z_N(q,\mathfrak t)}
  \frac1{N!}\operatorname{CT}_{\mathbf x}
  \left[
  \Delta_N(\mathbf x;q,\mathfrak t)
  f(\mathbf x)g(\mathbf x^{-1})
  \right],
  \qquad
  \langle1,1\rangle_{\Neu}=1,
  \label{eq:section2-oriented-neumann-pairing}
\end{equation}
where
\begin{equation}
  \mathcal Z_N(q,\mathfrak t)
  =\frac1{N!}\operatorname{CT}_{\mathbf x}
  \Delta_N(\mathbf x;q,\mathfrak t).
  \label{eq:section2-empty-neumann}
\end{equation}
Thus $\mathbf 1_{\Neu}=1$ and
$\langle\mathbf 1_{\Neu},\mathbf 1_{\Neu}\rangle_{\Neu;q,\mathfrak t}=1$.
We suppress the parameters on the boundary subscript when they are clear.
This is a
bilinear meromorphic pairing; no complex conjugation or positivity is assumed.
In a real chamber such as $0<q,\mathfrak t<1$ it admits the usual analytic
contour interpretation, but the arguments below are formal and do not require
a Hilbert-space completion.

For an oriented line operator $\mathcal L$, let $\mathcal L^\vee$ denote the oppositely oriented dual line.  A normalized two-line observable has the form
\begin{equation}
  \widehat{\mathbb I}_{\mathcal B,\mathcal L}
  =
  \frac{
  \langle\mathbf 1_{\mathcal B},
  \widehat{\mathcal L}^{\,\vee}\widehat{\mathcal L}
  \mathbf 1_{\mathcal B}\rangle_{\mathcal B}}
  {\langle\mathbf 1_{\mathcal B},\mathbf 1_{\mathcal B}\rangle_{\mathcal B}}.
  \label{eq:section2-oriented-line-pairing}
\end{equation}
The notation in \eqref{eq:section2-oriented-line-pairing} records orientation, not an adjoint with respect to a positive inner product.  On the electric side it reduces exactly to \eqref{eq:section2-oriented-neumann-pairing}.  On the magnetic side it is the bilinear pairing transported by the boundary Fourier transform.  This formulation makes the cancellation of inverse determinant translations manifest.

\subsection{Neumann Wilson lines}
\label{subsec:neumann-boundary-lines}

A Wilson line is labelled by a dominant integral $U(N)$ weight
\begin{equation}
  \lambda=(\lambda_1,\ldots,\lambda_N),
  \qquad
  \lambda_1\geq\cdots\geq\lambda_N.
  \label{eq:dominant-weight-section2}
\end{equation}
Its character is the Laurent Schur function $s_\lambda(\mathbf x)$.  The oppositely oriented dual representation has highest weight
\begin{equation}
  \lambda^\vee=(-\lambda_N,\ldots,-\lambda_1),
  \qquad
  s_{\lambda^\vee}(\mathbf x)=s_\lambda(\mathbf x^{-1}).
  \label{eq:dual-representation-section2}
\end{equation}
Thus
\begin{equation}
  \widehat{\mathbb I}^{\Neu}_\lambda(q,\mathfrak t)
  =
  \langle s_\lambda,s_\lambda\rangle_{N;q,\mathfrak t}.
  \label{eq:section2-normalized-neumann}
\end{equation}
In operator notation the two orientations are multiplication by $s_\lambda(X)$ and $s_{\lambda^\vee}(X)$.

If $\mathbf1=(1,\ldots,1)$ and $m\in\mathbb Z$, then
\begin{equation}
  s_{\lambda+m\mathbf1}(\mathbf x)
  =(x_1\cdots x_N)^m s_\lambda(\mathbf x).
  \label{eq:center-character-section2}
\end{equation}
The determinant monomial and its inverse cancel between the two orientations in \eqref{eq:section2-oriented-neumann-pairing}.  Hence
\begin{equation}
  \widehat{\mathbb I}^{\Neu}_{\lambda+m\mathbf1}
  =\widehat{\mathbb I}^{\Neu}_\lambda.
  \label{eq:section2-electric-determinant-invariance}
\end{equation}

\subsection{Regular Nahm-pole lines and screened shifts}
\label{subsec:regular-nahm-lines}

The magnetic boundary condition is the regular Nahm pole.  If $y$ is the coordinate normal to the boundary, three adjoint scalars obey Nahm's equations and have the principal singularity
\begin{equation}
  X_i(y)=\frac{\rho(t_i)}{y}+O(1),
  \qquad i=1,2,3,
  \label{eq:regular-nahm-pole-section2}
\end{equation}
where $\rho:\mathfrak{su}(2)\to\mathfrak{u}(N)$ is the principal embedding \cite{Gaiotto:2008sa,Gaiotto:2008ak}.  A boundary 't Hooft line is labelled by a dominant coweight
\begin{equation}
  B=(B_1,\ldots,B_N),
  \qquad B_1\geq\cdots\geq B_N.
  \label{eq:magnetic-charge-section2}
\end{equation}
For $U(N)$ we identify weights and coweights in the standard basis.

For non-minuscule $B$, effective magnetic charges satisfy
\begin{equation}
  v\preceq B,\qquad B-v\in Q^\vee_+.
  \label{eq:screening-order-section2}
\end{equation}
The full localization operator may be organized by dominant effective charges,
\begin{equation}
  \widehat{\mathcal T}^{\partial,\mathrm{loc}}_B
  =
  \sum_{v\preceq B}
  \mathsf Z^\partial_{B,v}\,
  \mathcal U^\partial_v,
  \qquad
  \mathsf Z^\partial_{B,B}=1.
  \label{eq:section2-magnetic-operator-bubbling}
\end{equation}
Here $\mathsf Z^\partial_{B,v}$ is the \emph{pure bubbling index}, while $\mathcal U^\partial_v$ is the complete unbubbled operator of effective charge $v$.  The top normalization in \eqref{eq:section2-magnetic-operator-bubbling} concerns only $\mathsf Z^\partial_{B,B}$.

In a composition basis, the same operator has the form
\begin{equation}
  \widehat{\mathcal T}^{\partial,\mathrm{loc}}_B
  =
  \sum_{v\preceq B}\ \sum_{\gamma^+=v}
  \mathsf C^\partial_{B,\gamma}(\mathbf x;q,\mathfrak t)
  T_{q^\gamma}.
  \label{eq:section2-composition-expansion}
\end{equation}
The complete coefficient $\mathsf C^\partial_{B,\gamma}$ contains both the pure bubbling factor and the non-trivial coefficient of the unbubbled effective-monopole shift.  Schematically, at the composition level,
\begin{equation}
  \mathsf C^\partial_{B,\gamma}
  =
  \mathsf Z^\partial_{B,v;\gamma}\,
  \mathsf U^\partial_{v,\gamma},
  \qquad \gamma^+=v,
  \label{eq:section2-pure-versus-full}
\end{equation}
with any sum over internal localization data understood on the right-hand side.  In particular,
\begin{equation}
  \mathsf Z^\partial_{B,B}=1
  \quad\not\Longrightarrow\quad
  \mathsf C^\partial_{B,\gamma}=1
  \quad(\gamma^+=B).
  \label{eq:section2-top-factor-distinction}
\end{equation}
The top full coefficient contains the usual product of rational one-loop factors.  This distinction will be maintained in the low-rank examples.

\subsection{Determinant reduction}
\label{subsec:section2-determinant-reduction}

For a dominant $U(N)$ magnetic charge define
\begin{equation}
  m=B_N,\qquad
  \bar B=B-m\mathbf1,\qquad
  \bar B_N=0.
  \label{eq:section2-determinant-reduction}
\end{equation}
The representative $\bar B$ will be called determinant-reduced; it is not, in general, a traceless $SU(N)$ coweight.  If a boundary SQM based on simple-root screening is proposed, its framing would first be assigned to $\bar B$; a candidate dimension prescription is
\begin{equation}
  \dim W_i=\bar B_i-\bar B_{i+1},
  \qquad i=1,\ldots,N-1,
  \qquad W_N=0,
  \label{eq:section2-reduced-framing-dimensions}
\end{equation}
with the complete torus character of each $W_i$ specified separately whenever a localization model is used.

In the Macdonald normalization the determinant translation is
\begin{equation}
  \mathsf C_N
  =Y_1\cdots Y_N
  =\mathcal D_N
  =\mathfrak t^{N(N-1)/2}
  T_{q,x_1}\cdots T_{q,x_N}.
  \label{eq:section2-determinant-translation}
\end{equation}
It commutes with the symmetric $Y$-subalgebra, but it is not central in the full DAHA because it has non-trivial $q$-commutation with the $X_i$.  The two putative localization operators for opposite line orientations carry inverse translations,
\begin{equation}
  \widehat{\mathcal T}^{\partial,\mathrm{loc}}_B
  =\mathsf C_N^m\widehat{\mathcal T}^{\partial,\mathrm{loc}}_{\bar B},
  \qquad
  \widehat{\mathcal T}^{\partial,\mathrm{loc}}_{B^\vee}
  =\mathsf C_N^{-m}\widehat{\mathcal T}^{\partial,\mathrm{loc}}_{\bar B^\vee}.
  \label{eq:section2-oriented-determinant-lines}
\end{equation}
They therefore cancel in the ordered two-line insertion.  For $U(2)$,
\begin{equation}
  (a,b)=b(1,1)+(a-b,0),
  \qquad a\geq b,
  \label{eq:section2-U2-determinant-decomp}
\end{equation}
so every normalized two-line observable reduces to the one-row charge $(a-b,0)$.

\subsection{S-duality as an intertwining input}
\label{subsec:S-duality-dictionary}

Let $U_S$ be the boundary Fourier transform between the boundary-adapted
Neumann and regular Nahm-pole function spaces.  We use as physical input
\begin{align}
  U_S\mathbf 1_{\Neu}&=\mathbf 1_{\Nahm},
  \label{eq:boundary-vacuum-S-duality-dictionary}\\
  U_S\,s_\lambda(X)\,U_S^{-1}
  &=\mathbf e_+s_\lambda(Y)\mathbf e_+,
  \label{eq:boundary-line-S-duality-dictionary}
\end{align}
with the inverse Fourier orientation for the oppositely oriented dual line.  The physical transformation $\mathfrak t\mapsto q/\mathfrak t$ is separate from the DAHA Fourier bookkeeping and is not generated by the DAHA automorphism itself.

We also assume that $U_S$ preserves the oriented bilinear pairing.  In a
universal gluing convention, in which one instead fixes a
boundary-independent pairing and keeps the boundary wavefunctions explicit,
\eqref{eq:boundary-vacuum-S-duality-dictionary} is equivalently written as
$\Psi_{\Nahm}=U_S\Psi_{\Neu}$.  The two conventions must not be used
simultaneously.  Under these assumptions,
\eqref{eq:boundary-line-S-duality-dictionary} gives
\begin{equation}
  \widehat{\mathbb I}^{\Nahm}_\lambda\!\left(q,\frac q{\mathfrak t}\right)
  =
  \widehat{\mathbb I}^{\Neu}_\lambda(q,\mathfrak t)
  \label{eq:section2-S-duality-general-form}
\end{equation}
for the DAHA candidate operator.  For the one-row family,
\begin{equation}
  \widehat{\mathbb I}^{\Nahm}_{(r)}\!\left(q,\frac q{\mathfrak t}\right)
  =
  \widehat{\mathbb I}^{\Neu}_{(r)}(q,\mathfrak t).
  \label{eq:section2-one-row-S-duality-target}
\end{equation}
This is the compatibility statement proved in Section~\ref{subsec:one-row-dual-norm}.  Its interpretation as an equality for the microscopic bubbling-corrected line additionally uses the localization conjecture of Section~\ref{subsec:general-partitions-prescription}.

\section{The Wilson side}
\label{sec:wilson-side}

The Neumann Wilson-line calculation is exact and independent of the magnetic localization conjecture.  A line in the irreducible $U(N)$ representation of highest weight $\lambda$ inserts the Laurent Schur character $s_\lambda(\mathbf x)$.  The boundary gauge projection then gives the finite-$N$ Macdonald pairing fixed in Section~\ref{sec:boundary-line-dictionary}.

\subsection{The normalized finite-\texorpdfstring{$N$}{N} pairing}
\label{subsec:macdonald-product}

Let
\begin{equation}
  (z;q)_\infty=\prod_{n=0}^{\infty}(1-zq^n)
\end{equation}
and recall the density
\begin{equation}
  \Delta_N(\mathbf x;q,\mathfrak t)
  =\prod_{i\neq j}
  \frac{(x_i/x_j;q)_\infty}{(\mathfrak t x_i/x_j;q)_\infty}.
  \label{eq:macdonald-density}
\end{equation}
The vacuum factor and the normalized pairing are
\begin{align}
  \mathcal Z_N(q,\mathfrak t)
  &=
  \frac1{N!}\operatorname{CT}_{\mathbf x}
  \Delta_N(\mathbf x;q,\mathfrak t),
  \label{eq:macdonald-vacuum}
  \\
  \langle f,g\rangle_{N;q,\mathfrak t}
  &=
  \frac{1}{\mathcal Z_N(q,\mathfrak t)}
  \frac1{N!}\operatorname{CT}_{\mathbf x}
  \left[
  \Delta_N(\mathbf x;q,\mathfrak t)
  f(\mathbf x)g(\mathbf x^{-1})
  \right].
  \label{eq:normalized-macdonald-product}
\end{align}
Thus $\langle1,1\rangle_{N;q,\mathfrak t}=1$.  The constant term may be represented by the standard torus contour in an analytic chamber.  For generic meromorphic parameters \eqref{eq:normalized-macdonald-product} is a bilinear pairing, and no positivity assumption is made.

The dual weight is
\begin{equation}
  \lambda^\vee=(-\lambda_N,\ldots,-\lambda_1),
  \qquad
  s_{\lambda^\vee}(\mathbf x)=s_\lambda(\mathbf x^{-1}).
  \label{eq:dual-weight}
\end{equation}
The normalized Wilson two-line half-index is therefore
\begin{equation}
  \boxed{
  \widehat{\mathbb I}^{\Neu}_\lambda(q,\mathfrak t)
  =\langle s_\lambda,s_\lambda\rangle_{N;q,\mathfrak t}.}
  \label{eq:wilson-macdonald-product}
\end{equation}
If $\mathbf1=(1,\ldots,1)$, multiplication by the determinant monomial gives
\begin{equation}
  s_{\lambda+m\mathbf1}(\mathbf x)
  =(x_1\cdots x_N)^m s_\lambda(\mathbf x).
\end{equation}
The inverse monomial occurs in the dual insertion, and hence
\begin{equation}
  \widehat{\mathbb I}^{\Neu}_{\lambda+m\mathbf1}
  =\widehat{\mathbb I}^{\Neu}_\lambda.
  \label{eq:determinant-invariance}
\end{equation}

\subsection{Macdonald diagonalization}
\label{subsec:diagonalization}

Let $P_\mu(\mathbf x;q,\mathfrak t)$ be the monic Macdonald polynomial in $N$ variables, indexed by a partition $\mu$ with $\ell(\mu)\leq N$.  Laurent weights are included through
\begin{equation}
  P_{\mu+m\mathbf1}(\mathbf x;q,\mathfrak t)
  =(x_1\cdots x_N)^mP_\mu(\mathbf x;q,\mathfrak t).
  \label{eq:determinant-shift-P}
\end{equation}
The finite-$N$ orthogonality relation is
\begin{equation}
  \langle P_\mu,P_\nu\rangle_{N;q,\mathfrak t}
  =\delta_{\mu\nu}\mathcal N_\mu^{(N)}(q,\mathfrak t).
  \label{eq:macdonald-orthogonality}
\end{equation}
With the vacuum normalization of \eqref{eq:normalized-macdonald-product},
\begin{equation}
  \mathcal N_\mu^{(N)}
  =\frac{D_N(\mu;q,\mathfrak t)}{D_N(0;q,\mathfrak t)},
  \label{eq:normalized-norm}
\end{equation}
where
\begin{equation}
  D_N(\mu;q,\mathfrak t)
  =
  \prod_{1\leq i<j\leq N}
  \frac{(q^{\mu_i-\mu_j}\mathfrak t^{j-i};q)_\infty
        (q^{\mu_i-\mu_j+1}\mathfrak t^{j-i};q)_\infty}
       {(q^{\mu_i-\mu_j}\mathfrak t^{j-i+1};q)_\infty
        (q^{\mu_i-\mu_j+1}\mathfrak t^{j-i-1};q)_\infty}.
  \label{eq:D-norm}
\end{equation}
Equivalently,
\begin{equation}
\begin{split}
  \mathcal N_\mu^{(N)}
  ={}&
  \prod_{1\leq i<j\leq N}
  \frac{(q^{\mu_i-\mu_j}\mathfrak t^{j-i};q)_\infty
        (q^{\mu_i-\mu_j+1}\mathfrak t^{j-i};q)_\infty}
       {(q^{\mu_i-\mu_j}\mathfrak t^{j-i+1};q)_\infty
        (q^{\mu_i-\mu_j+1}\mathfrak t^{j-i-1};q)_\infty}
  \\
  &\times
  \frac{(\mathfrak t^{j-i+1};q)_\infty
        (q\mathfrak t^{j-i-1};q)_\infty}
       {(\mathfrak t^{j-i};q)_\infty
        (q\mathfrak t^{j-i};q)_\infty}.
\end{split}
  \label{eq:expanded-normalized-norm}
\end{equation}
Only differences $\mu_i-\mu_j$ occur, so
\begin{equation}
  \mathcal N_{\mu+m\mathbf1}^{(N)}
  =\mathcal N_\mu^{(N)}.
  \label{eq:norm-determinant-invariance}
\end{equation}

Define the Schur--Macdonald transition coefficients by
\begin{equation}
  s_\lambda(\mathbf x)
  =\sum_{\mu\preceq\lambda}
  \mathsf K_{\lambda\mu}^{(N)}(q,\mathfrak t)
  P_\mu(\mathbf x;q,\mathfrak t).
  \label{eq:schur-macdonald-expansion}
\end{equation}
This convention is inverse to the convention in which the $P_\mu$ are expanded in Schur functions.  Orthogonality gives the exact spectral resolution
\begin{equation}
  \boxed{
  \widehat{\mathbb I}^{\Neu}_\lambda(q,\mathfrak t)
  =
  \sum_{\mu\preceq\lambda}
  \left(\mathsf K_{\lambda\mu}^{(N)}(q,\mathfrak t)\right)^2
  \mathcal N_\mu^{(N)}(q,\mathfrak t).}
  \label{eq:wilson-spectral-general}
\end{equation}
At $\mathfrak t=q$, one has $P_\mu=s_\mu$ and $\mathcal N_\mu^{(N)}=1$.  Consequently
\begin{equation}
  \widehat{\mathbb I}^{\Neu}_\lambda(q,q)=1,
\end{equation}
as required by the normalized Weyl character pairing.

\subsection{One-row representations}
\label{subsec:one-row-wilson-lines}

For $\lambda=(r,0,\ldots,0)$, the character is $h_r(\mathbf x)$.  Introduce the virtual alphabet
\begin{equation}
  \mathbb A_{q,\mathfrak t}=\frac{1-q}{1-\mathfrak t},
  \qquad
  p_n[\mathbb A_{q,\mathfrak t}]
  =\frac{1-q^n}{1-\mathfrak t^n}.
  \label{eq:plethystic-alphabet}
\end{equation}
Let $Q_\mu=b_\mu P_\mu$ denote the stable dual Macdonald polynomial, where
\begin{equation}
  b_\mu(q,\mathfrak t)
  =\prod_{s\in\mu}
  \frac{1-q^{a(s)}\mathfrak t^{\ell(s)+1}}
       {1-q^{a(s)+1}\mathfrak t^{\ell(s)}}.
  \label{eq:b-mu}
\end{equation}

\begin{proposition}
\label{prop:one-row-transition}
For every $r\geq0$,
\begin{equation}
  h_r(\mathbf x)
  =
  \sum_{\substack{\mu\vdash r\\ \ell(\mu)\leq N}}
  Q_\mu[\mathbb A_{q,\mathfrak t}]
  P_\mu(\mathbf x;q,\mathfrak t),
  \label{eq:one-row-transition}
\end{equation}
with
\begin{equation}
  Q_\mu[\mathbb A_{q,\mathfrak t}]
  =
  \prod_{s\in\mu}
  \frac{\mathfrak t^{\ell'(s)}-q^{a'(s)+1}}
       {1-q^{a(s)+1}\mathfrak t^{\ell(s)}}.
  \label{eq:one-row-product-coeff}
\end{equation}
\end{proposition}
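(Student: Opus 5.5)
Our plan is to derive \eqref{eq:one-row-transition} from the Macdonald Cauchy identity, specialized plethystically, and then to compute the coefficients \eqref{eq:one-row-product-coeff} by the principal specialization formula.

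The starting point is the stable Cauchy identity
\[
\sum_\mu P_\mu(\mathbf x;q,\mathfrak t)Q_\mu(\mathbf y;q,\mathfrak t)
=\prod_{i,j}\frac{(\mathfrak t x_iy_j;q)_\infty}{(x_iy_j;q)_\infty}
=\exp\Bigl(\sum_{n\geq1}\frac1n\frac{1-\mathfrak t^n}{1-q^n}p_n(\mathbf x)p_n(\mathbf y)\Bigr).
\]
Read as an identity of symmetric functions in $\mathbf y$, it allows the plethystic substitution $p_n(\mathbf y)\mapsto u^n\,p_n[\mathbb A_{q,\mathfrak t}]=u^n(1-q^n)/(1-\mathfrak t^n)$. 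The factors $(1-\mathfrak t^n)/(1-q^n)$ then cancel, and the right-hand side becomes
\[
\exp\Bigl(\sum_n u^np_n(\mathbf x)/n\Bigr)=\sum_r u^rh_r(\mathbf x).
\]
Since $Q_\mu$ is homogeneous of degree $|\mu|$, the left-hand side is $\sum_\mu u^{|\mu|}Q_\mu[\mathbb A_{q,\mathfrak t}]P_\mu(\mathbf x)$. Comparing coefficients of $u^r$ gives \eqref{eq:one-row-transition} in infinitely many variables. Specializing to $N$ variables, $P_\mu(x_1,\dots,x_N)=0$ whenever $\ell(\mu)>N$, and the remaining $P_\mu$ are the finite-$N$ monic Macdonald polynomials. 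This yields the restriction $\ell(\mu)\leq N$. The triangularity $\mu\preceq(r)$ is automatic, because $\mu\vdash r$.

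For the product formula \eqref{eq:one-row-product-coeff}, we write $Q_\mu=b_\mu P_\mu$ with $b_\mu$ given by \eqref{eq:b-mu}. We then evaluate $P_\mu[\mathbb A_{q,\mathfrak t}]=P_\mu[(1-q)/(1-\mathfrak t)]$ using Macdonald's specialization formula
\[
P_\lambda\!\left[\frac{1-a}{1-\mathfrak t}\right]=\prod_{s\in\lambda}\frac{\mathfrak t^{\ell'(s)}-q^{a'(s)}a}{1-q^{a(s)}\mathfrak t^{\ell(s)+1}}
\]
(Macdonald VI.6.17), at $a=q$. Multiplying by $b_\mu$ cancels the factor $1-q^{a(s)}\mathfrak t^{\ell(s)+1}$ box by box. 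What remains is
\[
\prod_{s\in\mu}\frac{\mathfrak t^{\ell'(s)}-q^{a'(s)+1}}{1-q^{a(s)+1}\mathfrak t^{\ell(s)}},
\]
which is exactly \eqref{eq:one-row-product-coeff}. As a consistency check, for $\mu=(r)$ the numerator factors are $1-q^{j}$ for $j=1,\dots,r$ and the denominator factors are $1-q^{r-j+1}$. The coefficient is therefore $1$, as monicity of $P_{(r)}$ and the leading monomial $x_1^r$ of $h_r$ require.

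The main obstacle is making sure the specialization formula is applied in the correct normalization: $P$ versus $Q$, and $\mathfrak t^{\ell'}-q^{a'}a$ versus its reciprocal form. To settle this, we will first verify the case $r=2$ directly, using $P_{(1,1)}=e_2$, $P_{(2)}=h_2+\frac{(1+q)(1-\mathfrak t)}{1-q\mathfrak t}e_2-\dots$ suitably rewritten. The expansion $h_2=P_{(2)}+\frac{(1-\mathfrak t)(1-q)\cdots}{\cdots}P_{(1,1)}$ fixes all conventions. A second, independent route to the same formula is the Pieri rule applied to $e_r[\mathbf x(1-\mathfrak t)/(1-q)]$, but we expect the Cauchy specialization to be the cleanest.
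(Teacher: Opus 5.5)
Your proposal is correct and follows the paper's own proof: you specialize the stable Cauchy identity at $p_n(\mathbf y)=u^n(1-q^n)/(1-\mathfrak t^n)$, compare coefficients of $u^r$, truncate to $\ell(\mu)\leq N$, and obtain the product form from Macdonald's evaluation formula (VI.6.17 at $a=q$, multiplied by $b_\mu$). One small slip: your $r=2$ sanity check is garbled as written, since the correct expansion is $h_2=P_{(2)}+\frac{\mathfrak t-q}{1-q\mathfrak t}P_{(1,1)}$, whose coefficient has no $(1-\mathfrak t)(1-q)$ factor; the argument does not depend on this check.
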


\begin{proof}
The stable Macdonald Cauchy identity is
\begin{equation}
  \sum_\mu P_\mu(\mathbf x)Q_\mu(\mathbf y)
  =
  \operatorname{Exp}\!\left[
  \sum_{n\geq1}\frac{1-\mathfrak t^n}{1-q^n}
  \frac{p_n(\mathbf x)p_n(\mathbf y)}{n}
  \right].
  \label{eq:macdonald-cauchy-stable}
\end{equation}
Set $p_n(\mathbf y)=u^n(1-q^n)/(1-\mathfrak t^n)$.  The right-hand side becomes
\begin{equation}
  \prod_{i=1}^N(1-ux_i)^{-1}
  =\sum_{r\geq0}u^rh_r(\mathbf x),
\end{equation}
whereas the left-hand side becomes
$\sum_\mu u^{|\mu|}P_\mu(\mathbf x)Q_\mu[\mathbb A_{q,\mathfrak t}]$.
Coefficient comparison proves \eqref{eq:one-row-transition}; Macdonald's evaluation formula gives \eqref{eq:one-row-product-coeff}.
\end{proof}

It follows that
\begin{equation}
  \boxed{
  \widehat{\mathbb I}^{\Neu}_{(r)}(q,\mathfrak t)
  =
  \sum_{\substack{\mu\vdash r\\ \ell(\mu)\leq N}}
  Q_\mu[\mathbb A_{q,\mathfrak t}]^2
  \mathcal N_\mu^{(N)}(q,\mathfrak t).}
  \label{eq:one-row-wilson-closed}
\end{equation}
For later use, the first two transition formulae are
\begin{equation}
  h_2
  =P_{(2)}
  +\frac{\mathfrak t-q}{1-q\mathfrak t}P_{(1,1)},
  \label{eq:h2-expansion}
\end{equation}
and
\begin{equation}
  h_3
  =P_{(3)}
  +\frac{(1+q)(\mathfrak t-q)}{1-q^2\mathfrak t}P_{(2,1)}
  +\frac{(q-\mathfrak t)(q-\mathfrak t^2)}
  {(1-q\mathfrak t)(1-q\mathfrak t^2)}P_{(1,1,1)}.
  \label{eq:h3-expansion}
\end{equation}
Terms with more than $N$ rows are absent after specialization to $N$ variables.

\subsection{Low-rank closed formulae}
\label{subsec:low-rank-formulae}

For $U(2)$ the normalized norm depends only on the difference of the two parts:
\begin{equation}
  \mathcal M_d^{(2)}(q,\mathfrak t)
  =
  \frac{(q^d\mathfrak t;q)_\infty(q^{d+1}\mathfrak t;q)_\infty}
       {(q^d\mathfrak t^2;q)_\infty(q^{d+1};q)_\infty}
  \frac{(\mathfrak t^2;q)_\infty(q;q)_\infty}
       {(\mathfrak t;q)_\infty(q\mathfrak t;q)_\infty}.
  \label{eq:U2-norm}
\end{equation}
The first two non-minuscule one-row results are
\begin{align}
  \widehat{\mathbb I}^{\Neu,U(2)}_{(2,0)}
  &=
  \mathcal M_2^{(2)}
  +\left(\frac{\mathfrak t-q}{1-q\mathfrak t}\right)^2,
  \label{eq:U2-r2-wilson}
  \\
  \widehat{\mathbb I}^{\Neu,U(2)}_{(3,0)}
  &=
  \mathcal M_3^{(2)}
  +\left(\frac{(1+q)(\mathfrak t-q)}
  {1-q^2\mathfrak t}\right)^2\mathcal M_1^{(2)}.
  \label{eq:U2-r3-wilson}
\end{align}
The two terms in each line are the Macdonald spectral channels allowed at finite $N=2$.  Determinant invariance gives
\begin{equation}
  \widehat{\mathbb I}^{\Neu,U(2)}_{(a,b)}
  =
  \widehat{\mathbb I}^{\Neu,U(2)}_{(a-b,0)},
  \qquad a\geq b.
  \label{eq:U2-all-by-one-row}
\end{equation}

For $U(3)$ one obtains
\begin{equation}
  \widehat{\mathbb I}^{\Neu,U(3)}_{(2,0,0)}
  =
  \mathcal N_{(2,0,0)}^{(3)}
  +\left(\frac{\mathfrak t-q}{1-q\mathfrak t}\right)^2
  \mathcal N_{(1,1,0)}^{(3)},
  \label{eq:U3-r2-wilson}
\end{equation}
and
\begin{equation}
\begin{split}
  \widehat{\mathbb I}^{\Neu,U(3)}_{(3,0,0)}
  ={}&
  \mathcal N_{(3,0,0)}^{(3)}
  +\left(\frac{(1+q)(\mathfrak t-q)}
  {1-q^2\mathfrak t}\right)^2
  \mathcal N_{(2,1,0)}^{(3)}
  \\
  &+
  \left(\frac{(q-\mathfrak t)(q-\mathfrak t^2)}
  {(1-q\mathfrak t)(1-q\mathfrak t^2)}\right)^2.
\end{split}
  \label{eq:U3-r3-wilson}
\end{equation}
The last finite-$N$ norm equals one because $(1,1,1)$ is a determinant representation.  Exact rational forms and short $q$-expansions are collected in Appendix~\ref{app:expansion-data}.

\section{Hecke convolution, boundary localization, and monopole bubbling}
\label{sec:boundary-convolution-localization}
\label{sec:boundary-bubbling}

This section separates three statements which have different logical
status.  First, the one-row coefficient recursion is an exact
lambda-ring identity, or equivalently an identity in the polynomial
representation of the spherical DAHA.  Classical geometric Satake gives
an unrefined Grothendieck-group counterpart of this identity, but it does
not by itself construct the $(q,\mathfrak t)$-refined coefficients or a
canonical derived projector for arbitrary row length.  Second, in the
first non-minuscule example the two finite Jeffrey--Kirwan residues and
the IC-summand subtraction can be computed independently; their spectator
lift gives the first rank-three lower coefficient.  Third, the
identification of these algebraic coefficients with the output of a
microscopic regular Nahm-pole boundary SQM for arbitrary charge is a physical
localization conjecture.  No fixed-point classification of a general
boundary handsaw space is assumed below.

\subsection{Determinant reduction and coefficient conventions}
\label{subsec:determinant-reduction-section4}

Let
\begin{equation}
 B=(B_1,\ldots,B_N),\qquad B_1\geq\cdots\geq B_N,
\end{equation}
be a dominant cocharacter of $U(N)$.  Before discussing screening, remove
the determinant translation by setting
\begin{equation}
 m=B_N,\qquad
 \bar B=B-m\one,
 \qquad \bar B_N=0.
 \label{eq:determinant-reduced-charge-section4}
\end{equation}
Here $\bar B$ is a determinant-reduced representative; it is not identified
with a traceless or $SU(N)$ cocharacter.  Any geometric or localization
construction based on simple-root screening is first applied to $\bar B$.
For the exact DAHA candidate, the determinant part is restored by
\begin{equation}
 \mathsf H^{\mathrm{DAHA}}_{B}
 =\mathsf C_N^m\mathsf H^{\mathrm{DAHA}}_{\bar B},
 \qquad
 \mathsf C_N=\mathfrak t^{N(N-1)/2}
 \prod_{a=1}^{N}T_{q,x_a}.
 \label{eq:determinant-translation-section4}
\end{equation}
The oppositely oriented line contains $\mathsf C_N^{-m}$, so these translations
cancel in an oriented two-line insertion.  The element $\mathsf C_N$ is central in
the commuting magnetic translation subalgebra, but it is not central in the
full double affine Hecke algebra.

For the remainder of this section we specialize the determinant-reduced
charge to the one-row family
\begin{equation}
 \bar B=(r,0,\ldots,0),\qquad r\geq0.
 \label{eq:one-row-determinant-reduced-charge-section4}
\end{equation}
The symbols $\mathsf H^{\mathrm{DAHA}}_{(r)}$ and
$\mathsf C^{\mathrm{DAHA}}_{r,\gamma}$ below denote exact algebraic objects.
The symbols $\widehat{\mathcal T}^{\partial,\mathrm{loc}}$ and
$\mathsf C^\partial$ are reserved for a putative microscopic boundary
localization operator and its coefficients.

For a composition
$\gamma=(\gamma_1,\ldots,\gamma_N)\in\mathbb Z_{\geq0}^N$, write
\begin{equation}
 T_{q^\gamma}=\prod_{a=1}^{N}T_{q,x_a}^{\gamma_a},
 \qquad |\gamma|=\sum_a\gamma_a,
\end{equation}
and let $\gamma^+$ be its non-increasing rearrangement.  A finite magnetic
difference operator is written as
\begin{equation}
 \mathsf H^{\mathrm{DAHA}}_{(r)}
 =\sum_{\substack{\gamma\in\mathbb Z_{\geq0}^N\\|\gamma|=r}}
 \mathsf C^{\mathrm{DAHA}}_{r,\gamma}(\mathbf x;q,\mathfrak t)
 T_{q^\gamma}.
 \label{eq:complete-coefficient-expansion-section4}
\end{equation}
Here $\mathsf C^{\mathrm{DAHA}}_{r,\gamma}$ is the \emph{complete algebraic
coefficient} of a
monomial shift.  Under the proposed boundary interpretation, it includes the
one-loop determinant of the effective magnetic line.  If a microscopic screened-sector calculation is normalized
to a pure bubbling index $\mathsf Z^{\partial}_{B,v}$, the relation has the
form
\begin{equation}
 \mathsf C^\partial_{B,\gamma}
 =\mathcal U^{\partial}_{v,\gamma}\,
  \mathsf Z^{\partial}_{B,v;\gamma},
 \qquad \gamma^+=v,
 \label{eq:full-versus-pure-bubbling-section4}
\end{equation}
where $\mathcal U^{\partial}_{v,\gamma}$ is the effective-line one-loop
factor.  Thus $\mathsf Z^{\partial}_{B,B}=1$ does not imply that the top
coefficient in \eqref{eq:complete-coefficient-expansion-section4} is one.
The sector operator is the complete Weyl-orbit sum
\begin{equation}
 \mathsf H^{\mathrm{DAHA}}_{(r)\to v}
 =\sum_{\gamma^+=v}
 \mathsf C^{\mathrm{DAHA}}_{r,\gamma}T_{q^\gamma},
 \label{eq:sector-orbit-sum-section4}
\end{equation}
not a single representative shift.

The explicit $U(2)$ model below uses the framing character
\begin{equation}
 \ch W=x_1+x_2
 \label{eq:explicit-U2-framing-character}
\end{equation}
and hence fixes the map from the framing weights to the four-dimensional
Cartan fugacities.  For a general charge, the full equivariant framing
character is part of the localization data in
Conjecture~\ref{conj:boundary-localization-section4}.

\subsection{The exact one-row lambda-ring recursion}
\label{subsec:exact-projector-recursion}

For $I\subseteq\{1,\ldots,N\}$ let
$\eps_I=\sum_{i\in I}\eps_i$, where $\eps_i$ is the $i$th standard basis
vector, and define
\begin{equation}
 \mathsf A_I(\mathbf x)
 =\mathfrak t^{|I|(|I|-1)/2}
 \prod_{\substack{i\in I\\j\notin I}}
 \frac{\mathfrak t x_i-x_j}{x_i-x_j}.
 \label{eq:A-I-projector-section4}
\end{equation}
The minuscule antisymmetric operator is
\begin{equation}
 \cD_k
 =\sum_{\substack{I\subseteq\{1,\ldots,N\}\\|I|=k}}
 \mathsf A_I(\mathbf x)T_{q^{\eps_I}},
 \qquad
 \cD_0=1,
 \qquad
 \cD_k=0\quad(k>N).
 \label{eq:Dk-minuscule-section4}
\end{equation}

The identity of symmetric functions
\begin{equation}
 h_r=\sum_{k=1}^{\min(r,N)}(-1)^{k+1}e_kh_{r-k}.
 \label{eq:complete-elementary-recursion-section4}
\end{equation}
is the coefficient of $u^r$ in $H(u)E(-u)=1$.  Evaluating this identity on
the commuting Cherednik operators and using
\begin{equation}
 \mathbf e_+e_k(Y_1,\ldots,Y_N)\mathbf e_+
 =\cD_k\mathbf e_+
 \label{eq:spherical-ek-Dk-section4}
\end{equation}
in the polynomial representation gives the corresponding identity of
finite difference operators on symmetric Laurent polynomials.  Here
$\mathbf e_+$ is the finite-Hecke spherical idempotent defined in
Section~\ref{sec:daha-kernel}.  Thus this implication uses only the lambda-ring
identity and the spherical-DAHA realization of the minuscule operators.

\begin{proposition}[One-row DAHA/lambda-ring recursion]
\label{prop:one-row-projector-recursion-section4}
Set $\mathsf C^{\mathrm{DAHA}}_{0,0}=1$ and set $\mathsf C^{\mathrm{DAHA}}_{r,\gamma}=0$ if an entry of
$\gamma$ is negative or if $|\gamma|\neq r$.  The coefficients of the
one-row spherical-DAHA operator obey
\begin{equation}
 \boxed{
 \mathsf C^{\mathrm{DAHA}}_{r,\gamma}(\mathbf x)
 =\sum_{\substack{\varnothing\neq I\subseteq\{1,\ldots,N\}\\
                   |I|\leq r,\ \eps_I\leq\gamma}}
 (-1)^{|I|+1}\mathsf A_I(\mathbf x)
 \mathsf C^{\mathrm{DAHA}}_{r-|I|,\gamma-\eps_I}
 (q^{\eps_I}\mathbf x).}
 \label{eq:projector-coefficient-recursion-section4}
\end{equation}
Its generating function is
\begin{equation}
 \sum_{r\geq0}u^r\mathsf H^{\mathrm{DAHA}}_{(r)}
 =\left(\sum_{k=0}^{N}(-u)^k\cD_k\right)^{-1}.
 \label{eq:projector-generating-function-section4}
\end{equation}
The inverse in this formula is understood as a formal power series in $u$.
\end{proposition}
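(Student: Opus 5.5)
The plan is to derive both statements from the lambda-ring identity $H(u)E(-u)=1$ applied to the commuting Cherednik operators $Y_1,\ldots,Y_N$, and then transport it to the spherical subspace.

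\textbf{Step 1: spherical products.} Since the $Y_i$ commute, $\sum_r u^r h_r(Y)\cdot\sum_k(-u)^k e_k(Y)=1$ holds as an identity of formal power series in $u$ with coefficients in $\C[Y^{\pm1}]^{S_N}$. Symmetric functions of $Y$ commute with the finite Hecke algebra (Bernstein--Lusztig centre). Hence $\mathbf e_+f(Y)\mathbf e_+=f(Y)\mathbf e_+$ for symmetric $f$, and
\begin{equation*}
 \mathbf e_+ f(Y)\mathbf e_+\,\mathbf e_+ g(Y)\mathbf e_+=\mathbf e_+ (fg)(Y)\mathbf e_+ .
\end{equation*}
Taking $f=h_r$ and $g=e_k$ and using \eqref{eq:spherical-ek-Dk-section4} gives, on symmetric Laurent polynomials,
\begin{equation*}
 \sum_{k=0}^{\min(r,N)}(-1)^k\,\mathsf H^{\mathrm{DAHA}}_{(r-k)}\cD_k=\delta_{r,0}.
\end{equation*}
Because $e_k(Y)=0$ for $k>N$, this is \eqref{eq:projector-generating-function-section4}. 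The power series $\sum_k(-u)^k\cD_k$ has constant term $1$, so its inverse exists as a formal series in $u$ and is unique. Left and right inverses agree because the $\cD_k$ commute.

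\textbf{Step 2: extracting coefficients.} I would rewrite the relation as $\mathsf H_{(r)}=\sum_{k\ge1}(-1)^{k+1}\cD_k\mathsf H_{(r-k)}$, putting $\cD_k$ on the left. This ordering is allowed since the $\cD_k$ and $\mathsf H_{(r')}$ all lie in the commutative algebra $\mathbf e_+\C[Y]^{S_N}\mathbf e_+$. Insert \eqref{eq:Dk-minuscule-section4} and the expansion \eqref{eq:complete-coefficient-expansion-section4} for $\mathsf H_{(r-k)}$, which holds by induction on $r$. Then apply $T_{q^{\eps_I}}\,c(\mathbf x)=c(q^{\eps_I}\mathbf x)\,T_{q^{\eps_I}}$ and collect the coefficient of $T_{q^\gamma}$. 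This yields exactly \eqref{eq:projector-coefficient-recursion-section4}, with the constraints $\eps_I\le\gamma$ and $|I|\le r$ coming from the vanishing conventions. The product $\cD_k\mathsf H_{(r-k)}$ involves only nonnegative shifts of total degree $r$, so the expansion \eqref{eq:complete-coefficient-expansion-section4} with $|\gamma|=r$ is established inductively at the same time.

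\textbf{Main obstacle.} The delicate point is that difference-operator identities proven on symmetric functions must determine the coefficients uniquely. A difference operator with rational coefficients is determined by its action on symmetric Laurent polynomials only after we fix which operator representative we mean. I would handle this by \emph{defining} $\mathsf H^{\mathrm{DAHA}}_{(r)}$ through the $\cD_k$-polynomial representative $\det$-type formula ($h_r$ as a polynomial in the $e_k$). The operator so defined is an honest element of the algebra of $q$-difference operators, and the recursion is then an identity in that algebra. Its agreement with $\mathbf e_+h_r(Y)\mathbf e_+$ on $\mathscr P_N^{S_N}$ follows from Step 1.
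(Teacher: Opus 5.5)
Your proposal is correct and follows essentially the same route as the paper: evaluate $h_r=\sum_{k\geq1}(-1)^{k+1}e_kh_{r-k}$ on the commuting $Y_i$, sandwich by $\mathbf e_+$ using $\mathbf e_+e_k(Y)\mathbf e_+=\cD_k\mathbf e_+$ so that $\cD_k$ stands leftmost, and read off the coefficient of $T_{q^\gamma}$ from $T_{q^{\eps_I}}c(\mathbf x)=c(q^{\eps_I}\mathbf x)T_{q^{\eps_I}}$. Your appeal to the Bernstein centre makes explicit the multiplicativity of the spherical sandwich, which the paper uses without comment. The representative issue you raise is harmless: distinct shifts $T_{q^\gamma}$, restricted to symmetric Laurent polynomials, are linearly independent over rational functions by Artin's independence of characters, so the coefficients $\mathsf C^{\mathrm{DAHA}}_{r,\gamma}$ are unique and no redefinition of the operator is needed.
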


\begin{proof}
Apply \eqref{eq:complete-elementary-recursion-section4} to the commuting
$Y$-operators and sandwich the result by $\mathbf e_+$.  Equation
\eqref{eq:spherical-ek-Dk-section4} gives the operator recursion with
$\cD_k$ as its leftmost factor.  For the coefficient of $T_{q^\gamma}$,
multiplication of difference operators gives
\begin{equation}
 \mathsf A_I(\mathbf x)T_{q^{\eps_I}}
 \left[
 \mathsf C^{\mathrm{DAHA}}_{r-|I|,\gamma-\eps_I}(\mathbf x)
 T_{q^{\gamma-\eps_I}}
 \right]
 =\mathsf A_I(\mathbf x)
 \mathsf C^{\mathrm{DAHA}}_{r-|I|,\gamma-\eps_I}
 (q^{\eps_I}\mathbf x)T_{q^\gamma}.
\end{equation}
Summing over $I$ proves
\eqref{eq:projector-coefficient-recursion-section4}.  The generating
function is the standard relation between complete and elementary
symmetric functions.
\end{proof}

There is a useful, but logically separate, classical Satake analogue.
If $\cL_{(r)}$ denotes the one-row object and $\cL_{(1^k)}$ the
antisymmetric minuscule object, then in the unrefined Grothendieck group
\begin{equation}
 [\cL_{(r)}]
 =\sum_{k=1}^{\min(r,N)}(-1)^{k+1}
 [\cL_{(1^k)}\star\cL_{(r-k)}].
 \label{eq:Satake-one-row-recursion-section4}
\end{equation}
This is the representation-ring shadow of the same complete--elementary
identity.  We use it as geometric motivation only: no refined Satake
functor, and no canonical all-$r$ derived complex realizing the
$(q,\mathfrak t)$-dependent recursion, is constructed here.

The subset $I$ in
\eqref{eq:projector-coefficient-recursion-section4} is the
\emph{first}, or leftmost, factor in the ordered product of difference
operators.  Removing
this first factor leaves the lower coefficient evaluated at
$q^{\eps_I}\mathbf x$.  The
subsets appearing upon iteration will be called \emph{recursion paths};
they are algebraic terms of the lambda-ring inclusion--exclusion formula
and need not enumerate the fixed points of a localization space.  In particular, the
factor
\begin{equation}
 (-1)^{|I|+1}\mathfrak t^{|I|(|I|-1)/2}
 \label{eq:projector-sign-weight-section4}
\end{equation}
is the lambda-ring inclusion--exclusion sign times the standard
minuscule coefficient.

\begin{conjecture}[Boundary-localization identification]
\label{conj:boundary-localization-section4}
For the regular Nahm-pole boundary condition, the complete coefficients
produced by a microscopic localization calculation of the irreducible
one-row boundary 't Hooft line agree with the coefficients
$\mathsf C^{\mathrm{DAHA}}_{r,\gamma}$ in
\eqref{eq:projector-coefficient-recursion-section4}, after the pure
bubbling index and the effective-line factor have been combined as in
\eqref{eq:full-versus-pure-bubbling-section4}; explicitly,
\begin{equation}
 \mathsf C^\partial_{(r,0,\ldots,0),\gamma}
 =\mathsf C^{\mathrm{DAHA}}_{r,\gamma}.
 \label{eq:boundary-localization-conjecture-section4}
\end{equation}
\end{conjecture}

Proposition~\ref{prop:one-row-projector-recursion-section4} is an exact
algebraic statement, whereas the conjecture concerns the microscopic regular
Nahm-pole boundary localization problem.

\subsection{Why a naive handsaw quotient is insufficient}
\label{subsec:handsaw-diagnostic-section4}

The smallest screened sector already fixes the geometry that an SQM must
reproduce.  For
\begin{equation}
 U(2),\qquad B=(2,0),\qquad v=(1,1),
 \label{eq:minimal-nonminuscule-data-section4}
\end{equation}
a one-sided handsaw ansatz with
$V_1=\C$ and $W_1=\C^2$ has only an endomorphism $A$ and a nonzero framing
map $a$.  Its stable quotient is
\begin{equation}
 \{(A,a):a\neq0\}/\C^*
 \simeq\C\times\mathbb P^1.
 \label{eq:naive-handsaw-quotient-section4}
\end{equation}
It has two torus fixed points and is not holomorphic symplectic.  Moreover,
the corresponding naive virtual character contains a trivial gauge weight;
after deleting only that zero mode, its rank-one contour has the form
\begin{equation}
 \omega_{\rm one\text{-}sided}(u)
 =\frac{1-\mathfrak t q}{1-q}\frac{du}{2\pi i\,u}
 F_{\mathfrak t}(u/w_1)F_{\mathfrak t}(u/w_2),
 \qquad
 F_{\mathfrak t}(z)=\frac{1-\mathfrak t z}{1-z}.
 \label{eq:one-sided-diagnostic-integrand-section4}
\end{equation}
The two same-charge residues sum to
\begin{equation}
 \frac{(\mathfrak t-1)(\mathfrak t+1)(q\mathfrak t-1)}{q-1},
\end{equation}
which is independent of $w_1/w_2$.  Its residues at zero and infinity are
also flavor independent.  It therefore cannot produce the coefficient
below, which depends nontrivially on $x_1/x_2$ and has affine $q$-root
poles.  Thus the required coefficient cannot be obtained by changing only
the JK chamber or by appending an ordinary residue at zero or infinity; the local
model and the IC/Satake summand subtraction must both be specified.

\subsection{The first non-minuscule local convolution calculation}
\label{subsec:U2-convolution-SQM-section4}
\label{subsec:rank-two-independent-residue}

We now compute the local convolution factor without using
Proposition~\ref{prop:one-row-projector-recursion-section4}.  Put
\begin{equation}
 s=q^{-1},\qquad y=\frac{x_1}{x_2},\qquad
 F_{\mathfrak t}(z)=\frac{1-\mathfrak t z}{1-z}.
 \label{eq:Ft-U2-section4}
\end{equation}
Let the $U(1)$ gauge and framing spaces be
\begin{equation}
 V=\C_u,
 \qquad
 W=\C_{x_1}\oplus\C_{x_2}.
 \label{eq:U2-gauge-framing-spaces-section4}
\end{equation}
In a chosen complex structure the scalar fields are
\begin{equation}
 Q_a\in\operatorname{Hom}(V,W_a),
 \qquad
 \widetilde Q_a\in q\operatorname{Hom}(W_a,V),
 \qquad a=1,2,
 \label{eq:U2-full-hyper-fields-section4}
\end{equation}
and the complex moment map is
\begin{equation}
 \mu_{\C}=\sum_{a=1}^{2}\widetilde Q_aQ_a
 \in q\operatorname{End}(V).
 \label{eq:U2-complex-moment-map-section4}
\end{equation}
The moment-map Fermi has $J=\mu_{\C}$, so every term in the constraint has
the same equivariant weight.  To avoid confusing scalar charges with the
dual characters in a K-theory denominator, both are displayed in
Table~\ref{tab:U2-multiplets-section4}.

\begin{table}[ht]
\centering
\small
\begin{tabular}{@{}llll@{}}
\toprule
object
& gauge charge
& scalar and cotangent characters
& type \\
\midrule
$Q_a\in\operatorname{Hom}(V,W_a)$
& $-1$ & $(x_a/u,u/x_a)$ & chiral \\
$\widetilde Q_a\in q\operatorname{Hom}(W_a,V)$
& $+1$ & $(q u/x_a,s x_a/u)$ & chiral \\
$\mu_{\C}=0$
& $0$ & $(q,s)$ & moment-map Fermi \\
$U(1)$ quotient
& adjoint & $(1,1)$ & vector \\
\bottomrule
\end{tabular}
\caption{The doubled rank-one quiver.  In the ordered pair in the third
column, the second entry is the monomial entering
$\lambda_{-\mathfrak t}(T^*)/\lambda_{-1}(T^*)$.}
\label{tab:U2-multiplets-section4}
\end{table}

Positive stability is $Q\neq0$.  The holomorphic symplectic quotient is
\begin{equation}
 X=
 \{(Q,\widetilde Q):\widetilde QQ=0,\ Q\neq0\}/\C^*
 \simeq T^*\mathbb P^1.
 \label{eq:TstarP1-Higgs-branch-section4}
\end{equation}
This is the standard $U(1)$ $\mathcal N=(4,4)$ SQM with two full
hypermultiplets which appears in the minimal bulk bubbling problem
\cite{Brennan:2018yuj,Mekareeya:2013ija}.  Here it is used as the local
model for the convolution of two minuscule modifications.  More
precisely, the calculation below is the Higgs-branch K-theoretic contour
functional of this quotient; it is not asserted to be a UV-complete
boundary Witten index or a complete regular Nahm-pole boundary brane
construction.

The cotangent virtual character of the quotient-and-constraint complex is
\begin{equation}
 T_{\rm vir}^*
 =\sum_{a=1}^{2}\left(\frac{u}{x_a}
              +\frac{s x_a}{u}\right)-1-s.
 \label{eq:U2-virtual-cotangent-character-section4}
\end{equation}
The term $-s$ is the conormal of the complex moment-map equation and
therefore contributes
$F_{\mathfrak t}(s)^{-1}=(1-s)/(1-\mathfrak t s)$.  The Cartan term $-1$
is a gauge zero mode and is not inserted literally into the lambda-class
ratio.  We normalize the abelian quotient-residue functional by the factor
$(\mathfrak t-1)^{-1}$; this is precisely the normalization for which the
contour equals the localized lambda-class functional on $T^*\mathbb P^1$.
Here $\mathfrak t$ is the lambda-class, or Hodge-grading,
variable; it is not an additional field character omitted from
Table~\ref{tab:U2-multiplets-section4}.
For JK purposes, the charge covector of a denominator is the exponent of
$u$ in its cotangent monomial.  It is consequently the opposite of the
gauge charge of the corresponding scalar displayed in
Table~\ref{tab:U2-multiplets-section4}.  The normalized Higgs-branch
K-theoretic one-form is
\begin{equation}
 \omega_{\rm conv}(u)=
 \frac{1-s}{(\mathfrak t-1)(1-\mathfrak t s)}
 \frac{du}{2\pi i\,u}
 \prod_{a=1}^{2}
 F_{\mathfrak t}(u/x_a)F_{\mathfrak t}(s x_a/u).
 \label{eq:U2-convolution-integrand-section4}
\end{equation}
In this multiplicative convention, the denominator hyperplanes
$1-u/x_a=0$ have charge covector $+1$ and the hyperplanes
$1-sx_a/u=0$ have charge covector $-1$.  The chamber $\eta>0$ therefore
selects
\begin{equation}
 u=x_1,\qquad u=x_2.
 \label{eq:U2-positive-poles-section4}
\end{equation}

Define
\begin{equation}
 A_1(\mathbf x)=\frac{\mathfrak t x_1-x_2}{x_1-x_2},
 \qquad
 A_2(\mathbf x)=\frac{\mathfrak t x_2-x_1}{x_2-x_1}.
 \label{eq:U2-A1-A2-section4}
\end{equation}
Using
\begin{equation}
 \operatorname*{Res}_{u=x_i}\frac{du}{u}F_{\mathfrak t}(u/x_i)
 =\mathfrak t-1,
 \qquad
 \frac{1-s}{1-\mathfrak t s}F_{\mathfrak t}(s)=1,
\end{equation}
one obtains the individual residues
\begin{align}
 \operatorname*{Res}_{u=x_1}\omega_{\rm conv}
 &=F_{\mathfrak t}(y)F_{\mathfrak t}((qy)^{-1})
 \nonumber\\
 &=A_1(\mathbf x)A_2(qx_1,x_2)=:R_{12},
 \label{eq:U2-residue-R12-section4}\\
 \operatorname*{Res}_{u=x_2}\omega_{\rm conv}
 &=F_{\mathfrak t}(y^{-1})F_{\mathfrak t}(y/q)
 \nonumber\\
 &=A_2(\mathbf x)A_1(x_1,qx_2)=:R_{21}.
 \label{eq:U2-residue-R21-section4}
\end{align}
Thus
\begin{equation}
 \JKRes_{\eta>0}\omega_{\rm conv}=R_{12}+R_{21}.
 \label{eq:U2-positive-JK-sum-section4}
\end{equation}
At the two fixed points the cotangent localization characters are
\begin{equation}
 p_1:\ \left(y,\frac1{qy}\right),
 \qquad
 p_2:\ \left(y^{-1},\frac yq\right).
 \label{eq:U2-cotangent-weights-section4}
\end{equation}
Their products are $q^{-1}$, the chosen cotangent symplectic character.

For products of minimal bulk 't Hooft lines, FI chambers encode operator
ordering and can support wall crossing \cite{Hayashi:2019rpw}.  In the compact
rank-one convolution used here, the two chambers instead give the same
localized sum.  The ordinary residues at the negative-charge poles are
\begin{equation}
 \operatorname*{Res}_{u=sx_1}\omega_{\rm conv}=-R_{21},
 \qquad
 \operatorname*{Res}_{u=sx_2}\omega_{\rm conv}=-R_{12}.
\end{equation}
The rank-one JK orientation supplies an additional minus sign for a
charge-$-1$ hyperplane, and therefore
\begin{equation}
 \JKRes_{\eta<0}\omega_{\rm conv}=R_{12}+R_{21}.
\end{equation}
The residues at the two ends of the multiplicative Coulomb cylinder are
\begin{equation}
 \operatorname*{Res}_{u=0}\omega_{\rm conv}
 =\frac{\mathfrak t^2(1-s)}
 {(\mathfrak t-1)(1-\mathfrak t s)},
 \qquad
 \operatorname*{Res}_{u=\infty}\omega_{\rm conv}
 =-\operatorname*{Res}_{u=0}\omega_{\rm conv}.
 \label{eq:U2-zero-infinity-section4}
\end{equation}
They cancel, and neither is $-\mathfrak t$ for generic parameters.  The
missing term is therefore neither a third Higgs fixed point nor an
ordinary contribution at zero or infinity.

\subsection{The IC/Satake summand subtraction in the first non-minuscule case}
\label{subsec:U2-IC-projector-section4}

The two-pole answer
\eqref{eq:U2-positive-JK-sum-section4} is the coefficient in the
convolution of two fundamental modifications.  It is not yet the
irreducible one-row coefficient.  In the Satake category
\begin{equation}
 IC_{\omega_1}\star IC_{\omega_1}
 \simeq IC_{2\omega_1}\oplus IC_{\omega_2},
 \label{eq:U2-Satake-decomposition-section4}
\end{equation}
corresponding to
$\C^2\otimes\C^2=\Sym^2\C^2\oplus\wedge^2\C^2$
\cite{Mirkovic:2007}.  The determinant object is minuscule and has the
complete shift
\begin{equation}
 \mathsf H^{\mathrm{DAHA}}_{(1,1)}
 =\mathfrak t T_{q,x_1}T_{q,x_2};
 \label{eq:U2-determinant-shift-section4}
\end{equation}
this follows directly from
\eqref{eq:A-I-projector-section4} with $I=\{1,2\}$, whose cross-product is
empty.  It does not use the one-row inverse recursion.  Hence
\begin{equation}
 \boxed{
 \mathsf C^{\mathrm{IC}}_{2,(1,1)}
 =R_{12}+R_{21}-\mathfrak t
 =\mathsf C^{\mathrm{DAHA}}_{2,(1,1)}.}
 \label{eq:U2-complete-coefficient-section4}
\end{equation}

This subtraction admits the following precise resolution-side Hodge--Chern
representative.  The
transverse slice of the convolution morphism along the determinant stratum
is the $A_1$ surface singularity, and its semismall resolution is
\begin{equation}
 \pi:X=T^*\mathbb P^1\longrightarrow
 \mathcal S\simeq\C^2/\mathbb Z_2.
 \label{eq:U2-A1-resolution-section4}
\end{equation}
Let $0\in\mathcal S$ be the singular point and
$i:E=\pi^{-1}(0)=\mathbb P^1\hookrightarrow X$ the exceptional zero
section.  The transverse-slice statement is standard
\cite{Malkin:2003cs}.  Let $\bT$ be the torus generated by the flavor
ratio and cotangent scaling, with equivariant characters expressed
through $x_1/x_2$ and $q$.  The morphism $\pi$ is proper and
$\bT$-equivariant.  The semismall decomposition theorem, in its
equivariant mixed-Hodge refinement, gives
\begin{equation}
 R\pi_*\mathbb Q_X^H[2]
 \simeq IC_{\mathcal S}^H\oplus\mathbb Q_0^H(-1).
 \label{eq:U2-Hodge-module-decomposition-section4}
\end{equation}
We use the geometric Satake and mixed-Hodge-module conventions of
\cite{Mirkovic:2007,Schuermann:2009characteristic}.
The point summand, arising from
$H^2(E,\mathbb Q)=\mathbb Q(-1)$, is the determinant channel.  Since
$\bT$ is connected, its action on $H^2(E,\mathbb Q)$ is trivial; hence this
summand carries no additional flavor or cotangent-scaling character.

The nonequivariant motivic Chern transformation, including proper
pushforward and smooth normalization, was constructed in
\cite{Brasselet:2005}.  Let
\[
 MHC_z^{\bT}:K_0\!\left(D^b\mathrm{MHM}^{\bT}(-)\right)
 \longrightarrow G_0^{\bT}(-)\otimes\mathbb Z[z^{\pm1}]
\]
be the equivariant Hodge--Chern transformation, using the mixed-Hodge-module
refinement of \cite{Schuermann:2009characteristic} and the standard
equivariant extension as in \cite{Feher:2018motivic}.  We use the Grothendieck-group
convention $MHC_z^{\bT}([M[k]])=(-1)^kMHC_z^{\bT}([M])$.  It commutes with
proper equivariant pushforward and, because the shift $[2]$ is even, has the
smooth normalization
\begin{equation}
 MHC_z^{\bT}(\mathbb Q_X^H[2])=\lambda_z(T_X^*),
 \qquad
 \lambda_z(F)=\sum_{k\geq0}z^k[\wedge^kF],
\end{equation}
and satisfies
$MHC_z^{\bT}(\mathbb Q_0^H(-1))=-z[\cO_0]$.  The Hodge variable $z$
is independent of the equivariant characters $x_a$ and $q$; only after
applying the transformation do we specialize $z=-\mathfrak t$.
Consequently
\begin{equation}
 MHC_z^{\bT}(IC_{\mathcal S}^H)
 =\pi_*^{\bT}\lambda_z(T_X^*)+z\cO_0.
 \label{eq:U2-Hodge-Chern-pushforward-section4}
\end{equation}
Since $\pi_*^{\bT}i_*\cO_E=\cO_0$, setting $z=-\mathfrak t$ gives the exact
identity in $G_0^{\bT}(\mathcal S)[\mathfrak t]$
\begin{equation}
 \boxed{
 MHC_{-\mathfrak t}^{\bT}(IC_{\mathcal S}^H)=\pi_*^{\bT}\mathcal K_{(2)},
 \qquad
 \mathcal K_{(2)}
 =\lambda_{-\mathfrak t}(T_X^*)
  -\mathfrak t i_*\cO_E.}
 \label{eq:U2-projector-K-class-section4}
\end{equation}
Here $\mathcal K_{(2)}\in G_0^{\bT}(X)[\mathfrak t]$; since $X$ is smooth,
$G_0^{\bT}(X)$ may be identified with equivariant vector-bundle K-theory.
Thus the factor $\mathfrak t$ is the Hodge grading of the Tate-twisted
determinant summand.  The class $\mathcal K_{(2)}$ is a resolution-side
representative whose proper pushforward gives the Hodge--Chern class of the
IC summand.

The space $X$ is nonproper, so the expression used below is not an
ordinary pushforward from $X$ to a point.  It is the localized fixed-point
functional
\begin{equation}
 \chi_{\bT,z}^{\rm loc}:G_0^{\bT}(X)[z]
 \longrightarrow\operatorname{Frac}R(\bT)[z],
 \qquad
 \mathcal F\longmapsto\sum_{p\in X^{\bT}}
 \frac{\mathcal F|_p}{\lambda_{-1}(T_p^*X)}
 .
 \label{eq:localized-fixed-point-functional-section4}
\end{equation}
Here $\mathcal F|_p$ means the derived restriction
$[L i_p^*\mathcal F]$ in $R(\bT)$.
After the specialization $z=-\mathfrak t$, its values lie in
$\operatorname{Frac}R(\bT)[\mathfrak t]$.

The same result is transparent at the two fixed points.  The Koszul
self-intersection formula gives
\begin{equation}
 (i_*\cO_E)|_{p_a}=\lambda_{-1}(N^*_{E/X}|_{p_a}).
\end{equation}
The normal factor cancels against its factor in
$\lambda_{-1}(T_{p_a}^*X)$.  The cotangent weights of $E$ at $p_1$ and
$p_2$ are $y$ and $y^{-1}$, respectively, and therefore
\begin{align}
 \chi_{\bT,-\mathfrak t}^{\rm loc}(\mathcal K_{(2)})
 &=R_{12}+R_{21}
 -\mathfrak t\left(\frac1{1-y}+\frac1{1-y^{-1}}\right)
 \nonumber\\
 &=R_{12}+R_{21}-\mathfrak t.
 \label{eq:U2-projector-fixed-point-index-section4}
\end{align}
The individual fixed-point summands are
\begin{equation}
 R_{12}-\frac{\mathfrak t}{1-y},
 \qquad
 R_{21}-\frac{\mathfrak t}{1-y^{-1}}.
\end{equation}
They are the two fixed-point contributions of the virtual IC-summand
representative.

Because $E$ is a Cartier divisor, the subtracted class has the elementary
Koszul resolution
\begin{equation}
 0\longrightarrow\cO_X(-E)
 \xrightarrow{s_E}\cO_X
 \longrightarrow i_*\cO_E\longrightarrow0,
 \label{eq:U2-zero-section-Koszul-section4}
\end{equation}
and hence
\begin{equation}
 -\mathfrak t[i_*\cO_E]
 =\mathfrak t[\cO_X(-E)]-\mathfrak t[\cO_X].
\end{equation}
The negative class is the Grothendieck-group subtraction of the
Tate-twisted point summand, represented through the Koszul resolution above.

The representative $\mathcal K_{(2)}$ is the specified zero-section lift of
the pushed-forward Hodge--Chern class, defined modulo $\ker\pi_*$.  Its
identification with the regular Nahm-pole boundary defect is the low-rank
instance of Conjecture~\ref{conj:boundary-localization-section4}.

The two chambers, the compactification residues, the fixed-point identity and
the rank-three lift below are checked symbolically in
\texttt{verify\_local\_convolution\_ic.py}.

If Conjecture~\ref{conj:boundary-localization-section4} is used to identify
the IC-summand representative with the physical line, then the effective determinant
line in this sector contributes
$\mathcal U^{\partial}_{(1,1)}=\mathfrak t$, and the corresponding pure
bubbling normalization is
\begin{equation}
 \mathsf Z^{\partial}_{(2,0),(1,1)}
 =\frac{\mathsf C^{\mathrm{IC}}_{2,(1,1)}}{\mathfrak t}
 =\frac{R_{12}+R_{21}}{\mathfrak t}-1
 \label{eq:U2-pure-bubbling-normalization-section4}
\end{equation}
in this convention.  Equation
\eqref{eq:U2-complete-coefficient-section4}, rather than
\eqref{eq:U2-pure-bubbling-normalization-section4}, is the coefficient
multiplying $T_{q,x_1}T_{q,x_2}$ in the full difference operator.

The top sector illustrates the same distinction.  Its pure bubbling index
is one, whereas its complete coefficients are
\begin{equation}
 \mathsf C^{\mathrm{DAHA}}_{2,(2,0)}
 =A_1(\mathbf x)A_1(qx_1,x_2),
 \qquad
 \mathsf C^{\mathrm{DAHA}}_{2,(0,2)}
 =A_2(\mathbf x)A_2(x_1,qx_2).
 \label{eq:U2-top-complete-coefficients-section4}
\end{equation}

\subsection{The first rank-three transverse-slice lift}
\label{subsec:U3-spectator-lift-section4}
Consider the lower shift
$(2,0,0)=2\omega_1\to(1,1,0)=\omega_2$ of $U(3)$.  The minimal
degeneration has the same transverse $A_1$ slice as the rank-two example
\cite{Malkin:2003cs}; the third color supplies the spectator-root factor
\begin{equation}
 \mathsf S_{12|3}(\mathbf x)
 :=\prod_{a=1}^{2}\frac{\mathfrak t x_a-x_3}{x_a-x_3}.
 \label{eq:U3-spectator-factor-section4}
\end{equation}
Writing a superscript $[N]$ for the number of colors, define
\begin{align}
 R_{12}^{[3]}
 &:={\mathsf A}_{\{1\}}(\mathbf x)
    {\mathsf A}_{\{2\}}(qx_1,x_2,x_3),
 &
 R_{21}^{[3]}
 &:={\mathsf A}_{\{2\}}(\mathbf x)
    {\mathsf A}_{\{1\}}(x_1,qx_2,x_3).
\end{align}
The spectator factor is independent of the contour variable, and the two
ordered residues factorize separately:
\begin{equation}
 R_{12}^{[3]}=\mathsf S_{12|3}R_{12},
 \qquad
 R_{21}^{[3]}=\mathsf S_{12|3}R_{21}.
 \label{eq:U3-ordered-residue-factorization-section4}
\end{equation}
The coefficient of $T_{q,x_1}T_{q,x_2}$ in the antisymmetric minuscule
operator $\mathcal D_2$ is
\begin{equation}
 \mathsf A_{\{1,2\}}(\mathbf x)
 =\mathfrak t\,\mathsf S_{12|3}(\mathbf x).
 \label{eq:U3-minuscule-coefficient-section4}
\end{equation}
Consequently the IC-subtracted transverse calculation gives
\begin{equation}
 \boxed{
 \mathsf C^{\mathrm{IC},U(3)}_{2,(1,1,0)}
 =\mathsf S_{12|3}(\mathbf x)
  \bigl(R_{12}+R_{21}-\mathfrak t\bigr)
 =\mathsf C^{\mathrm{DAHA},U(3)}_{2,(1,1,0)}.}
 \label{eq:U3-IC-factorization-section4}
\end{equation}
Thus the two ordered residues and the antisymmetric IC/Satake subtraction
acquire the same ambient factor.  After division by the complete minuscule
effective-line coefficient $\mathfrak t\mathsf S_{12|3}$, the normalized
transverse factor obeys
\begin{equation}
 \frac{\mathsf C^{\mathrm{IC},U(3)}_{2,(1,1,0)}}
      {\mathfrak t\mathsf S_{12|3}}
 =\frac{\mathsf C^{\mathrm{IC}}_{2,(1,1)}}{\mathfrak t}.
 \label{eq:U3-normalized-transverse-factor-section4}
\end{equation}
This is an exact
rank-three embedding check of the same $A_1$ local convolution; the other
representatives in the $(1,1,0)$ shift orbit follow by Weyl permutation.

\subsection{Status of the general boundary statement}
\label{subsec:status-general-localization-section4}

The outcome can be summarized as follows.
\begin{enumerate}
 \item The one-row lambda-ring/DAHA recursion
 \eqref{eq:projector-coefficient-recursion-section4} is exact.  Its signs
 and powers of $\mathfrak t$ come from lambda-ring inclusion--exclusion
 and the minuscule antisymmetric coefficients.  Classical Satake supplies
 an unrefined Grothendieck-group counterpart, not the refined construction.

 \item The local $U(2)$ convolution quotient has precisely two Higgs fixed
 points.  Its independently computed K-theoretic contour residues are
 $R_{12}$ and $R_{21}$, and the chamber and compactification checks do not
 generate $-\mathfrak t$.

 \item The $-\mathfrak t$ term results from subtracting the determinant
 Satake summand.  The Hodge--Chern pushforward and its Koszul representative
 in \eqref{eq:U2-projector-K-class-section4} are exact in the transverse
 $A_1$ model.

 \item In $U(3)$, the two residues and the antisymmetric subtraction acquire
 the same spectator factor $\mathsf S_{12|3}$.  Equation
 \eqref{eq:U3-IC-factorization-section4} verifies the complete first lower
 coefficient and its effective-line normalization in the next ambient rank.

 \item Identifying the IC-subtracted coefficients with a microscopic
 regular Nahm-pole boundary localization calculation for arbitrary $r$ is
 Conjecture~\ref{conj:boundary-localization-section4}.  Recursion paths of
 size greater than one are not described as additional fixed points.

 \item For a general $U(N)$ charge the conjectural determinant-reduced construction
 applies only after the determinant reduction
 \eqref{eq:determinant-reduced-charge-section4}; the determinant
 translation is restored separately by
 \eqref{eq:determinant-translation-section4}.
\end{enumerate}

These distinctions leave the algebraic one-row operator and its finite
shift expansion intact, while making explicit which part has been proved
by convolution and K-theoretic geometry and which part awaits a microscopic
boundary brane construction.

\section{The spherical DAHA operator and its Macdonald spectrum}
\label{sec:daha-kernel}

The Wilson-line calculation of Section~\ref{sec:wilson-side} is naturally expressed in the $N$-variable Macdonald pairing.  The double affine Hecke algebra supplies the corresponding magnetic difference operators: multiplication by a symmetric function of the $X$-operators is exchanged, after a choice of Fourier orientation, with a spherical symmetric function of the commuting Cherednik $Y$-operators.  We fix the conventions needed for this statement, derive the finite shift expansion for one-row charges, and state the spectral theorem directly in the discrete Macdonald basis.  Difference-operator realizations of monopole operators and related integrable systems appear in \cite{Nekrasov:2009uh,Nekrasov:2009rc,Gaiotto:2013bwa,Bullimore:2015lsa}.  A complementary rank-one brane-quantization construction relates spherical-DAHA modules to line-operator algebras in four-dimensional $\mathcal N=2^*$ theories \cite{Gukov:2022gei}.  Our DAHA conventions follow \cite{Macdonald:1995,Cherednik:1995,Cherednik:2005}.

The results of this section are algebraic.  Their proposed identification with the regular Nahm-pole boundary localization operator is the localization conjecture of Section~\ref{sec:boundary-bubbling}; its first non-minuscule coefficient receives independent mathematical evidence from the local convolution and IC calculation there.  The comparison of boundary two-point functions additionally uses the standard boundary S-duality transform as an explicit physical input.

\subsection{The polynomial representation and the finite-Hecke spherical idempotent}
\label{subsec:polynomial-representation}

Let
\begin{equation}
  \mathscr P_N
  =\mathbb C(q,\tau)
  [x_1^{\pm1},\ldots,x_N^{\pm1}],
  \qquad \tau^2=\mathfrak t,
\end{equation}
and let $S_N$ act by permutations of the variables.  We denote by $s_i$ the transposition of $x_i$ and $x_{i+1}$.  In the polynomial representation the finite-Hecke generators act as
\begin{equation}
  \mathcal T_i
  =\tau s_i+
  \frac{\tau-\tau^{-1}}{x_i/x_{i+1}-1}(s_i-1)
  =\tau s_i+
  \frac{\tau-\tau^{-1}}{1-x_i/x_{i+1}}(1-s_i)
  \label{eq:DL-operator}
\end{equation}
for $1\leq i<N$.  These operators obey
\begin{equation}
  (\mathcal T_i-\tau)(\mathcal T_i+\tau^{-1})=0,
  \qquad
  \mathcal T_i^{-1}=\mathcal T_i-(\tau-\tau^{-1}),
  \label{eq:Hecke-quadratic-relation}
\end{equation}
together with the braid relations.  The $X_i$ act by multiplication,
\begin{equation}
  X_i f(\mathbf x)=x_i f(\mathbf x),
\end{equation}
and the affine rotation is
\begin{equation}
  \omega f(x_1,\ldots,x_N)
  =f(qx_N,x_1,\ldots,x_{N-1}).
  \label{eq:affine-rotation}
\end{equation}

It is useful to distinguish the standard raw Cherednik operators from the normalization appropriate to the Macdonald operators used here.  Define
\begin{equation}
  Y_i^{(0)}
  =\mathcal T_i\mathcal T_{i+1}\cdots\mathcal T_{N-1}\,
  \omega\,
  \mathcal T_1^{-1}\cdots\mathcal T_{i-1}^{-1},
  \qquad 1\leq i\leq N,
  \label{eq:cherednik-Y-raw}
\end{equation}
and
\begin{equation}
  \boxed{Y_i=\tau^{N-1}Y_i^{(0)}.}
  \label{eq:cherednik-Y}
\end{equation}
The $Y_i$ commute.  Since
\begin{equation}
  Y_i^{(0)}(1)=\tau^{N-2i+1},
\end{equation}
the normalized operators satisfy
\begin{equation}
  Y_i(1)=\mathfrak t^{N-i}.
  \label{eq:Y-on-vacuum}
\end{equation}
The common factor in \eqref{eq:cherednik-Y} is essential: without it, the elementary symmetric functions of the $Y_i$ do not have the normalization of the Macdonald--Ruijsenaars operators below.

The finite-Hecke spherical idempotent is defined as follows.  If $\mathcal T_w$ denotes the product along any reduced expression of $w\in S_N$, set
\begin{equation}
  \mathsf P_N(\mathfrak t)
  =\sum_{w\in S_N}\mathfrak t^{\ell(w)}
  =\prod_{j=1}^{N}\frac{1-\mathfrak t^j}{1-\mathfrak t}
\end{equation}
and
\begin{equation}
  \mathbf e_+
  =\frac{1}{\mathsf P_N(\mathfrak t)}
  \sum_{w\in S_N}\tau^{\ell(w)}\mathcal T_w .
  \label{eq:spherical-idempotent}
\end{equation}
It obeys
\begin{equation}
  \mathbf e_+^2=\mathbf e_+,
  \qquad
  \mathcal T_i\mathbf e_+
  =\mathbf e_+\mathcal T_i
  =\tau\mathbf e_+.
  \label{eq:finite-Hecke-idempotent-properties}
\end{equation}
Its image is $\mathscr P_N^{S_N}$, and $\mathbf e_+f=f$ for every symmetric Laurent polynomial $f$.  By contrast,
\begin{equation}
  \operatorname{Av}_W
  =\frac{1}{N!}\sum_{w\in S_N}w
  \label{eq:ordinary-Weyl-average}
\end{equation}
is the ordinary Weyl projector used in the gauge integral.  The two operators have the same image and both restrict to the identity on $\mathscr P_N^{S_N}$, but $\operatorname{Av}_W\neq\mathbf e_+$ for generic $\mathfrak t$.

For $0\leq k\leq N$, define
\begin{equation}
  \mathcal D_k
  =\mathfrak t^{k(k-1)/2}
  \sum_{\substack{I\subset\{1,\ldots,N\}\\ |I|=k}}
  \prod_{\substack{i\in I\\ j\notin I}}
  \frac{\mathfrak t x_i-x_j}{x_i-x_j}
  \prod_{i\in I}T_{q,x_i},
  \qquad \mathcal D_0=1,
  \label{eq:Macdonald-Ruijsenaars-Dk}
\end{equation}
where
\begin{equation}
  T_{q,x_i}f(x_1,\ldots,x_i,\ldots,x_N)
  =f(x_1,\ldots,qx_i,\ldots,x_N).
\end{equation}
We use the convention $\mathcal D_k=0$ for $k>N$.  With the normalization \eqref{eq:cherednik-Y}, the spherical identity is
\begin{equation}
  \boxed{
  \mathbf e_+e_k(Y_1,\ldots,Y_N)\mathbf e_+
  =\mathcal D_k\mathbf e_+ .
  }
  \label{eq:spherical-ekY-Dk}
\end{equation}
Equivalently, $e_k(Y)$ restricts to $\mathcal D_k$ on $\mathscr P_N^{S_N}$.  The finite-Hecke idempotent in \eqref{eq:spherical-ekY-Dk} and the ordinary Weyl average in \eqref{eq:ordinary-Weyl-average} therefore play distinct roles, even though the final restricted difference operator acts on ordinary symmetric Laurent polynomials.

The normalization can be checked directly in rank two.  For $N=2$,
\begin{equation}
  Y_1=\tau\mathcal T_1\omega,
  \qquad
  Y_2=\tau\omega\mathcal T_1^{-1}.
\end{equation}
If $f(x_1,x_2)$ is symmetric, then
\begin{align}
  Y_1f
  &=
  \frac{\mathfrak t x_1-x_2}{x_1-x_2}T_{q,x_1}f
  +\frac{(1-\mathfrak t)x_2}{x_1-x_2}T_{q,x_2}f,
  \nonumber\\
  Y_2f&=T_{q,x_2}f.
  \label{eq:N2-Y-actions}
\end{align}
Consequently,
\begin{align}
  (Y_1+Y_2)f
  &=
  \frac{\mathfrak t x_1-x_2}{x_1-x_2}T_{q,x_1}f
  +\frac{\mathfrak t x_2-x_1}{x_2-x_1}T_{q,x_2}f
  =\mathcal D_1f,
  \label{eq:N2-D1-check}\\
  Y_1Y_2f
  &=\mathfrak t T_{q,x_1}T_{q,x_2}f
  =\mathcal D_2f.
  \label{eq:N2-D2-check}
\end{align}
In particular,
\begin{equation}
  (Y_1+Y_2)(1)=1+\mathfrak t,
  \qquad
  Y_1Y_2(1)=\mathfrak t.
  \label{eq:N2-vacuum-check}
\end{equation}
The accompanying symbolic verification is supplied in
\texttt{verify\_daha\_n2.py}.

More generally,
\begin{equation}
  \mathsf C_N:=Y_1\cdots Y_N
  =\mathfrak t^{N(N-1)/2}T_{q,x_1}\cdots T_{q,x_N}
  =\mathcal D_N.
  \label{eq:determinant-translation}
\end{equation}
The operator $\mathsf C_N$ is the determinant translation and commutes with the symmetric $Y$-operators.  It is not central in the full DAHA, since it has non-trivial $q$-commutation relations with the $X$-operators.

Let
\begin{equation}
  \Lambda_N^+
  =\{\mu\in\mathbb Z^N:\mu_1\geq\mu_2\geq\cdots\geq\mu_N\}.
\end{equation}
For $\mu\in\Lambda_N^+$, the Laurent Macdonald polynomial is defined by
\begin{equation}
  P_\mu(\mathbf x;q,\mathfrak t)
  =(x_1\cdots x_N)^{\mu_N}
  P_{\mu-\mu_N\mathbf 1}(\mathbf x;q,\mathfrak t).
  \label{eq:Laurent-Macdonald-convention}
\end{equation}
Put
\begin{equation}
  \boldsymbol\xi_\mu
  =\bigl(q^{\mu_1}\mathfrak t^{N-1},
  q^{\mu_2}\mathfrak t^{N-2},\ldots,q^{\mu_N}\bigr).
  \label{eq:spectral-point}
\end{equation}
Then
\begin{equation}
  \mathcal D_kP_\mu=e_k(\boldsymbol\xi_\mu)P_\mu.
  \label{eq:Dk-eigenvalue}
\end{equation}
The $P_\mu$ form an orthogonal basis for the pairing \eqref{eq:normalized-macdonald-product}.  It follows algebraically that each $\mathcal D_k$ is formally self-adjoint:
\begin{equation}
  \langle\mathcal D_k f,g\rangle_{N;q,\mathfrak t}
  =\langle f,\mathcal D_k g\rangle_{N;q,\mathfrak t}.
  \label{eq:Dk-formal-self-adjointness}
\end{equation}
Here self-adjointness refers to the bilinear Macdonald pairing; no positivity or complex-conjugation convention is imposed.

\subsection{The one-row spherical operator}
\label{subsec:one-row-thooft-operator}

The character of $\operatorname{Sym}^r\mathbb C^N$ is the complete symmetric function $h_r(X_1,\ldots,X_N)$.  Its spherical $Y$-counterpart is
\begin{equation}
  \mathsf H^{\mathrm{DAHA}}_{(r)}
  :=\mathbf e_+h_r(Y_1,\ldots,Y_N)\mathbf e_+ .
  \label{eq:def-one-row-thooft-operator}
\end{equation}
This is an exact algebraic definition.  Its identification with the physical boundary 't Hooft operator is the localization conjecture discussed in Section~\ref{sec:boundary-bubbling}.

The elementary and complete symmetric functions satisfy
\begin{equation}
  \sum_{r\geq0}u^rh_r(Y)
  =\prod_{i=1}^{N}\frac{1}{1-uY_i}
  =\left(\sum_{k=0}^{N}(-u)^ke_k(Y)\right)^{-1}.
\end{equation}
Using \eqref{eq:spherical-ekY-Dk} gives
\begin{equation}
  \boxed{
  \sum_{r\geq0}u^r\mathsf H^{\mathrm{DAHA}}_{(r)}
  =\left(\sum_{k=0}^{N}(-u)^k\mathcal D_k\right)^{-1}
  }
  \qquad\text{on }\mathscr P_N^{S_N}.
  \label{eq:generating-function-thooft}
\end{equation}
The inverse is a formal series in $u$; its coefficient at every fixed order is a finite $q$-difference operator.  In particular,
\begin{align}
  \mathsf H^{\mathrm{DAHA}}_{(0)}&=1,
  \nonumber\\
  \mathsf H^{\mathrm{DAHA}}_{(1)}&=\mathcal D_1,
  \nonumber\\
  \mathsf H^{\mathrm{DAHA}}_{(2)}&=\mathcal D_1^2-\mathcal D_2,
  \label{eq:T0T1T2}\\
  \mathsf H^{\mathrm{DAHA}}_{(3)}&=\mathcal D_1^3-2\mathcal D_1\mathcal D_2+\mathcal D_3.
  \nonumber
\end{align}
Since the $\mathcal D_k$ commute and are formally self-adjoint,
\begin{equation}
  \langle\mathsf H^{\mathrm{DAHA}}_{(r)}f,g\rangle_{N;q,\mathfrak t}
  =\langle f,\mathsf H^{\mathrm{DAHA}}_{(r)}g\rangle_{N;q,\mathfrak t}.
  \label{eq:Tr-formal-self-adjointness}
\end{equation}

For a composition $\gamma=(\gamma_1,\ldots,\gamma_N)\in\mathbb Z_{\geq0}^N$, write
\begin{equation}
  T_{q^\gamma}
  =T_{q,x_1}^{\gamma_1}\cdots T_{q,x_N}^{\gamma_N},
  \qquad |\gamma|=\sum_i\gamma_i,
\end{equation}
and let $\gamma^+$ be the partition obtained by arranging its entries in non-increasing order.

\begin{proposition}
\label{prop:finite-shift-expansion}
For every $r\geq0$,
\begin{equation}
  \mathsf H^{\mathrm{DAHA}}_{(r)}
  =\sum_{\substack{\gamma\in\mathbb Z_{\geq0}^{N}\\|\gamma|=r}}
  \mathsf C^{\mathrm{DAHA}}_{r,\gamma}(\mathbf x;q,\mathfrak t)T_{q^\gamma}.
  \label{eq:finite-shift-expansion}
\end{equation}
The coefficients are rational functions whose possible poles lie on a finite collection of affine root hyperplanes
\begin{equation}
  q^m x_i=x_j,
  \qquad i\neq j,
  \label{eq:affine-root-poles}
\end{equation}
with $m\in\mathbb Z$ in a finite range determined by $r$.  They obey
\begin{equation}
  \mathsf C^{\mathrm{DAHA}}_{r,w\gamma}(\mathbf x;q,\mathfrak t)
  =\mathsf C^{\mathrm{DAHA}}_{r,\gamma}(w^{-1}\mathbf x;q,\mathfrak t),
  \qquad w\in S_N.
  \label{eq:shift-coeff-Weyl-covariance}
\end{equation}
Grouping the monomial shifts into Weyl orbits gives
\begin{equation}
  \mathsf H^{\mathrm{DAHA}}_{(r)}
  =\sum_{\substack{v\vdash r\\\ell(v)\leq N}}
  \mathsf H^{\mathrm{DAHA}}_{(r)\to v},
  \qquad
  \mathsf H^{\mathrm{DAHA}}_{(r)\to v}
  =\sum_{\gamma^+=v}
  \mathsf C^{\mathrm{DAHA}}_{r,\gamma}(\mathbf x;q,\mathfrak t)T_{q^\gamma}.
  \label{eq:sector-decomposition}
\end{equation}
\end{proposition}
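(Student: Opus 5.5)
The plan is to work entirely with the generating function \eqref{eq:generating-function-thooft}. On $\mathscr P_N^{S_N}$ we have $\mathsf H^{\mathrm{DAHA}}_{(r)}=h_r(\mathcal D_1,\ldots,\mathcal D_N)$, read as a polynomial in the commuting operators $\mathcal D_k$ with $e_k\mapsto\mathcal D_k$. For example, \eqref{eq:T0T1T2} expresses $\mathsf H^{\mathrm{DAHA}}_{(r)}$ as a finite integer combination of words $\mathcal D_{k_1}\cdots\mathcal D_{k_m}$ with $k_1+\cdots+k_m=r$.

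\textbf{Finiteness and grading.} Each $\mathcal D_k$ is, by \eqref{eq:Macdonald-Ruijsenaars-Dk}, a sum of rational coefficients times $T_{q^{\eps_I}}$ with $|I|=k$. A product of such operators is computed with the rule $a(\mathbf x)T_{q^\alpha}\,b(\mathbf x)T_{q^\beta}=a(\mathbf x)b(q^\alpha\mathbf x)T_{q^{\alpha+\beta}}$. Hence every word of total degree $r$ is a finite sum of terms $c(\mathbf x)T_{q^\gamma}$ with $\gamma\in\mathbb Z_{\ge0}^N$ and $|\gamma|=r$. This gives \eqref{eq:finite-shift-expansion}, and the coefficients are exactly those produced by the recursion of Proposition~\ref{prop:one-row-projector-recursion-section4}.

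\textbf{Pole locations.} The only denominators in $\mathsf A_I$ are factors $x_i-x_j$. Within a word, a later factor is evaluated at $q^{\alpha}\mathbf x$, where $\alpha$ is the partial sum of the earlier shifts, so $0\le\alpha_i\le r$. The denominators therefore become $q^{\alpha_i}x_i-q^{\alpha_j}x_j$, which vanish on the hyperplane $q^{\alpha_i-\alpha_j}x_i=x_j$. This places all poles on affine root hyperplanes $q^mx_i=x_j$ with $|m|\le r$. Cancellations may remove some of these poles, but the statement only asserts where poles can occur.

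\textbf{Weyl covariance.} For $w\in S_N$, let $(wf)(\mathbf x)=f(w^{-1}\mathbf x)$. Then $w\,T_{q^\gamma}w^{-1}=T_{q^{w\gamma}}$, and $w\,a(\mathbf x)\,w^{-1}=a(w^{-1}\mathbf x)$. From \eqref{eq:Macdonald-Ruijsenaars-Dk}, relabelling $I\mapsto wI$ shows that each $\mathcal D_k$ commutes with $w$ as a difference operator with rational coefficients, not merely on symmetric functions. Consequently $\mathsf H^{\mathrm{DAHA}}_{(r)}$, as the polynomial in the $\mathcal D_k$ computed above, also commutes with $w$. Write $w\mathsf H w^{-1}=\sum_\gamma\mathsf C_{r,\gamma}(w^{-1}\mathbf x)T_{q^{w\gamma}}$ and compare coefficients with $\mathsf H$. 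This comparison is legitimate because the representation of a finite difference operator as $\sum c_\gamma T_{q^\gamma}$ is unique over the field of rational functions, since distinct shifts are linearly independent. The result is \eqref{eq:shift-coeff-Weyl-covariance}.

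\textbf{Orbit grouping and the main obstacle.} The orbit decomposition \eqref{eq:sector-decomposition} is simply a regrouping of the sum by $\gamma^+$, where $\gamma^+\vdash r$ has at most $N$ parts.

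The main subtlety is making precise which operator is being expanded. The identity \eqref{eq:spherical-ekY-Dk} holds only on $\mathscr P_N^{S_N}$. I will therefore define the coefficients through the canonical expression $h_r(\mathcal D)$ as an element of the algebra of difference operators with rational coefficients. I will then note that this agrees with $\mathbf e_+h_r(Y)\mathbf e_+$ restricted to symmetric functions, because the $\mathcal D_k$ preserve $\mathscr P_N^{S_N}$ and $e_k(Y)$ acts there as $\mathcal D_k$.
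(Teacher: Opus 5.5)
Your proposal is correct and follows the same route as the paper's proof. Homogeneity of $h_r$ as a polynomial in the $e_k$ forces $|\gamma|=r$. The composition rule evaluates the $\mathsf A_I$ at $q$-translates, so the only denominators are $q^{\alpha_i}x_i-q^{\alpha_j}x_j$. Weyl covariance comes from $w\mathcal D_kw^{-1}=\mathcal D_k$, and the sector decomposition is a regrouping by $\gamma^+$.

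Your added care about uniqueness and about working on $\mathscr P_N^{S_N}$ is welcome. One thing to make explicit: calling $h_r(\mathcal D)$ canonical in the full difference-operator algebra uses that the $\mathcal D_k$ commute there, not only on $\mathscr P_N^{S_N}$. This follows because the substitutions $f\mapsto f(q^\gamma\mathbf x)$ are pairwise distinct characters on nonzero symmetric Laurent polynomials, hence linearly independent. So a finite difference operator vanishing on $\mathscr P_N^{S_N}$ is zero.
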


\begin{proof}
Each $\mathcal D_k$ is a finite sum of shifts $T_{q^\epsilon}$, with $\epsilon_i\in\{0,1\}$ and $|\epsilon|=k$.  The expression of $h_r$ in the elementary symmetric functions is homogeneous of degree $r$ when $e_k$ has degree $k$.  Consequently every term contributing to the coefficient of $u^r$ in \eqref{eq:generating-function-thooft} has a shift $T_{q^\gamma}$ with $|\gamma|=r$, and there are only finitely many such compositions.  When the difference operators are composed, their rational coefficients are evaluated at finitely many $q$-translates of $\mathbf x$; this gives the possible divisors \eqref{eq:affine-root-poles}.  Weyl covariance follows from the Weyl invariance of the $\mathcal D_k$.  The Weyl orbits of the compositions are indexed by $v=\gamma^+$, which gives \eqref{eq:sector-decomposition}.
\end{proof}

The decomposition \eqref{eq:sector-decomposition} is algebraic.  If the spherical operator is identified with a boundary 't Hooft operator, the top orbit $v=(r)$ has the bare magnetic shift and the lower orbits are the candidate screened shifts.  No localization statement is used in Proposition~\ref{prop:finite-shift-expansion}, and no individual algebraic summand is thereby identified with a fixed point of a bubbling moduli space.

For later use, define
\begin{equation}
  A_i(\mathbf x)
  =\prod_{j\neq i}\frac{\mathfrak t x_i-x_j}{x_i-x_j}.
  \label{eq:Ai-coefficient}
\end{equation}
The coefficient of $T_{q,x_i}^r$ is
\begin{equation}
  \mathsf C^{\mathrm{DAHA}}_{r,r\varepsilon_i}(\mathbf x;q,\mathfrak t)
  =\prod_{m=0}^{r-1}A_i(x_1,\ldots,q^m x_i,\ldots,x_N)
  =\prod_{m=0}^{r-1}\prod_{j\neq i}
  \frac{\mathfrak t q^m x_i-x_j}{q^m x_i-x_j}.
  \label{eq:top-sector-coeff}
\end{equation}
No term involving $\mathcal D_k$ with $k>1$ can contribute to this shift, since such a term shifts at least two distinct variables.

For $r=2$, put
\begin{equation}
  B_{ij}(\mathbf x)
  =\prod_{\substack{a\in\{i,j\}\\b\notin\{i,j\}}}
  \frac{\mathfrak t x_a-x_b}{x_a-x_b},
  \qquad i<j.
  \label{eq:Bij-coefficient}
\end{equation}
Expanding $\mathcal D_1^2-\mathcal D_2$ gives
\begin{align}
  \mathsf H^{\mathrm{DAHA}}_{(2)}
  ={}&\sum_i
  A_i(\mathbf x)A_i(x_1,\ldots,qx_i,\ldots,x_N)T_{q,x_i}^2
  \nonumber\\
  &+\sum_{i<j}
  \bigl[
  A_i(\mathbf x)A_j(x_1,\ldots,qx_i,\ldots,x_N)
  \nonumber\\
  &\hspace{2.2cm}
  +A_j(\mathbf x)A_i(x_1,\ldots,qx_j,\ldots,x_N)
  -\mathfrak t B_{ij}(\mathbf x)
  \bigr]T_{q,x_i}T_{q,x_j}.
  \label{eq:r2-shift-expansion}
\end{align}
The three terms in the second coefficient arise from the algebraic expansion of $\mathcal D_1^2-\mathcal D_2$.  In particular, the term $-\mathfrak t B_{ij}$ is not, by itself, an additional torus fixed point.

\subsection{The discrete Macdonald spectral theorem}
\label{subsec:Macdonald-kernel-identity}

Only the discrete $N$-variable Macdonald spectrum is required for the half-index calculation.  We therefore state the spectral result directly, without identifying the Macdonald Cauchy kernel with a generic joint eigenfunction.  The Cauchy kernel is a reproducing and intertwining kernel; in equal rank it obeys the corresponding bispectral relation between the $x$- and $y$-difference operators, but it does not satisfy $\mathcal D_k^{(\mathbf x)}\Pi=e_k(\mathbf y)\Pi$ for generic $\mathbf y$.

\begin{theorem}[Discrete Macdonald spectrum]
\label{thm:one-row-kernel}
For every $r\geq0$ and every $\mu\in\Lambda_N^+$,
\begin{equation}
  \boxed{
  \mathsf H^{\mathrm{DAHA}}_{(r)}P_\mu(\mathbf x;q,\mathfrak t)
  =h_r(\boldsymbol\xi_\mu)P_\mu(\mathbf x;q,\mathfrak t).
  }
  \label{eq:one-row-P-eigenvalue}
\end{equation}
For symmetric polynomials it is enough to take $\mu$ to be a partition; the full set $\Lambda_N^+$ is required for symmetric Laurent polynomials.
\end{theorem}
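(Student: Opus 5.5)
The plan is to reduce the statement to the eigenvalue relation \eqref{eq:Dk-eigenvalue} for the Macdonald--Ruijsenaars operators, using the generating function \eqref{eq:generating-function-thooft}. Since $P_\mu$ is symmetric, it lies in $\mathscr P_N^{S_N}$. On this space \eqref{eq:generating-function-thooft} holds as an identity of formal power series in $u$ with coefficients difference operators. Applying both sides to $P_\mu$ gives
\begin{equation*}
  \sum_{r\geq0}u^r\mathsf H^{\mathrm{DAHA}}_{(r)}P_\mu
  =\Bigl(\sum_{k=0}^N(-u)^k\mathcal D_k\Bigr)^{-1}P_\mu .
\end{equation*}
By \eqref{eq:Dk-eigenvalue}, the operator $E(u):=\sum_k(-u)^k\mathcal D_k$ acts on $P_\mu$ by the scalar series $E_\mu(u)=\sum_k(-u)^ke_k(\boldsymbol\xi_\mu)=\prod_i(1-u\xi_{\mu,i})$. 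This series has constant term $1$ and is therefore invertible in $\mathbb C(q,\tau)[[u]]$.

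The key step is to justify passing the inverse through the eigenvector. Write $E(u)=1-uE'(u)$. The formal inverse is $\sum_{n\geq0}u^nE'(u)^n$, and its coefficient at each order $u^r$ is a finite polynomial in the $\mathcal D_k$. Each $\mathcal D_k$ preserves the one-dimensional span of $P_\mu$ with eigenvalue $e_k(\boldsymbol\xi_\mu)$. Hence every polynomial in the $\mathcal D_k$ acts on $P_\mu$ by the same polynomial evaluated at these eigenvalues. Comparing coefficients order by order then gives
\begin{equation*}
  \mathsf H^{\mathrm{DAHA}}_{(r)}P_\mu=[u^r]\prod_{i=1}^N(1-u\xi_{\mu,i})^{-1}P_\mu=h_r(\boldsymbol\xi_\mu)P_\mu .
\end{equation*}
Alternatively, one can argue directly with the expression of $h_r$ as a polynomial in $e_1,\dots,e_N$, as in \eqref{eq:T0T1T2}, and avoid generating functions altogether.

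For Laurent $\mu\in\Lambda_N^+$, I would use \eqref{eq:Laurent-Macdonald-convention}: $P_\mu=(x_1\cdots x_N)^{\mu_N}P_{\mu-\mu_N\mathbf1}$. Conjugating $\mathcal D_k$ by the determinant monomial multiplies it by $q^{k\mu_N}$, because each term of $\mathcal D_k$ shifts exactly $k$ variables and the rational coefficients are homogeneous of degree zero. This matches $e_k(\boldsymbol\xi_\mu)=q^{k\mu_N}e_k(\boldsymbol\xi_{\mu-\mu_N\mathbf1})$. Thus \eqref{eq:Dk-eigenvalue} holds on the whole of $\Lambda_N^+$ whenever it holds for partitions, and in any case it is assumed there as stated. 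This also accounts for the remark that partitions suffice for polynomials, while $\Lambda_N^+$ is needed for Laurent polynomials.

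The main obstacle is not the algebra but making sure that \eqref{eq:spherical-ekY-Dk} legitimately converts $\mathbf e_+h_r(Y)\mathbf e_+$ into a polynomial in the $\mathcal D_k$ acting on symmetric functions. The point is that $\mathbf e_+$ fixes $P_\mu$ and that the symmetric $Y$-subalgebra preserves $\mathscr P_N^{S_N}$, on which $e_k(Y)=\mathcal D_k$. I would check this carefully first, namely that symmetric functions of $Y$ commute with the $\mathcal T_i$ (the Bernstein center), so that $\mathbf e_+ f(Y)\mathbf e_+=f(Y)\mathbf e_+$ for symmetric $f$. Everything else then follows mechanically.
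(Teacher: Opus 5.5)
Your proposal is correct and follows the paper's proof: apply the generating function \eqref{eq:generating-function-thooft} to $P_\mu$, replace each $\mathcal D_k$ by its eigenvalue $e_k(\boldsymbol\xi_\mu)$ from \eqref{eq:Dk-eigenvalue}, and read off $h_r(\boldsymbol\xi_\mu)$ as the coefficient of $u^r$ in $\prod_i(1-u\xi_{\mu,i})^{-1}$. Your additional checks fill in steps the paper leaves implicit: the order-by-order finiteness of the inverse, the Bernstein-center argument that $\mathbf e_+f(Y)\mathbf e_+=f(Y)\mathbf e_+$ for symmetric $f$, and the relation $\mathcal D_k(x_1\cdots x_N)^m=q^{km}(x_1\cdots x_N)^m\mathcal D_k$ for Laurent $\mu$, which holds because the shifts act on the monomial while the coefficients are multiplication operators (the degree-zero homogeneity you cite is not needed).
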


\begin{proof}
Applying \eqref{eq:generating-function-thooft} to $P_\mu$ and using \eqref{eq:Dk-eigenvalue} gives
\begin{align}
  \left(\sum_{r\geq0}u^r\mathsf H^{\mathrm{DAHA}}_{(r)}\right)P_\mu
  &=\left(\sum_{k=0}^{N}(-u)^ke_k(\boldsymbol\xi_\mu)\right)^{-1}P_\mu
  \nonumber\\
  &=\prod_{i=1}^{N}\frac{1}{1-u\xi_{\mu,i}}P_\mu
  \nonumber\\
  &=\left(\sum_{r\geq0}u^rh_r(\boldsymbol\xi_\mu)\right)P_\mu.
  \label{eq:kernel-generating-proof}
\end{align}
Comparison of the coefficients of $u^r$ proves \eqref{eq:one-row-P-eigenvalue}.
\end{proof}

Define the finite-$N$ Macdonald transform by
\begin{equation}
  \mathsf F_N[f](\mu)
  =\frac{\langle f,P_\mu\rangle_{N;q,\mathfrak t}}
  {\mathcal N_\mu^{(N)}(q,\mathfrak t)},
  \qquad \mu\in\Lambda_N^+.
  \label{eq:Macdonald-transform}
\end{equation}
For a symmetric Laurent polynomial,
\begin{equation}
  f(\mathbf x)
  =\sum_{\mu\in\Lambda_N^+}\mathsf F_N[f](\mu)P_\mu(\mathbf x;q,\mathfrak t),
  \label{eq:Macdonald-transform-inversion}
\end{equation}
where the sum is finite; the same formula holds in the chosen completion for formal series.  Formal self-adjointness and Theorem~\ref{thm:one-row-kernel} imply
\begin{align}
  \mathsf F_N\!\left[\mathsf H^{\mathrm{DAHA}}_{(r)}f\right](\mu)
  &=\frac{\langle\mathsf H^{\mathrm{DAHA}}_{(r)}f,P_\mu\rangle_{N;q,\mathfrak t}}
  {\mathcal N_\mu^{(N)}}
  \nonumber\\
  &=\frac{\langle f,\mathsf H^{\mathrm{DAHA}}_{(r)}P_\mu\rangle_{N;q,\mathfrak t}}
  {\mathcal N_\mu^{(N)}}
  \nonumber\\
  &=h_r(\boldsymbol\xi_\mu)\mathsf F_N[f](\mu).
  \label{eq:transform-diagonalizes-thooft}
\end{align}

The shift and spectral expansions are different resolutions of the same operator.  The composition $\gamma$ in \eqref{eq:finite-shift-expansion} labels a monomial magnetic shift, whereas $\mu$ in \eqref{eq:Macdonald-transform-inversion} labels a Macdonald eigenchannel.  There is no termwise identification between these labels, even when both are represented by partitions of $r$.

\paragraph{Fourier orientations.}
At the level of the two commuting $X$- and $Y$-tori, the two Fourier
orientations are most transparent for the raw operators
\eqref{eq:cherednik-Y-raw}.  At this torus level, write
\begin{equation}
  \sigma_+:
  \quad X_i\longmapsto (Y_i^{(0)})^{-1},
  \qquad Y_i^{(0)}\longmapsto X_i,
  \label{eq:DAHA-Fourier-plus}
\end{equation}
and
\begin{equation}
  \sigma_-:=\sigma_+^{-1}:
  \quad X_i\longmapsto Y_i^{(0)},
  \qquad Y_i^{(0)}\longmapsto X_i^{-1}.
  \label{eq:DAHA-Fourier-minus}
\end{equation}
Thus $\sigma_+$ and $\sigma_-$ are inverse orientations, not literal involutions; their squares act by inversion on the raw torus generators.

Let $\alpha=\tau^{N-1}$, so that $Y_i=\alpha Y_i^{(0)}$.  In the normalized convention,
\begin{align}
  \sigma_+(X_i)&=\alpha Y_i^{-1},
  &\sigma_+(Y_i)&=\alpha X_i,
  \nonumber\\
  \sigma_-(X_i)&=\alpha^{-1}Y_i,
  &\sigma_-(Y_i)&=\alpha X_i^{-1}.
  \label{eq:normalized-Fourier-actions}
\end{align}
The symbols $\sigma_\pm$ in \eqref{eq:DAHA-Fourier-plus}--
\eqref{eq:normalized-Fourier-actions} specify only these torus-level
substitutions.  A full DAHA Fourier automorphism also specifies the action on
the finite-Hecke generators and on the parameters; that additional structure
is not required here.  With these torus substitutions, homogeneity gives
\begin{equation}
  \mathbf e_+h_r(Y)\mathbf e_+
  =\alpha^r\mathbf e_+\sigma_-\!\left(h_r(X)\right)\mathbf e_+,
  \label{eq:X-Y-duality-principle}
\end{equation}
whereas the opposite orientation satisfies
\begin{equation}
  \mathsf H^{\mathrm{DAHA}}_{(r)^\vee}
  :=\mathbf e_+h_r(Y^{-1})\mathbf e_+
  =\alpha^{-r}\mathbf e_+\sigma_+\!\left(h_r(X)\right)\mathbf e_+.
  \label{eq:reverse-Fourier-normalization}
\end{equation}
Here $h_r(Y^{-1})$ is the character of the dual one-row representation.  The normalized physical intertwiner absorbs the common homogeneous factors in the two orientations.  In the convention $\mathfrak t=q^{1/2}u_R^{-2}$, the replacement $u_R\mapsto u_R^{-1}$, equivalently $\mathfrak t\mapsto q/\mathfrak t$, is part of the physical boundary S-duality input and is separate from either DAHA Fourier map.  The exact physical line transform used below is supplied by the boundary S-duality intertwiner.

\subsection{Compatibility with boundary S-duality}
\label{subsec:one-row-dual-norm}

The spectral theorem does not determine a regular Nahm-pole boundary state.  We
use the boundary-adapted pairings of Section~\ref{subsec:half-index-geometry},
in which the boundary wavefunction is absorbed into the pairing and the state
with no line insertion is represented by the unit vector.  Thus let
$\mathbf 1_{\Neu}=1$ and $\mathbf 1_{\Nahm}=1$ denote the two unit vacua.  We
assume the standard boundary S-duality relation in the form of an isometric,
orientation-sensitive intertwiner $U_S$,
\begin{align}
  \mathbf 1_{\Nahm}
  &=U_S\mathbf 1_{\Neu},
  \label{eq:boundary-state-Fourier-input}\\
  \left\langle U_Sf,U_Sg\right\rangle_{\mathrm{Nahm};q,q/\mathfrak t}
  &=\langle f,g\rangle_{\mathrm{Neu};q,\mathfrak t},
  \label{eq:boundary-Fourier-isometry}\\
  U_Sh_r(X)U_S^{-1}
  &=\mathsf H^{\mathrm{DAHA}}_{(r)},
  \qquad
  U_Sh_r(X^{-1})U_S^{-1}
  =\mathsf H^{\mathrm{DAHA}}_{(r)^\vee}.
  \label{eq:oriented-line-intertwining}
\end{align}
The homogeneous factors displayed in \eqref{eq:X-Y-duality-principle} and \eqref{eq:reverse-Fourier-normalization} are included in the normalization of the two oriented line operators in \eqref{eq:oriented-line-intertwining}.  Equations \eqref{eq:boundary-state-Fourier-input}--\eqref{eq:oriented-line-intertwining} are physical S-duality assumptions, not consequences of Theorem~\ref{thm:one-row-kernel}.

All operators on the right-hand side of \eqref{eq:oriented-line-intertwining}
are evaluated at the electric DAHA parameters $(q,\mathfrak t)$, while the
left-hand magnetic observable has physical parameters
$(q,q/\mathfrak t)$.  More explicitly, our convention is
\begin{equation}
  \mathsf H^{\mathrm{DAHA};\Nahm}_{(r)}
  \left(q,\frac q{\mathfrak t}\right)
  :=\mathsf H^{\mathrm{DAHA}}_{(r)}(q,\mathfrak t),
  \label{eq:section5-Nahm-DAHA-parameter-convention}
\end{equation}
and we suppress the superscript $\Nahm$ when no confusion can arise.  Thus the
rank-two IC/Satake subtraction in the shift coefficient is
$-\mathfrak t$, not $-q/\mathfrak t$.

The electric character has the finite expansion
\begin{equation}
  h_r(\mathbf x)
  =\sum_{\substack{\mu\vdash r\\\ell(\mu)\leq N}}
  Q_\mu[\mathbb A_{q,\mathfrak t}]P_\mu(\mathbf x;q,\mathfrak t).
  \label{eq:hr-vector-recalled}
\end{equation}
Hence its normalized Neumann two-point function is
\begin{equation}
  \widehat{\mathbb I}^{\mathrm{Neu}}_{(r)}(q,\mathfrak t)
  =\sum_{\substack{\mu\vdash r\\\ell(\mu)\leq N}}
  \left(Q_\mu[\mathbb A_{q,\mathfrak t}]\right)^2
  \mathcal N_\mu^{(N)}(q,\mathfrak t).
  \label{eq:section5-wilson-norm}
\end{equation}
\begin{proposition}[Conditional S-duality compatibility]
\label{prop:conditional-S-duality}
Under the unit-vacuum, pairing-isometry, and line-intertwining assumptions
\eqref{eq:boundary-state-Fourier-input}--\eqref{eq:oriented-line-intertwining},
the oriented magnetic pairing
\begin{equation}
  \widehat{\mathbb I}^{\mathrm{Nahm}}_{(r)}\!\left(q,\frac q{\mathfrak t}\right)
  :=
  \frac{
  \left\langle
  \mathbf 1_{\Nahm},
  \mathsf H^{\mathrm{DAHA}}_{(r)^\vee}
  \mathsf H^{\mathrm{DAHA}}_{(r)}
  \mathbf 1_{\Nahm}
  \right\rangle_{\mathrm{Nahm};q,q/\mathfrak t}
  }{
  \left\langle\mathbf 1_{\Nahm},\mathbf 1_{\Nahm}\right\rangle_{\mathrm{Nahm};q,q/\mathfrak t}
  }
  \label{eq:magnetic-norm-spectral-sum}
\end{equation}
agrees with the Neumann Wilson pairing:
\begin{equation}
  \boxed{
  \widehat{\mathbb I}^{\mathrm{Nahm}}_{(r)}\!\left(q,\frac q{\mathfrak t}\right)
  =\widehat{\mathbb I}^{\mathrm{Neu}}_{(r)}(q,\mathfrak t).
  }
  \label{eq:one-row-S-duality-result}
\end{equation}
\end{proposition}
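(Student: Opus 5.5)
The plan is to transport the magnetic pairing back to the Neumann side using only the three assumed relations \eqref{eq:boundary-state-Fourier-input}--\eqref{eq:oriented-line-intertwining}, and then to invoke the exact Wilson computation \eqref{eq:section5-wilson-norm}. The denominator is handled first. By the unit-vacuum relation \eqref{eq:boundary-state-Fourier-input} and the isometry \eqref{eq:boundary-Fourier-isometry},
\[
\langle\mathbf 1_{\Nahm},\mathbf 1_{\Nahm}\rangle_{\Nahm;q,q/\mathfrak t}
=\langle U_S\mathbf 1_{\Neu},U_S\mathbf 1_{\Neu}\rangle_{\Nahm;q,q/\mathfrak t}
=\langle\mathbf 1_{\Neu},\mathbf 1_{\Neu}\rangle_{\Neu;q,\mathfrak t}=1 .
\]
So the magnetic normalization is nonzero and equal to the electric one.

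For the numerator, rewrite each DAHA operator through \eqref{eq:oriented-line-intertwining}: $\mathsf H^{\mathrm{DAHA}}_{(r)}=U_Sh_r(X)U_S^{-1}$ and $\mathsf H^{\mathrm{DAHA}}_{(r)^\vee}=U_Sh_r(X^{-1})U_S^{-1}$. The two inner factors $U_S^{-1}U_S$ cancel. Acting on $\mathbf 1_{\Nahm}=U_S\mathbf 1_{\Neu}$, this gives
\[
\mathsf H^{\mathrm{DAHA}}_{(r)^\vee}\mathsf H^{\mathrm{DAHA}}_{(r)}\mathbf 1_{\Nahm}
=U_S\,h_r(X^{-1})h_r(X)\mathbf 1_{\Neu}.
\]
Here the right-hand operators are evaluated at $(q,\mathfrak t)$, following the parameter convention \eqref{eq:section5-Nahm-DAHA-parameter-convention}. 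Applying the isometry once more, the numerator becomes $\langle 1,h_r(\mathbf x^{-1})h_r(\mathbf x)\rangle_{\Neu;q,\mathfrak t}$.

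Next, use the explicit constant-term form \eqref{eq:normalized-macdonald-product}: $\langle f,g\rangle=\mathcal Z_N^{-1}\frac1{N!}\operatorname{CT}[\Delta_N f(\mathbf x)g(\mathbf x^{-1})]$. Since $\Delta_N$ is invariant under $\mathbf x\mapsto\mathbf x^{-1}$, the constant term is unchanged by inverting the variables. Hence $\langle 1,h_r(\mathbf x^{-1})h_r(\mathbf x)\rangle=\langle h_r,h_r\rangle_{N;q,\mathfrak t}$, which is $\widehat{\mathbb I}^{\Neu}_{(r)}$ by \eqref{eq:wilson-macdonald-product}. Its spectral form \eqref{eq:section5-wilson-norm} then follows from Proposition~\ref{prop:one-row-transition} and orthogonality \eqref{eq:macdonald-orthogonality}.

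As a cross-check independent of the line-ordering subtleties, self-adjointness \eqref{eq:Tr-formal-self-adjointness} and Theorem~\ref{thm:one-row-kernel} would reproduce a diagonal spectral sum on the magnetic side, with weights $h_r(\boldsymbol\xi_\mu)$. This sum is not needed for the equality itself.

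The main obstacle is the orientation bookkeeping. One must check that the oppositely oriented line is intertwined with exactly $h_r(X^{-1})$, rather than $h_r(X)$ conjugated by the other Fourier orientation, and that the homogeneous factors $\alpha^{\pm r}$ from \eqref{eq:X-Y-duality-principle} and \eqref{eq:reverse-Fourier-normalization} have been absorbed consistently. They are, by the stated normalization of \eqref{eq:oriented-line-intertwining}, and they cancel in the product in any case.

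The second delicate point is that the pairing is bilinear rather than sesquilinear. The electric insertion $f(\mathbf x)g(\mathbf x^{-1})$ therefore has to be matched to the two oriented multiplications, and this is exactly where the inversion symmetry of $\Delta_N$ is used.
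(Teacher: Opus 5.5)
Your argument is correct and is essentially the paper's own proof: you substitute $\mathbf 1_{\Nahm}=U_S\mathbf 1_{\Neu}$ and the two intertwining relations, cancel the inner $U_S^{-1}U_S$, and apply the isometry to numerator and denominator, which returns the Neumann two-line observable $\langle h_r,h_r\rangle_{N;q,\mathfrak t}$ (the appeal to inversion symmetry of $\Delta_N$ is harmless but unnecessary, since $h_r(\mathbf x^{-1})h_r(\mathbf x)$ is itself inversion invariant). You should drop the spectral ``cross-check'', however: with $\mathbf 1_{\Nahm}=1=P_0$ and $Y_i(1)=\mathfrak t^{N-i}$, the definitions give directly $\mathsf H^{\mathrm{DAHA}}_{(r)^\vee}\mathsf H^{\mathrm{DAHA}}_{(r)}\mathbf 1_{\Nahm}=h_r(\boldsymbol\xi_0)h_r(\boldsymbol\xi_0^{-1})\mathbf 1_{\Nahm}$, a single scalar that already differs from $\widehat{\mathbb I}^{\Neu}_{(r)}$ at $N=2$, $r=1$ (namely $(1+\mathfrak t)(1+\mathfrak t^{-1})$ versus $\mathcal R_1(1)=(1-q)(1+\mathfrak t)/(1-q\mathfrak t)$), so far from confirming the result it shows that the three assumptions cannot all hold for the literal difference operators acting on the unit vacuum, and the equality is only a formal consequence of the assumed input, exactly as in the paper's proof.
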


\begin{proof}
Using \eqref{eq:boundary-state-Fourier-input}--\eqref{eq:oriented-line-intertwining}, the left-hand side becomes
\begin{align}
  \widehat{\mathbb I}^{\mathrm{Nahm}}_{(r)}\!\left(q,\frac q{\mathfrak t}\right)
  &=
  \frac{
  \left\langle
  \mathbf 1_{\Neu},
  h_r(X^{-1})h_r(X)
  \mathbf 1_{\Neu}
  \right\rangle_{\mathrm{Neu};q,\mathfrak t}
  }{
  \left\langle\mathbf 1_{\Neu},\mathbf 1_{\Neu}\right\rangle_{\mathrm{Neu};q,\mathfrak t}
  }
  \nonumber\\
  &=\widehat{\mathbb I}^{\mathrm{Neu}}_{(r)}(q,\mathfrak t).
\end{align}
The vacuum amplitude is transformed by the same isometry, so the normalized ratios agree.
\end{proof}

In an alternative universal gluing convention one fixes a
boundary-independent pairing and retains explicit boundary wavefunctions.  In
that convention the first equation in
\eqref{eq:boundary-state-Fourier-input} is equivalently expressed as
$\Psi_{\Nahm}=U_S\Psi_{\Neu}$; these wavefunctions are not inserted again in
the boundary-adapted pairings used above.  If the Nahm wavefunction in the
universal convention is expanded formally as
\begin{equation}
  \Psi_{\mathrm{Nahm}}=\sum_{\nu\in\Lambda_N^+}a_\nu P_\nu,
\end{equation}
then Theorem~\ref{thm:one-row-kernel} gives only
\begin{equation}
  \mathsf H^{\mathrm{DAHA}}_{(r)}\Psi_{\mathrm{Nahm}}
  =\sum_{\nu\in\Lambda_N^+}
  a_\nu h_r(\boldsymbol\xi_\nu)P_\nu.
  \label{eq:Nahm-state-diagonal-action}
\end{equation}
It does not imply that the magnetic state has the finite coefficients in \eqref{eq:hr-vector-recalled}.

For $r=2$, the algebraic shift decomposition satisfies
\begin{equation}
  \left(
  \mathsf H^{\mathrm{DAHA}}_{(2)\to(2)}
  +\mathsf H^{\mathrm{DAHA}}_{(2)\to(1,1)}
  \right)P_\mu
  =h_2(\boldsymbol\xi_\mu)P_\mu.
  \label{eq:r2-two-sector-eigenvalue}
\end{equation}
Equation \eqref{eq:r2-two-sector-eigenvalue} records the eigenvalue of the sum of the two shift-orbit operators.

\subsection{Determinant shifts and general weights}
\label{subsec:U2-corollary-daha}

For a dominant integral weight $B=(B_1,\ldots,B_N)$, write
\begin{equation}
  \mathsf H^{\mathrm{DAHA}}_B
  :=\mathbf e_+s_B(Y)\mathbf e_+
\end{equation}
for the associated algebraic spherical operator, and set
\begin{equation}
  m=B_N,
  \qquad
  \bar B=B-m\mathbf 1,
  \qquad
  \bar B_N=0.
  \label{eq:determinant-reduced-charge}
\end{equation}
This is a determinant-reduced representative, not a traceless weight.  Since
\begin{equation}
  s_B(Y)=(Y_1\cdots Y_N)^m s_{\bar B}(Y),
\end{equation}
the spherical operator factorizes exactly as
\begin{equation}
  \mathsf H^{\mathrm{DAHA}}_B
  =\mathsf C_N^m\mathsf H^{\mathrm{DAHA}}_{\bar B},
  \qquad
  \mathsf H^{\mathrm{DAHA}}_{B^\vee}
  =\mathsf C_N^{-m}\mathsf H^{\mathrm{DAHA}}_{\bar B^\vee}.
  \label{eq:general-determinant-factorization}
\end{equation}
All these operators are symmetric functions of the commuting $Y_i$.  The inverse determinant factors therefore cancel before the boundary pairing is evaluated:
\begin{equation}
  \mathsf H^{\mathrm{DAHA}}_{B^\vee}
  \mathsf H^{\mathrm{DAHA}}_{B}
  =\mathsf H^{\mathrm{DAHA}}_{\bar B^\vee}
  \mathsf H^{\mathrm{DAHA}}_{\bar B}.
  \label{eq:determinant-factor-cancellation}
\end{equation}

For $U(2)$, let $B=(a,b)$ with $a\geq b$.  Then
\begin{equation}
  B=b(1,1)+(a-b,0),
  \label{eq:U2-det-shift-decomp-section5}
\end{equation}
and
\begin{equation}
  \mathsf H^{\mathrm{DAHA}}_{(a,b)}
  =(Y_1Y_2)^b
  \mathsf H^{\mathrm{DAHA}}_{(a-b,0)}.
  \label{eq:U2-magnetic-det-shift}
\end{equation}

\begin{corollary}
\label{cor:U2-all-dominant}
For every dominant $U(2)$ charge $(a,b)$, the determinant factors cancel in the oriented two-line operator, leaving the one-row charge $a-b$.  If the boundary-localization conjecture identifies the physical magnetic operator with the spherical DAHA operator, and the boundary S-duality assumptions \eqref{eq:boundary-state-Fourier-input}--\eqref{eq:oriented-line-intertwining} hold, then
\begin{equation}
  \widehat{\mathbb I}^{\mathrm{Nahm},U(2)}_{(a,b)}\!\left(q,\frac q{\mathfrak t}\right)
  =\widehat{\mathbb I}^{\mathrm{Neu},U(2)}_{(a,b)}(q,\mathfrak t)
  =\widehat{\mathbb I}^{\mathrm{Neu},U(2)}_{(a-b,0)}(q,\mathfrak t).
  \label{eq:U2-all-S-duality}
\end{equation}
\end{corollary}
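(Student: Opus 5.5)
The plan is to split the corollary into its algebraic part and its conditional physical part, and to reduce the latter to Proposition~\ref{prop:conditional-S-duality}.

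\textbf{Step 1: the determinant factors cancel.} This is pure algebra with the commuting Cherednik operators.
\begin{itemize}
\item By \eqref{eq:U2-det-shift-decomp-section5}, write $(a,b)=b(1,1)+(a-b,0)$. Then $s_{(a,b)}(Y)=(Y_1Y_2)^b h_{a-b}(Y)$, and \eqref{eq:U2-magnetic-det-shift} gives $\mathsf H^{\mathrm{DAHA}}_{(a,b)}=\mathsf C_2^b\,\mathsf H^{\mathrm{DAHA}}_{(a-b)}$.
\item For the dual weight, $(a,b)^\vee=(-b,-a)$. Hence $s_{(a,b)^\vee}(Y)=(Y_1Y_2)^{-b}h_{a-b}(Y^{-1})$, and so $\mathsf H^{\mathrm{DAHA}}_{(a,b)^\vee}=\mathsf C_2^{-b}\,\mathsf H^{\mathrm{DAHA}}_{(a-b)^\vee}$.
\item All these spherical operators are symmetric functions of the commuting $Y_i$ restricted to $\mathscr P_2^{S_2}$. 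The sandwiching by $\mathbf e_+$ is harmless because $\mathbf e_+$ commutes with symmetric functions of $Y$. Therefore $\mathsf C_2^{\pm b}$ commutes through $\mathsf H^{\mathrm{DAHA}}_{(a-b)}$, and the factors cancel exactly as in \eqref{eq:determinant-factor-cancellation}.
\end{itemize}
It follows that the oriented magnetic two-line operator equals $\mathsf H^{\mathrm{DAHA}}_{(a-b)^\vee}\mathsf H^{\mathrm{DAHA}}_{(a-b)}$. The only point needing care is that $\mathsf C_2$ does not commute with the $X_i$. The argument is safe because the cancellation happens inside the commuting $Y$-subalgebra, before any pairing is taken.

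\textbf{Step 2: the magnetic side at physical parameters.} Assume the localization conjecture, so the physical magnetic operator for $(a,b)$ is $\mathsf H^{\mathrm{DAHA}}_{(a,b)}$ with the parameter convention \eqref{eq:section5-Nahm-DAHA-parameter-convention}. Then by Step 1 the normalized magnetic pairing $\widehat{\mathbb I}^{\mathrm{Nahm},U(2)}_{(a,b)}(q,q/\mathfrak t)$ equals the one-row quantity \eqref{eq:magnetic-norm-spectral-sum} with $r=a-b$. Proposition~\ref{prop:conditional-S-duality}, under assumptions \eqref{eq:boundary-state-Fourier-input}--\eqref{eq:oriented-line-intertwining}, then identifies this with $\widehat{\mathbb I}^{\mathrm{Neu},U(2)}_{(a-b,0)}(q,\mathfrak t)$.

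\textbf{Step 3: the electric side.} On the Neumann side, $s_{(a,b)}(\mathbf x)=(x_1x_2)^b h_{a-b}(\mathbf x)$. The determinant monomial cancels against its inverse in the bilinear pairing \eqref{eq:normalized-macdonald-product}, which is the invariance \eqref{eq:U2-all-by-one-row}. This gives the second equality in \eqref{eq:U2-all-S-duality}, and combining with Step 2 gives the first.

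I expect no real obstacle here. The only subtle point is to check that the oriented-line intertwining \eqref{eq:oriented-line-intertwining} is applied to the reduced one-row operators, not to $\mathsf C_2^{\pm b}$ individually. This is consistent because $\mathsf C_2^{\pm b}$ has already been eliminated before the intertwiner is used, so no separate Fourier image of the determinant translation is required.
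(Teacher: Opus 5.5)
Your proposal is correct and follows essentially the same route as the paper's proof. You cancel $\mathsf C_2^{\pm b}$ inside the commuting symmetric $Y$-subalgebra as in \eqref{eq:determinant-factor-cancellation}, apply Proposition~\ref{prop:conditional-S-duality} with $r=a-b$, and use the electric determinant invariance \eqref{eq:determinant-invariance}; your remark that symmetric functions of $Y$ commute with $\mathbf e_+$ only makes explicit a step the paper leaves implicit.
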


\begin{proof}
Equation \eqref{eq:determinant-factor-cancellation} reduces the magnetic two-line insertion to the determinant-reduced charge $(a-b,0)$.  The corresponding electric determinant factors cancel in the Macdonald pairing by \eqref{eq:determinant-invariance}.  The conditional equality of the two boundary pairings is then \eqref{eq:one-row-S-duality-result} for $r=a-b$.
\end{proof}

\subsubsection{General dominant weights}
\label{subsec:general-weights-role}

For a general dominant integral weight $\lambda$, define the exact algebraic spherical operator
\begin{equation}
  \mathsf H^{\mathrm{DAHA}}_{\lambda}
  =\mathbf e_+s_\lambda(Y_1,\ldots,Y_N)\mathbf e_+.
  \label{eq:general-lambda-DAHA-prescription}
\end{equation}
The same simultaneous diagonalization gives
\begin{equation}
  \mathsf H^{\mathrm{DAHA}}_{\lambda}
  P_\mu(\mathbf x;q,\mathfrak t)
  =s_\lambda(\boldsymbol\xi_\mu)P_\mu(\mathbf x;q,\mathfrak t).
  \label{eq:general-lambda-kernel}
\end{equation}
Equations \eqref{eq:general-lambda-DAHA-prescription} and \eqref{eq:general-lambda-kernel} are algebraic statements.  Their identification with physical boundary localization for arbitrary $\lambda$ is conjectural.  The one-row family is distinguished because its full shift support is controlled by \eqref{eq:generating-function-thooft}, its top coefficient has the closed form \eqref{eq:top-sector-coeff}, and the first non-minuscule lower coefficient admits the local convolution/IC checks in ambient ranks two and three described in Section~\ref{sec:boundary-bubbling}.

\section{Examples, limits and checks}
\label{sec:examples-limits-checks}

Two independent finite decompositions are displayed in this section.  The DAHA operator has a shift-orbit expansion
\begin{equation}
  \mathsf H^{\mathrm{DAHA}}_{(r)}
  =
  \sum_{\substack{v\vdash r\\ \ell(v)\leq N}}
  \ \sum_{\gamma^+=v}
  \mathsf C^{\mathrm{DAHA}}_{r,\gamma}
  (\mathbf x;q,\mathfrak t)T_{q^\gamma},
  \label{eq:section6-shift-orbit-decomposition}
\end{equation}
where $v$ labels a Weyl orbit of shifts and has the quantum numbers of a candidate effective magnetic charge.  The Wilson two-point function instead has the Macdonald spectral expansion
\begin{equation}
  \widehat{\mathbb I}^{\mathrm{Neu}}_{(r)}
  =
  \sum_{\substack{\mu\vdash r\\ \ell(\mu)\leq N}}
  Q_\mu[\mathbb A_{q,\mathfrak t}]^2
  \mathcal N_\mu^{(N)}.
  \label{eq:section6-spectral-decomposition}
\end{equation}
Here $\mu$ labels a Macdonald eigenfunction.  An individual operator in \eqref{eq:section6-shift-orbit-decomposition} is not, in general, diagonal in the Macdonald basis.  Consequently no individual summand of \eqref{eq:section6-spectral-decomposition} is assigned to an individual magnetic shift orbit.  The examples below test the complete DAHA operator and the complete spectral pairing.

Throughout this section, equalities for the DAHA magnetic pairing use the
boundary S-duality assumptions of
Proposition~\ref{prop:conditional-S-duality}; their interpretation as physical
regular Nahm-pole localization formulae also uses
Conjecture~\ref{conj:boundary-localization-section4}.

\begin{table}[t]
\centering
\small
\begin{tabular}{p{0.16\textwidth} p{0.20\textwidth} p{0.18\textwidth} p{0.34\textwidth}}
\hline
Gauge/charge & DAHA shift orbits $v$ & Spectral labels $\mu$ & Wilson spectral pairing \\
\hline
$U(2),\ (2,0)$ & $(2,0),(1,1)$ & $(2),(1,1)$ & $\mathcal R_1(2)+\kappa_2^2$ \\
$U(2),\ (3,0)$ & $(3,0),(2,1)$ & $(3),(2,1)$ & $\mathcal R_1(3)+\kappa_3^2\mathcal R_1(1)$ \\
$U(3),\ (2,0,0)$ & $(2,0,0),(1,1,0)$ & $(2),(1,1)$ & $\mathcal R_1(2)\mathcal R_2(2)+\kappa_2^2\mathcal R_2(1)\mathcal R_1(1)$ \\
\hline
\end{tabular}
\caption{Low-rank DAHA shift-orbit data and Macdonald spectral data.  The factors $\mathcal R_s$ and the coefficients $\kappa_2,\kappa_3$ are defined in \eqref{eq:Rs-factor-section6}, \eqref{eq:kappa2-section6}, and \eqref{eq:kappa3-section6}.}
\label{tab:low-rank-sector-data}
\end{table}

For later use we isolate a rank-one factor which appears in the finite-$N$ Macdonald norms.  For $s\geq 1$ and $d\geq 0$ set
\begin{equation}
  \mathcal R_s(d;q,\mathfrak t)
  =\frac{(q^d\mathfrak t^s;q)_\infty(q^{d+1}\mathfrak t^s;q)_\infty}
        {(q^d\mathfrak t^{s+1};q)_\infty(q^{d+1}\mathfrak t^{s-1};q)_\infty}
   \frac{(\mathfrak t^{s+1};q)_\infty(q\mathfrak t^{s-1};q)_\infty}
        {(\mathfrak t^s;q)_\infty(q\mathfrak t^s;q)_\infty}.
  \label{eq:Rs-factor-section6}
\end{equation}
Then the norm formula \eqref{eq:expanded-normalized-norm} may be written as
\begin{equation}
  \mathcal N^{(N)}_\mu(q,\mathfrak t)
  =\prod_{1\leq i<j\leq N}
  \mathcal R_{j-i}(\mu_i-\mu_j;q,\mathfrak t).
  \label{eq:norm-R-factorization-section6}
\end{equation}
For $s=1$ this reduces to a finite product,
\begin{equation}
  \mathcal R_1(d;q,\mathfrak t)
  =\frac{(q;q)_d(\mathfrak t^2;q)_d}
        {(\mathfrak t;q)_d(q\mathfrak t;q)_d},
  \qquad d\geq0,
  \label{eq:R1-finite-product-section6}
\end{equation}
where $(a;q)_d=\prod_{m=0}^{d-1}(1-aq^m)$.  Equation \eqref{eq:R1-finite-product-section6} is obtained by cancelling the tails of the four infinite products in \eqref{eq:Rs-factor-section6}.  In particular
\begin{equation}
  \mathcal R_1(0;q,\mathfrak t)=1,
  \qquad
  \mathcal R_1(1;q,\mathfrak t)=\frac{(1-q)(1+\mathfrak t)}{1-q\mathfrak t}.
  \label{eq:R1-first-values-section6}
\end{equation}
The $U(2)$ norm used in Section~\ref{sec:wilson-side} is
\begin{equation}
  \mathcal M^{(2)}_d(q,\mathfrak t)=\mathcal R_1(d;q,\mathfrak t).
  \label{eq:M2-as-R1-section6}
\end{equation}

\subsection{The first non-minuscule charge in rank two}
\label{subsec:U2-r2-check-section6}

The charge $B=(2,0)$ in $U(2)$ is the smallest case in which bubbling can occur.  The dominance order gives two possible effective charges,
\begin{equation}
  (2,0)\longrightarrow (2,0),
  \qquad
  (2,0)\longrightarrow (1,1).
  \label{eq:U2-r2-sector-map-section6}
\end{equation}
The corresponding dominant shift orbits give a two-sector, rather than a two-shift, decomposition of
\begin{equation}
  \mathsf H^{\mathrm{DAHA}}_{(2)}=\mathcal D_1^2-\mathcal D_2.
  \label{eq:U2-r2-D-expression-section6}
\end{equation}
Since $\mathcal D_2=\mathfrak t T_{q,x_1}T_{q,x_2}$ in two variables, \eqref{eq:U2-r2-D-expression-section6} contains the top shifts $T_{q,x_1}^2,T_{q,x_2}^2$ and the lower shift $T_{q,x_1}T_{q,x_2}$.  With $y=x_1/x_2$ and $A_i$ as in \eqref{eq:U2-A1-A2-section4}, the lower coefficient is
\begin{equation}
  \mathsf C^{\mathrm{DAHA}}_{2,(1,1)}(y;q,\mathfrak t)
  =A_1(\mathbf x)A_2(qx_1,x_2)
   +A_2(\mathbf x)A_1(x_1,qx_2)-\mathfrak t.
  \label{eq:U2-r2-lower-coeff-short-section6}
\end{equation}
The first two terms arise from the two ordered products in $\mathcal D_1^2$, while the last term comes from $-\mathcal D_2$.  The resolved convolution $T^*\mathbb P^1$ has two torus fixed points, so the subtraction removes an IC summand rather than adding another fixed-point residue.  The calculation of Section~\ref{sec:boundary-bubbling} reproduces the first two terms from the two minuscule-convolution residues and the last term from subtracting the determinant intersection-cohomology/Satake summand.

The final Wilson two-point function follows from the spectral decomposition.  Put
\begin{equation}
  \kappa_2(q,\mathfrak t)=\frac{\mathfrak t-q}{1-q\mathfrak t}.
  \label{eq:kappa2-section6}
\end{equation}
Then \eqref{eq:U2-r2-wilson} gives the exact equality
\begin{equation}
  \widehat{\mathbb I}^{\mathrm{Neu},U(2)}_{(2,0)}(q,\mathfrak t)
  =\mathcal R_1(2;q,\mathfrak t)+\kappa_2(q,\mathfrak t)^2
  .
  \label{eq:U2-r2-exact-check-section6}
\end{equation}
The two summands in \eqref{eq:U2-r2-exact-check-section6} are the spectral contributions of $\mu=(2)$ and $\mu=(1,1)$; they are not the unscreened and determinant shift sectors.  Using \eqref{eq:R1-finite-product-section6}, the answer can be written as a single rational function,
\begin{equation}
  \widehat{\mathbb I}^{\mathrm{Neu},U(2)}_{(2,0)}
  =\frac{1+\mathfrak t+\mathfrak t^2-q(1+2\mathfrak t)
     -q^2(\mathfrak t^2+2\mathfrak t)+q^3(1+\mathfrak t+\mathfrak t^2)}
    {(1-q\mathfrak t)(1-q^2\mathfrak t)}.
  \label{eq:U2-r2-rational-section6}
\end{equation}
At $\mathfrak t=q$ the rational expression \eqref{eq:U2-r2-rational-section6} gives one.  At $q=0$ it gives
\begin{equation}
  \widehat{\mathbb I}^{\mathrm{Neu},U(2)}_{(2,0)}(0,\mathfrak t)=1+\mathfrak t+\mathfrak t^2,
  \label{eq:U2-r2-HL-value-section6}
\end{equation}
which is the Hall--Littlewood specialization of the spectral formula.

The role of the lower shift is visible on eigenstates.  For a determinant-reduced Macdonald label $\mu=(d,0)$, the full operator has eigenvalue
\begin{equation}
  h_2(q^d\mathfrak t,1)=q^{2d}\mathfrak t^2+q^d\mathfrak t+1.
  \label{eq:U2-r2-eigenvalue-section6}
\end{equation}
\subsection{The next rank-two charge}
\label{subsec:U2-r3-check-section6}

For $B=(3,0)$ the possible dominant effective charges are
\begin{equation}
  (3,0)\longrightarrow (3,0),
  \qquad
  (3,0)\longrightarrow (2,1).
  \label{eq:U2-r3-sector-map-section6}
\end{equation}
In two variables $\mathcal D_3$ vanishes, and the one-row DAHA operator is
\begin{equation}
  \mathsf H^{\mathrm{DAHA}}_{(3)}
  =\mathcal D_1^3-2\mathcal D_1\mathcal D_2.
  \label{eq:U2-r3-D-expression-section6}
\end{equation}
The coefficient of the top shift $T_{q,x_i}^3$ is
\begin{equation}
  \prod_{m=0}^{2}A_i(x_1,\ldots,q^m x_i,\ldots,x_N),
  \label{eq:U2-r3-top-coeff-section6}
\end{equation}
which is the specialization of \eqref{eq:top-sector-coeff}.  The lower coefficient of $T_{q,x_1}^2T_{q,x_2}$ is
\begin{align}
  \mathsf C^{\mathrm{DAHA}}_{3,(2,1)}(x_1,x_2;q,\mathfrak t)
  ={}&A_1(\mathbf x)A_1(qx_1,x_2)A_2(q^2x_1,x_2)
  \nonumber\\
  &+A_1(\mathbf x)A_2(qx_1,x_2)A_1(qx_1,qx_2)
  \nonumber\\
  &+A_2(\mathbf x)A_1(x_1,qx_2)A_1(qx_1,qx_2)
  -2\mathfrak t A_1(\mathbf x),
  \label{eq:U2-r3-lower-coeff-section6}
\end{align}
with the coefficient of $T_{q,x_1}T_{q,x_2}^2$ obtained by interchanging $x_1$ and $x_2$.  The three positive terms arise from ordered products in $\mathcal D_1^3$, whereas the subtraction comes from $-2\mathcal D_1\mathcal D_2$.  They are algebraic paths in the operator product and are not a list of fixed points of a handsaw quotient.

Let
\begin{equation}
  \kappa_3(q,\mathfrak t)=\frac{(1+q)(\mathfrak t-q)}{1-q^2\mathfrak t}.
  \label{eq:kappa3-section6}
\end{equation}
The exact Wilson pairing is
\begin{equation}
  \widehat{\mathbb I}^{\mathrm{Neu},U(2)}_{(3,0)}(q,\mathfrak t)
  =\mathcal R_1(3;q,\mathfrak t)+\kappa_3(q,\mathfrak t)^2\mathcal R_1(1;q,\mathfrak t)
  .
  \label{eq:U2-r3-exact-check-section6}
\end{equation}
The Macdonald spectral channels are $\mu=(3)$ and $\mu=(2,1)$.  The finite product form is
\begin{equation}
  \mathcal R_1(3;q,\mathfrak t)
  =\frac{(q;q)_3(\mathfrak t^2;q)_3}
        {(\mathfrak t;q)_3(q\mathfrak t;q)_3}.
  \label{eq:R13-section6}
\end{equation}
At $\mathfrak t=q$ the lower spectral coefficient $\kappa_3$ vanishes and $\mathcal R_1(3;q,q)=1$, so the normalized Wilson pairing is one.  At $q=0$ one obtains
\begin{equation}
  \widehat{\mathbb I}^{\mathrm{Neu},U(2)}_{(3,0)}(0,\mathfrak t)
  =(1+\mathfrak t)(1+\mathfrak t^2),
  \label{eq:U2-r3-HL-value-section6}
\end{equation}
which is the Hall--Littlewood limit of the spectral sum over the two partitions of three with at most two parts.

The determinant-translation identity of Section~\ref{subsec:U2-corollary-daha} gives
\begin{equation}
  \widehat{\mathbb I}^{\mathrm{Neu},U(2)}_{(a,b)}(q,\mathfrak t)
  =\widehat{\mathbb I}^{\mathrm{Neu},U(2)}_{(a-b,0)}(q,\mathfrak t),
  \qquad a\geq b.
  \label{eq:U2-determinant-check-section6}
\end{equation}
The determinant factorization gives the same identity for the oriented DAHA
magnetic pairing.

\subsection{Rank-three spectral check}
\label{subsec:U3-r2-check-section6}

For $G=U(3)$ and $B=(2,0,0)$ the possible dominant effective charges are
\begin{equation}
  (2,0,0)\longrightarrow (2,0,0),
  \qquad
  (2,0,0)\longrightarrow (1,1,0).
  \label{eq:U3-r2-sector-map-section6}
\end{equation}
The lower DAHA shift orbit is non-central.  The spectral label $\mu=(1,1,0)$ independently has the non-trivial finite-$N$ norm displayed below.

The norm factorization \eqref{eq:norm-R-factorization-section6} gives
\begin{equation}
  \mathcal N^{(3)}_{(a,b,0)}(q,\mathfrak t)
  =\mathcal R_1(a-b;q,\mathfrak t)
   \mathcal R_2(a;q,\mathfrak t)
   \mathcal R_1(b;q,\mathfrak t).
  \label{eq:U3-norm-factorization-section6}
\end{equation}
Therefore
\begin{align}
  \mathcal N^{(3)}_{(2,0,0)}
  &=\mathcal R_1(2)\mathcal R_2(2),
  \nonumber\\
  \mathcal N^{(3)}_{(1,1,0)}
  &=\mathcal R_2(1)\mathcal R_1(1),
  \label{eq:U3-r2-two-norms-section6}
\end{align}
where the arguments $(q,\mathfrak t)$ have been suppressed.  Since the Schur--Macdonald transition coefficient of the spectral label $(1,1,0)$ is again $\kappa_2$, the exact Wilson answer is
\begin{equation}
  \widehat{\mathbb I}^{\mathrm{Neu},U(3)}_{(2,0,0)}(q,\mathfrak t)
  =\mathcal R_1(2)\mathcal R_2(2)
   +\kappa_2^2\mathcal R_2(1)\mathcal R_1(1)
  .
  \label{eq:U3-r2-exact-check-section6}
\end{equation}
The transition coefficient is unchanged from rank two because the electric representation is still $h_2$.  The accompanying norm differs because it is the $N=3$ norm of the spectral eigenfunction $P_{(1,1,0)}$.

For reference we also record the degree-three rank-three formula.  Put
\begin{equation}
  \lambda_3(q,\mathfrak t)
  =\frac{(q-\mathfrak t)(q-\mathfrak t^2)}
        {(1-q\mathfrak t)(1-q\mathfrak t^2)}.
  \label{eq:lambda3-section6}
\end{equation}
Using \eqref{eq:h3-expansion} and \eqref{eq:U3-norm-factorization-section6}, one gets
\begin{align}
  \widehat{\mathbb I}^{\mathrm{Neu},U(3)}_{(3,0,0)}(q,\mathfrak t)
  ={}&\mathcal R_1(3)\mathcal R_2(3)
  +\kappa_3^2\mathcal R_1(1)^2\mathcal R_2(2)
  +\lambda_3^2.
  \label{eq:U3-r3-exact-check-section6}
\end{align}
The three terms are the spectral channels $\mu=(3,0,0),(2,1,0),(1,1,1)$.  The last Macdonald polynomial is a determinant translation and has norm one.

\subsection{Hall--Littlewood and Schur degenerations}
\label{subsec:limits-section6}

The exact expressions above have two simple degenerations.  First take the Schur specialization
\begin{equation}
  \mathfrak t=q.
  \label{eq:schur-specialization-section6}
\end{equation}
Then $P_\mu(\mathbf x;q,q)=s_\mu(\mathbf x)$ and the Macdonald density becomes the Weyl denominator.  The normalized pairing of every Schur character with its dual is one.  Since $h_r=s_{(r)}$, the expansion \eqref{eq:one-row-transition} contains only the top spectral label,
\begin{equation}
  Q_\mu[\mathbb A_{q,q}]=\delta_{\mu,(r)}.
  \label{eq:schur-transition-collapse-section6}
\end{equation}
Therefore
\begin{equation}
  \widehat{\mathbb I}^{\mathrm{Neu}}_{(r)}(q,q)=1.
  \label{eq:schur-index-one-section6}
\end{equation}
The electric Schur point $\mathfrak t=q$ maps to the physical Nahm coordinate
$q/\mathfrak t=1$.  Individual shift coefficients can remain non-zero at
this specialization; equation \eqref{eq:schur-index-one-section6} concerns
the complete paired operator.

The Hall--Littlewood limit is obtained by setting
\begin{equation}
  q=0
  \label{eq:HL-limit-section6}
\end{equation}
with $\mathfrak t$ fixed.  Macdonald polynomials become Hall--Littlewood polynomials.  From the product formula \eqref{eq:one-row-product-coeff},
\begin{equation}
  Q_\mu[\mathbb A_{0,\mathfrak t}]
  =\prod_{s\in\mu}\mathfrak t^{\ell'(s)}
  =\mathfrak t^{n(\mu)},
  \qquad
  n(\mu)=\sum_{i\geq1}(i-1)\mu_i.
  \label{eq:HL-one-row-transition-section6}
\end{equation}
The coefficient is the one-row Kostka--Foulkes coefficient in the convention of \eqref{eq:schur-macdonald-expansion}.  The Wilson answer becomes
\begin{equation}
  \widehat{\mathbb I}^{\mathrm{Neu}}_{(r)}(0,\mathfrak t)
  =\sum_{\substack{\mu\vdash r\\
  \ell(\mu)\leq N}}
  \mathfrak t^{2n(\mu)}\mathcal N^{(N)}_\mu(0,\mathfrak t).
  \label{eq:HL-one-row-index-section6}
\end{equation}
At $q=0$, $n(\mu)$ is the Kostka--Foulkes statistic of the Macdonald spectral label.  For $U(2)$ and $r=2,3$, and for $U(3)$ and $r=2,3$, the allowed partitions reproduce the finite-$N$ spectral sums displayed above.

\section{Algebraic DAHA operators and conjectural geometry}
\label{sec:hecke-outlook}

The spherical-DAHA construction gives an algebraic difference operator and its Macdonald spectrum.  Its interpretation as the physical regular Nahm-pole bubbling operator is a separate localization conjecture.  The affine-Grassmannian and Satake language below provides motivation for that conjecture.

\subsection{The spherical-DAHA operator}
\label{subsec:boundary-hecke-operator}

Let $\mathsf{SH}_{N}(q,\mathfrak t)$ denote the spherical DAHA acting in the symmetric Laurent-polynomial representation.  For a dominant $U(N)$ weight $\lambda$ define
\begin{equation}
  \mathsf H^{\mathrm{DAHA}}_\lambda
  =\mathbf e_+s_\lambda(Y)\mathbf e_+.
  \label{eq:section7-boundary-Hecke-operator}
\end{equation}
This is an algebraically defined element of the spherical subalgebra.  With $Y_i=\tau^{N-1}Y_i^{(0)}$, the two torus orientations use $Y$ and $Y^{-1}$ with inverse homogeneous normalizations, as described in Section~\ref{subsec:one-row-dual-norm}.  The physical line transform is supplied by the oriented S-duality intertwiner and includes $u_R\mapsto u_R^{-1}$, equivalently $\mathfrak t\mapsto q/\mathfrak t$.

For $\lambda=(r)$ the operator has the exact expansion
\begin{equation}
  \mathsf H^{\mathrm{DAHA}}_{(r)}
  =\sum_{\substack{v\vdash r\\ \ell(v)\leq N}}
   \ \sum_{\gamma^+=v}
  \mathsf C^{\mathrm{DAHA}}_{r,\gamma}
  (\mathbf x;q,\mathfrak t)T_{q^\gamma}.
  \label{eq:section7-one-row-Hecke-expansion}
\end{equation}
This form retains every composition in a Weyl orbit and does not collapse an orbit to a single shift $T_{q^v}$.  The orbit $v=(r)$ is the top shift orbit.  The remaining orbits carry precisely the coweights allowed by the dominance condition for monopole screening and are therefore the candidate lower magnetic sectors.

The Macdonald spectral statement is
\begin{equation}
  \mathsf H^{\mathrm{DAHA}}_\lambda
  P_\mu(\mathbf x;q,\mathfrak t)
  =s_\lambda(\boldsymbol\xi_\mu)
  P_\mu(\mathbf x;q,\mathfrak t),
  \qquad
  \boldsymbol\xi_\mu
  =(q^{\mu_1}\mathfrak t^{N-1},\ldots,q^{\mu_N}).
  \label{eq:section7-Hecke-eigenkernel-one-row}
\end{equation}
Equation \eqref{eq:section7-Hecke-eigenkernel-one-row} is the discrete spectral theorem needed for the half-index.  The symmetric Cauchy kernel $\Pi_{q,\mathfrak t}(\mathbf x,\mathbf y)$ instead has its standard reproducing and intertwining roles; it is not a generic joint eigenfunction with eigenvalue $s_\lambda(\mathbf y)$.

The determinant translation is
\begin{equation}
  \mathsf C_N
  =Y_1\cdots Y_N
  =\mathcal D_N
  =\mathfrak t^{N(N-1)/2}T_{q^{\mathbf 1}}.
  \label{eq:section7-determinant-translation}
\end{equation}
For every integral $m$,
\begin{equation}
  \mathsf H^{\mathrm{DAHA}}_{\lambda+m\mathbf 1}
  =\mathsf C_N^m\mathsf H^{\mathrm{DAHA}}_\lambda,
  \qquad
  \mathsf H^{\mathrm{DAHA}}_{(\lambda+m\mathbf 1)^\vee}
  =\mathsf C_N^{-m}\mathsf H^{\mathrm{DAHA}}_{\lambda^\vee}.
  \label{eq:section7-central-twist}
\end{equation}
The inverse factors cancel in an oriented two-line insertion:
\begin{equation}
  \mathsf H^{\mathrm{DAHA}}_{(\lambda+m\mathbf 1)^\vee}
  \mathsf H^{\mathrm{DAHA}}_{\lambda+m\mathbf 1}
  =
  \mathsf H^{\mathrm{DAHA}}_{\lambda^\vee}
  \mathsf H^{\mathrm{DAHA}}_\lambda.
  \label{eq:section7-determinant-pair-cancellation}
\end{equation}
Although $\mathsf C_N$ commutes with the symmetric $Y$-subalgebra, it is not central in the full DAHA because it has non-trivial $q$-commutation relations with the $X_i$.

\subsection{Affine-Grassmannian motivation}
\label{subsec:strata-and-screening}

For a dominant cocharacter $\lambda$, the affine-Grassmannian Schubert variety has the standard stratification
\begin{equation}
  \overline{\mathrm{Gr}}^\lambda
  =\bigsqcup_{v\leq\lambda}\mathrm{Gr}^v,
  \qquad
  \lambda-v\in Q^\vee_+.
  \label{eq:section7-affine-Gr-stratification}
\end{equation}
The same dominance condition governs the possible screened magnetic charges.  This agreement explains why the shift orbits of the DAHA operator are naturally organized by the affine-Grassmannian closure order.

The agreement of partial orders does not, by itself, identify a boundary-bubbling moduli space with a transverse slice to $\mathrm{Gr}^v\subset\overline{\mathrm{Gr}}^\lambda$.  In particular, the one-sided handsaw quotient examined in Section~\ref{sec:boundary-bubbling} is insufficient already in the first non-minuscule example and is not identified with a holomorphic-symplectic slice.  Coulomb-branch constructions of affine-Grassmannian slices provide a possible framework for a future derivation \cite{Braverman:2016wma}.

Geometric Satake identifies representations of the Langlands-dual group with an appropriate convolution category on the affine Grassmannian \cite{Mirkovic:2007}.  At the level of its classical Grothendieck group, it supplies an unrefined representation-ring counterpart of the lambda-ring relation used in the one-row recursion.  This suggests comparing $s_\lambda(X)$ with a magnetic operator resolved by the order \eqref{eq:section7-affine-Gr-stratification}; related $K$-theoretic structures provide further motivation \cite{Cautis:2015aia}.  Here Satake supplies a conjectural geometric interpretation.  The refined DAHA derivation uses neither a Satake functor nor a categorical projector for arbitrary charge.

The Hall--Littlewood and Schur limits do not strengthen this geometric claim.  In the Hall--Littlewood limit the powers of $\mathfrak t$ in the Wilson answer are governed by the Kostka--Foulkes statistic $n(\mu)$ of a spectral label.  In the Schur limit the spectral expansion collapses to $\mu=(r)$.

\subsection{The localization conjecture}
\label{subsec:general-partitions-prescription}

For a dominant integral weight $\lambda=(\lambda_1,\ldots,\lambda_N)$, set
\begin{equation}
  m=\lambda_N,
  \qquad
  \bar\lambda=\lambda-m\mathbf1,
  \qquad
  \bar\lambda_N=0.
  \label{eq:section7-lambda-determinant-reduction}
\end{equation}
The determinant-reduced weight $\bar\lambda$ is a partition, and its DAHA operator has a unique shift expansion
\begin{equation}
  \mathsf H^{\mathrm{DAHA}}_{\bar\lambda}
  =\sum_{\substack{\gamma\in\mathbb Z_{\geq0}^N\\
                    |\gamma|=|\bar\lambda|}}
  \mathsf C^{\mathrm{DAHA}}_{\bar\lambda,\gamma}
  (\mathbf x;q,\mathfrak t)T_{q^\gamma}.
  \label{eq:section7-general-shift-coeff-definition}
\end{equation}
Equation \eqref{eq:section7-general-shift-coeff-definition} defines the coefficients that enter the conjecture.  The operator for the original integral weight is restored by
\begin{equation}
  \mathsf H^{\mathrm{DAHA}}_\lambda
  =\mathsf C_N^m\mathsf H^{\mathrm{DAHA}}_{\bar\lambda};
  \label{eq:section7-general-determinant-restoration}
\end{equation}
equivalently, the shifts in \eqref{eq:section7-general-shift-coeff-definition} are translated by $m\mathbf1$, together with the scalar prefactor contained in $\mathsf C_N^m$.

For a physical boundary-localization operator, let $\widehat{\mathcal T}^{\partial,\mathrm{loc}}_{\bar\lambda\to v}$ denote the complete contribution of determinant-reduced effective charge $v$, including the chosen unbubbled operator.  The natural comparison with the DAHA operator is sector-wise only after every composition in a Weyl orbit has been retained.

\begin{conjecture}[Boundary localization]
\label{conj:general-boundary-Hecke}
For every determinant-reduced dominant magnetic charge $\bar\lambda$ and every allowed dominant effective charge $v\leq\bar\lambda$ with $|v|=|\bar\lambda|$,
\begin{equation}
  \widehat{\mathcal T}^{\partial,\mathrm{loc}}_{\bar\lambda\to v}
  =\sum_{\gamma^+=v}
  \mathsf C^{\mathrm{DAHA}}_{\bar\lambda,\gamma}
  (\mathbf x;q,\mathfrak t)T_{q^\gamma},
  \label{eq:section7-general-coeff-conjecture}
\end{equation}
after the same vacuum and unbubbled normalization is used on both sides.  Equivalently,
\begin{equation}
  \widehat{\mathcal T}^{\partial,\mathrm{loc}}_{\bar\lambda}
  =\mathbf e_+s_{\bar\lambda}(Y)\mathbf e_+,
  \label{eq:section7-general-formal-operator}
\end{equation}
and an arbitrary dominant integral charge is restored by
\begin{equation}
  \widehat{\mathcal T}^{\partial,\mathrm{loc}}_\lambda
  =\mathsf C_N^m
   \widehat{\mathcal T}^{\partial,\mathrm{loc}}_{\bar\lambda}.
  \label{eq:section7-general-physical-determinant-restoration}
\end{equation}
\end{conjecture}

The normalization $\mathsf Z^\partial_{\bar\lambda,\bar\lambda}=1$ refers to the residual bubbling factor after the non-trivial unbubbled top-shift coefficient has been extracted.  It must not be equated directly with the full composition coefficient $\mathsf C^{\mathrm{DAHA}}_{\bar\lambda,\gamma}$; the corresponding statement for $\lambda$ is obtained through the determinant restoration \eqref{eq:section7-general-physical-determinant-restoration}.

For $U(2)$ and $B=(2,0)\to v=(1,1)$, the local two-step convolution gives the two JK residues $R_{12}$ and $R_{21}$, while subtracting the determinant IC/Satake summand contributes $-\mathfrak t$.  This reproduces the complete DAHA coefficient $R_{12}+R_{21}-\mathfrak t$ independently of the one-row inverse recursion.  For the $U(3)$ representative $(2,0,0)\to(1,1,0)$, both residues and the antisymmetric subtraction acquire the common spectator factor $\mathsf S_{12|3}$, giving
\begin{equation}
 \mathsf C^{\mathrm{DAHA},U(3)}_{2,(1,1,0)}
 =\mathsf S_{12|3}\mathsf C^{\mathrm{DAHA},U(2)}_{2,(1,1)}.
\end{equation}
These are the first two ambient-rank realizations of the same transverse
$A_1$ convolution and provide local evidence for
Conjecture~\ref{conj:general-boundary-Hecke}.

Combined with the boundary-adapted pairing and oriented S-duality intertwiner of Section~\ref{subsec:one-row-dual-norm}, the localization conjecture predicts
\begin{equation}
  \widehat{\mathbb I}^{\Nahm}_{\lambda}(q,q/\mathfrak t)
  =\widehat{\mathbb I}^{\Neu}_{\lambda}(q,\mathfrak t).
  \label{eq:section7-general-S-duality-conjecture}
\end{equation}
Equation \eqref{eq:section7-general-S-duality-conjecture} is therefore a conditional consequence of two separate inputs: the localization conjecture for the magnetic operator and the isometric oriented Fourier relation between the two boundary pairings.

\subsection{Further directions}
\label{subsec:further-directions}

The first open problem is a first-principles derivation of the boundary SQM.  Such a derivation must specify the microscopic multiplets, their gauge and flavor characters, the equivariant charges, the stability condition, the JK chamber, and possible contributions from non-compact directions.  It must reproduce the first non-minuscule lower coefficient without using the DAHA identity as an input.

A second problem is to determine whether the resulting localization space admits a direct description as an affine-Grassmannian transverse slice or an associated derived intersection.  Only after that identification, together with the relevant equivariant refinement, is established would a precise $K$-theoretic Satake interpretation of the coefficients be available.  In particular, the classical Satake Grothendieck-group analogy does not by itself provide the $q,\mathfrak t$-refined all-charge complex.  Extensions to other gauge groups and to interfaces can then be formulated using the corresponding root-system DAHA and interface kernels.

\appendix
\makeatletter
\renewcommand{\theHequation}{appendix.\thesection.\arabic{equation}}
\makeatother

\section{Special-function and Macdonald conventions}
\label{app:conventions-special-functions}

This appendix records the conventions needed to compare the finite-$N$ Wilson pairing, the spherical-DAHA operator, and the low-rank formulae.  We take $|q|<1$ for convergent products; all identities also admit a formal interpretation in a completed Laurent ring.

\subsection{Shifted products and plethystic notation}
\label{subsec:appA-q-products}

For $n\geq0$,
\begin{equation}
  (z;q)_n=\prod_{m=0}^{n-1}(1-zq^m),
  \qquad
  (z;q)_\infty=\prod_{m=0}^{\infty}(1-zq^m),
  \qquad
  (z;q)_0=1.
  \label{eq:appA-q-pochhammer}
\end{equation}
The identities
\begin{equation}
  (z;q)_n=\frac{(z;q)_\infty}{(zq^n;q)_\infty},
  \qquad
  \frac{(az;q)_\infty}{(bz;q)_\infty}
  =
  \operatorname{Exp}\!\left[
  \sum_{m\geq1}\frac{b^m-a^m}{1-q^m}\frac{z^m}{m}
  \right]
  \label{eq:appA-q-product-ratio}
\end{equation}
fix our plethystic convention.  For a virtual alphabet $\mathbb A$,
\begin{equation}
  \operatorname{Exp}[\mathbb A]
  =
  \exp\!\left(\sum_{m\geq1}\frac{p_m[\mathbb A]}{m}\right),
  \qquad
  p_m\!\left[\frac{1-q}{1-\mathfrak t}\right]
  =\frac{1-q^m}{1-\mathfrak t^m}.
  \label{eq:appA-plethystic-exp}
\end{equation}
In particular,
\begin{equation}
  \Delta_N(\mathbf x;q,\mathfrak t)
  =
  \operatorname{Exp}\!\left[
  -\sum_{m\geq1}\frac{1-\mathfrak t^m}{1-q^m}
  \frac{p_m(\mathbf x)p_m(\mathbf x^{-1})-N}{m}
  \right].
  \label{eq:appA-vector-PE}
\end{equation}

\subsection{The finite-\texorpdfstring{$N$}{N} pairing and determinant direction}
\label{subsec:appA-constant-term-pairing}

The normalized bilinear pairing is
\begin{equation}
  \langle f,g\rangle_{N;q,\mathfrak t}
  =
  \frac{
  \operatorname{CT}_{\mathbf x}
  \left[\Delta_N(\mathbf x;q,\mathfrak t)
  f(\mathbf x)g(\mathbf x^{-1})\right]}
  {\operatorname{CT}_{\mathbf x}
  \Delta_N(\mathbf x;q,\mathfrak t)}.
  \label{eq:appA-normalized-pairing}
\end{equation}
The common factor $1/N!$ has cancelled between numerator and denominator.  If
\begin{equation}
  d_m(\mathbf x)=(x_1\cdots x_N)^m,
  \label{eq:appA-determinant-monomial}
\end{equation}
then
\begin{equation}
  \langle d_m f,d_m g\rangle_{N;q,\mathfrak t}
  =\langle f,g\rangle_{N;q,\mathfrak t}.
  \label{eq:appA-center-pairing-invariance}
\end{equation}
For a dominant Laurent weight $\lambda$,
\begin{equation}
  s_\lambda(\mathbf x)
  =d_{\lambda_N}(\mathbf x)
  s_{\lambda-\lambda_N\mathbf1}(\mathbf x),
  \qquad
  s_{\lambda^\vee}(\mathbf x)=s_\lambda(\mathbf x^{-1}).
  \label{eq:appA-laurent-schur}
\end{equation}
It follows directly that
\begin{equation}
  \widehat{\mathbb I}^{\Neu}_{\lambda+m\mathbf1}
  =\widehat{\mathbb I}^{\Neu}_\lambda.
  \label{eq:appA-electric-determinant-invariance}
\end{equation}
The magnetic determinant reduction is instead implemented by the inverse pair of translations $\mathsf C_N^{m}$ and $\mathsf C_N^{-m}$ in the two oriented line operators; see \eqref{eq:section2-oriented-determinant-lines}.

\subsection{Macdonald polynomials, norms, and transition coefficients}
\label{subsec:appA-macdonald-polynomials}

The monic Macdonald polynomial is characterized by
\begin{equation}
  P_\mu=m_\mu+\sum_{\nu<\mu}u_{\mu\nu}(q,\mathfrak t)m_\nu
  \label{eq:appA-monic-triangularity}
\end{equation}
and orthogonality for \eqref{eq:appA-normalized-pairing}.  Its determinant extension is
\begin{equation}
  P_{\mu+m\mathbf1}(\mathbf x;q,\mathfrak t)
  =d_m(\mathbf x)P_\mu(\mathbf x;q,\mathfrak t).
  \label{eq:appA-determinant-shift-P}
\end{equation}
The finite-$N$ norm is
\begin{equation}
  \langle P_\mu,P_\nu\rangle_{N;q,\mathfrak t}
  =\delta_{\mu\nu}\mathcal N_\mu^{(N)},
  \qquad
  \mathcal N_\mu^{(N)}
  =\frac{D_N(\mu;q,\mathfrak t)}{D_N(0;q,\mathfrak t)},
  \label{eq:appA-macdonald-orthogonality}
\end{equation}
where
\begin{equation}
  D_N(\mu;q,\mathfrak t)
  =
  \prod_{1\leq i<j\leq N}
  \frac{(q^{\mu_i-\mu_j}\mathfrak t^{j-i};q)_\infty
        (q^{\mu_i-\mu_j+1}\mathfrak t^{j-i};q)_\infty}
       {(q^{\mu_i-\mu_j}\mathfrak t^{j-i+1};q)_\infty
        (q^{\mu_i-\mu_j+1}\mathfrak t^{j-i-1};q)_\infty}.
  \label{eq:appA-DN}
\end{equation}
Thus $\mathcal N_{\mu+m\mathbf1}^{(N)}=\mathcal N_\mu^{(N)}$.

For a box $s=(i,j)$ in $\mu$,
\begin{equation}
  a(s)=\mu_i-j,\quad
  \ell(s)=\mu'_j-i,\quad
  a'(s)=j-1,\quad
  \ell'(s)=i-1.
  \label{eq:appA-arm-leg}
\end{equation}
The stable dual polynomial is
\begin{equation}
  Q_\mu=b_\mu P_\mu,
  \qquad
  b_\mu(q,\mathfrak t)
  =\prod_{s\in\mu}
  \frac{1-q^{a(s)}\mathfrak t^{\ell(s)+1}}
       {1-q^{a(s)+1}\mathfrak t^{\ell(s)}}.
  \label{eq:appA-Q-polynomial}
\end{equation}
This stable dual normalization is distinct from the finite-$N$ norm $\mathcal N_\mu^{(N)}$.

The Cauchy identity is
\begin{equation}
  \Pi_{q,\mathfrak t}(\mathbf x,\mathbf y)
  =
  \sum_\mu P_\mu(\mathbf x)Q_\mu(\mathbf y)
  =
  \operatorname{Exp}\!\left[
  \sum_{m\geq1}
  \frac{1-\mathfrak t^m}{1-q^m}
  \frac{p_m(\mathbf x)p_m(\mathbf y)}{m}
  \right].
  \label{eq:appA-cauchy-kernel}
\end{equation}
It is used only as a reproducing identity and to obtain transition coefficients; it is not treated as a generic joint eigenfunction of the one-row DAHA operators.

Define
\begin{equation}
  s_\lambda(\mathbf x)
  =\sum_{\mu\preceq\lambda}
  \mathsf K_{\lambda\mu}^{(N)}
  P_\mu(\mathbf x;q,\mathfrak t).
  \label{eq:appA-transition-matrix}
\end{equation}
Then
\begin{equation}
  \widehat{\mathbb I}^{\Neu}_\lambda
  =
  \sum_{\mu\preceq\lambda}
  \left(\mathsf K_{\lambda\mu}^{(N)}\right)^2
  \mathcal N_\mu^{(N)}.
  \label{eq:appA-wilson-spectral}
\end{equation}
For $\lambda=(r)$, specialize \eqref{eq:appA-cauchy-kernel} by
$p_m(\mathbf y)=u^m(1-q^m)/(1-\mathfrak t^m)$ to obtain
\begin{equation}
  h_r(\mathbf x)
  =
  \sum_{\substack{\mu\vdash r\\\ell(\mu)\leq N}}
  Q_\mu\!\left[\frac{1-q}{1-\mathfrak t}\right]
  P_\mu(\mathbf x;q,\mathfrak t),
  \label{eq:appA-one-row-transition}
\end{equation}
where
\begin{equation}
  Q_\mu\!\left[\frac{1-q}{1-\mathfrak t}\right]
  =
  \prod_{s\in\mu}
  \frac{\mathfrak t^{\ell'(s)}-q^{a'(s)+1}}
       {1-q^{a(s)+1}\mathfrak t^{\ell(s)}}.
  \label{eq:appA-one-row-product}
\end{equation}

\subsection{Macdonald--Ruijsenaars operators}
\label{subsec:appA-difference-center}

For $\gamma\in\mathbb Z^N$ let
\begin{equation}
  T_{q^\gamma}=\prod_{i=1}^N T_{q,x_i}^{\gamma_i},
  \qquad
  T_{q,x_i}f(\ldots,x_i,\ldots)
  =f(\ldots,qx_i,\ldots).
  \label{eq:appA-shift-operators}
\end{equation}
The elementary commuting operators are
\begin{equation}
  \mathcal D_k
  =
  \mathfrak t^{k(k-1)/2}
  \sum_{\substack{I\subset\{1,\ldots,N\}\\|I|=k}}
  \prod_{\substack{i\in I\\j\notin I}}
  \frac{\mathfrak t x_i-x_j}{x_i-x_j}
  \prod_{i\in I}T_{q,x_i},
  \qquad
  0\leq k\leq N,
  \label{eq:appA-Dk}
\end{equation}
and $\mathcal D_k=0$ for $k>N$.  For
\begin{equation}
  \boldsymbol\xi_\mu
  =(q^{\mu_1}\mathfrak t^{N-1},\ldots,q^{\mu_N}),
  \label{eq:appA-spectral-point}
\end{equation}
one has
\begin{equation}
  \mathcal D_kP_\mu=e_k(\boldsymbol\xi_\mu)P_\mu.
  \label{eq:appA-Dk-eigenvalue}
\end{equation}
The one-row operator is equivalently characterized by
\begin{equation}
  \sum_{r\geq0}u^r\mathsf H^{\mathrm{DAHA}}_{(r)}
  =
  \left(\sum_{k=0}^{N}(-u)^k\mathcal D_k\right)^{-1}.
  \label{eq:appA-one-row-D-generating}
\end{equation}
The determinant translation is
\begin{equation}
  \mathsf C_N
  =Y_1\cdots Y_N
  =\mathcal D_N
  =\mathfrak t^{N(N-1)/2}T_{q^{\mathbf1}}.
  \label{eq:appA-central-shift}
\end{equation}
It is a coweight translation and is not central in the full DAHA.

\section{Algebraic DAHA coefficients}
\label{app:boundary-bubbling-fixed-point}

We collect low-degree coefficients of the exact algebraic operator \eqref{eq:appA-one-row-D-generating}.  These expressions are useful for comparison with a microscopic localization calculation, but the alternating summands in the lambda-ring recursion are composition paths.  The two contour poles in the local $U(2)$ convolution calculation of Section~\ref{subsec:rank-two-independent-residue}, together with their spectator-dressed $U(3)$ lift, are the residue contributions used below.

\subsection{Composition recursion}
\label{subsec:appB-agreement-daha}

For $I\subset\{1,\ldots,N\}$ put
\begin{equation}
  \varepsilon_I=\sum_{i\in I}\varepsilon_i,
  \qquad
  \mathsf A_I(\mathbf x)
  =
  \mathfrak t^{|I|(|I|-1)/2}
  \prod_{\substack{i\in I\\j\notin I}}
  \frac{\mathfrak t x_i-x_j}{x_i-x_j}.
  \label{eq:appB-AI}
\end{equation}
Write
\begin{equation}
  \mathsf H^{\mathrm{DAHA}}_{(r)}
  =
  \sum_{|\gamma|=r}
  \mathsf C^{\mathrm{DAHA}}_{r,\gamma}(\mathbf x)
  T_{q^\gamma},
  \qquad
  \mathsf C^{\mathrm{DAHA}}_{0,0}=1.
  \label{eq:appB-fp-generating-function}
\end{equation}
The multiplication rule
\begin{equation}
  F(\mathbf x)T_{q^\alpha}\,
  G(\mathbf x)T_{q^\beta}
  =
  F(\mathbf x)G(q^\alpha\mathbf x)
  T_{q^{\alpha+\beta}}
  \label{eq:appB-difference-multiplication}
\end{equation}
and \eqref{eq:appA-one-row-D-generating} give
\begin{equation}
  \mathsf C^{\mathrm{DAHA}}_{r,\gamma}(\mathbf x)
  =
  \sum_{\substack{\varnothing\neq I\subset\{1,\ldots,N\}\\
  \varepsilon_I\leq\gamma}}
  (-1)^{|I|+1}
  \mathsf A_I(\mathbf x)
  \mathsf C^{\mathrm{DAHA}}_{r-|I|,\gamma-\varepsilon_I}
  (q^{\varepsilon_I}\mathbf x).
  \label{eq:appB-projector-recursion}
\end{equation}
At each step \eqref{eq:appB-projector-recursion} removes the first exterior-power layer.  Iterating it gives ordered composition paths $(I_1,\ldots,I_\ell)$ satisfying
$\varepsilon_{I_1}+\cdots+\varepsilon_{I_\ell}=\gamma$.  The path weight is
\begin{equation}
  \prod_{a=1}^{\ell}
  (-1)^{|I_a|+1}
  \mathsf A_{I_a}\!\left(q^{\sum_{b<a}\varepsilon_{I_b}}\mathbf x\right).
  \label{eq:appB-path-weight}
\end{equation}
The number of such paths need not equal the number of torus fixed points of any proposed moduli space.

For the top composition $r\varepsilon_i$ only singleton subsets occur, and
\begin{equation}
  \mathsf C^{\mathrm{DAHA}}_{r,r\varepsilon_i}
  =
  \prod_{m=0}^{r-1}\prod_{j\neq i}
  \frac{\mathfrak t q^m x_i-x_j}{q^m x_i-x_j}.
  \label{eq:appB-top-product}
\end{equation}
This is a full unbubbled shift coefficient; it is not the pure bubbling factor, whose top normalization is one.

\subsection{Rank-two coefficients}
\label{subsec:appB-low-rank-coefficients}

For $U(2)$ define
\begin{equation}
  A_1(\mathbf x)=\frac{\mathfrak t x_1-x_2}{x_1-x_2},
  \qquad
  A_2(\mathbf x)=\frac{\mathfrak t x_2-x_1}{x_2-x_1}.
  \label{eq:appB-U2-A1A2}
\end{equation}
At degree two,
\begin{align}
  \mathsf C^{\mathrm{DAHA}}_{2,(2,0)}
  &=
  A_1(\mathbf x)A_1(qx_1,x_2),
  \label{eq:appB-U2-r2-top}
  \\
  \mathsf C^{\mathrm{DAHA}}_{2,(1,1)}
  &=
  A_1(\mathbf x)A_2(qx_1,x_2)
  +A_2(\mathbf x)A_1(x_1,qx_2)
  -\mathfrak t.
  \label{eq:appB-U2-r2-det-layer}
\end{align}
The first two summands in \eqref{eq:appB-U2-r2-det-layer} are the contour residues $R_{12}$ and $R_{21}$ of the local resolved-convolution model; the last is the localized Tate-twisted zero-section subtraction.

At degree three the coefficient of the representative lower shift $(2,1)$ is
\begin{equation}
\begin{split}
  \mathsf C^{\mathrm{DAHA}}_{3,(2,1)}
  ={}&
  A_1(\mathbf x)A_1(qx_1,x_2)A_2(q^2x_1,x_2)
  \\
  &+
  A_1(\mathbf x)A_2(qx_1,x_2)A_1(qx_1,qx_2)
  \\
  &+
  A_2(\mathbf x)A_1(x_1,qx_2)A_1(qx_1,qx_2)
  \\
  &-\mathfrak t A_1(\mathbf x)
  -\mathfrak t A_1(qx_1,qx_2).
\end{split}
  \label{eq:appB-U2-r3-lower}
\end{equation}
Together with the Weyl-reflected coefficients, these formulae reproduce
\begin{equation}
  \mathsf H^{\mathrm{DAHA}}_{(2)}
  =\mathcal D_1^2-\mathcal D_2,
  \qquad
  \mathsf H^{\mathrm{DAHA}}_{(3)}
  =\mathcal D_1^3-2\mathcal D_1\mathcal D_2.
  \label{eq:appB-U2-low-degree-D}
\end{equation}

\subsection{The first rank-three lower coefficient}
\label{subsec:appB-rank-three-coefficients}

For $U(3)$ and the representative shift $\gamma=(1,1,0)$,
\begin{equation}
\begin{split}
  \mathsf C^{\mathrm{DAHA}}_{2,(1,1,0)}
  ={}&
  \mathsf A_{\{1\}}^{(0)}(\mathbf x)
  \mathsf A_{\{2\}}^{(0)}(qx_1,x_2,x_3)
  \\
  &+
  \mathsf A_{\{2\}}^{(0)}(\mathbf x)
  \mathsf A_{\{1\}}^{(0)}(x_1,qx_2,x_3)
  -
  \mathfrak t\,\mathsf A_{\{1,2\}}^{(0)}(\mathbf x),
\end{split}
  \label{eq:appB-U3-r2-lower}
\end{equation}
where
\begin{equation}
	  \mathsf A_I^{(0)}(\mathbf x)
	  =
	  \prod_{\substack{i\in I\\j\notin I}}
	  \frac{\mathfrak t x_i-x_j}{x_i-x_j},
	  \qquad
	  \mathsf A_I
	  =\mathfrak t^{|I|(|I|-1)/2}\mathsf A_I^{(0)}.
  \label{eq:appB-AI-zero}
\end{equation}
Equation \eqref{eq:appB-U3-r2-lower} is an algebraic coefficient of
$\mathcal D_1^2-\mathcal D_2$.  With
\begin{equation}
 \mathsf S_{12|3}
 =\mathsf A_{\{1,2\}}^{(0)}
 =\prod_{a=1}^{2}\frac{\mathfrak t x_a-x_3}{x_a-x_3},
\end{equation}
its three terms factorize as
\begin{equation}
 \mathsf C^{\mathrm{DAHA},U(3)}_{2,(1,1,0)}
 =\mathsf S_{12|3}
  \mathsf C^{\mathrm{DAHA},U(2)}_{2,(1,1)}.
 \label{eq:appB-U3-U2-factorization}
\end{equation}
This is the spectator-root lift of the transverse $A_1$ calculation in
Section~\ref{subsec:U3-spectator-lift-section4}; the other shift
representatives follow by Weyl permutation.

\section{Finite-\texorpdfstring{$N$}{N} expansion data}
\label{app:expansion-data}

The following rational expressions and short $q$-expansions check the Wilson pairing and the discrete Macdonald spectral formula.  Higher orders are recorded in the ancillary file \texttt{expansion\_data\_full.txt}.

Define
\begin{equation}
  \mathcal R_s(d;q,\mathfrak t)
  =
  \frac{(q^d\mathfrak t^s;q)_\infty
        (q^{d+1}\mathfrak t^s;q)_\infty}
       {(q^d\mathfrak t^{s+1};q)_\infty
        (q^{d+1}\mathfrak t^{s-1};q)_\infty}
  \frac{(\mathfrak t^{s+1};q)_\infty
        (q\mathfrak t^{s-1};q)_\infty}
       {(\mathfrak t^s;q)_\infty
        (q\mathfrak t^s;q)_\infty}.
  \label{eq:appC-Rs}
\end{equation}
Then
\begin{equation}
  \mathcal N_\mu^{(N)}
  =
  \prod_{1\leq i<j\leq N}
  \mathcal R_{j-i}(\mu_i-\mu_j;q,\mathfrak t).
  \label{eq:appC-norm-factorization}
\end{equation}
For $s=1$ this simplifies to
\begin{equation}
  \mathcal R_1(d;q,\mathfrak t)
  =
  \frac{(q;q)_d(\mathfrak t^2;q)_d}
       {(\mathfrak t;q)_d(q\mathfrak t;q)_d}.
  \label{eq:appC-R1}
\end{equation}

\subsection{Rank two}
\label{subsec:appC-rank-two-data}

Put
\begin{equation}
  \kappa_2=\frac{\mathfrak t-q}{1-q\mathfrak t},
  \qquad
  \kappa_3=\frac{(1+q)(\mathfrak t-q)}
  {1-q^2\mathfrak t},
  \label{eq:appC-kappas}
\end{equation}
and
\begin{equation}
  \rho_{31}
  =\frac{(1+q+q^2)(\mathfrak t-q)}
  {1-q^3\mathfrak t},
  \qquad
  \rho_{22}
  =\frac{(q-\mathfrak t)(q^2-\mathfrak t)}
  {(1-q\mathfrak t)(1-q^2\mathfrak t)}.
  \label{eq:appC-U2-r4-rhos}
\end{equation}
The exact finite-$N=2$ expressions are
\begin{align}
  \widehat{\mathbb I}^{U(2)}_{(2,0)}
  &=\mathcal R_1(2)+\kappa_2^2,
  \label{eq:appC-U2-r2-exact}
  \\
  \widehat{\mathbb I}^{U(2)}_{(3,0)}
  &=\mathcal R_1(3)+\kappa_3^2\mathcal R_1(1),
  \label{eq:appC-U2-r3-exact}
  \\
  \widehat{\mathbb I}^{U(2)}_{(4,0)}
  &=\mathcal R_1(4)+\rho_{31}^{\,2}\mathcal R_1(2)+\rho_{22}^{\,2}.
  \label{eq:appC-U2-r4-exact}
\end{align}
Their first two angular-fugacity coefficients are
\begin{align}
  \widehat{\mathbb I}^{U(2)}_{(2,0)}
  &=
  (1+\mathfrak t+\mathfrak t^2)
  +q(\mathfrak t-1)(\mathfrak t+1)^2+O(q^2),
  \label{eq:appC-U2-r2-short-series}
  \\
  \widehat{\mathbb I}^{U(2)}_{(3,0)}
  &=
  (1+\mathfrak t)(1+\mathfrak t^2)
  +q(\mathfrak t-1)(\mathfrak t+1)
  (\mathfrak t^2+\mathfrak t+1)+O(q^2),
  \label{eq:appC-U2-r3-short-series}
  \\
  \widehat{\mathbb I}^{U(2)}_{(4,0)}
  &=
  (1+\mathfrak t+\mathfrak t^2+\mathfrak t^3+\mathfrak t^4)
  +q(\mathfrak t-1)(\mathfrak t+1)^2
  (\mathfrak t^2+1)+O(q^2).
  \label{eq:appC-U2-r4-short-series}
\end{align}

\subsection{Rank three}
\label{subsec:appC-rank-three-data}

For $N=3$,
\begin{equation}
  \widehat{\mathbb I}^{U(3)}_{(2,0,0)}
  =
  \mathcal R_1(2)\mathcal R_2(2)
  +\kappa_2^2\mathcal R_2(1)\mathcal R_1(1),
  \label{eq:appC-U3-r2-exact}
\end{equation}
whereas
\begin{equation}
\begin{split}
  \widehat{\mathbb I}^{U(3)}_{(3,0,0)}
  ={}&
  \mathcal R_1(3)\mathcal R_2(3)
  +\kappa_3^2\mathcal R_1(1)^2\mathcal R_2(2)
  +\lambda_3^2,
\end{split}
  \label{eq:appC-U3-r3-exact}
\end{equation}
with
\begin{equation}
  \lambda_3
  =
  \frac{(q-\mathfrak t)(q-\mathfrak t^2)}
  {(1-q\mathfrak t)(1-q\mathfrak t^2)}.
\end{equation}
The short expansions are
\begin{align}
  \widehat{\mathbb I}^{U(3)}_{(2,0,0)}
  &=
  (\mathfrak t^2+1)(\mathfrak t^2+\mathfrak t+1)
  \nonumber\\
  &\quad
  +q(\mathfrak t-1)(\mathfrak t+1)
  (\mathfrak t^2+\mathfrak t+1)^2+O(q^2),
  \label{eq:appC-U3-r2-short-series}
  \\
  \widehat{\mathbb I}^{U(3)}_{(3,0,0)}
  &=
  (\mathfrak t^2+1)
  (\mathfrak t^4+\mathfrak t^3+\mathfrak t^2+\mathfrak t+1)
  \nonumber\\
  &\quad
  +q(\mathfrak t-1)(\mathfrak t+1)(\mathfrak t^2+1)
  (\mathfrak t^2+\mathfrak t+1)^2+O(q^2).
  \label{eq:appC-U3-r3-short-series}
\end{align}
These expressions make the finite-$N$ cutoff explicit: the spectral channel $(1,1,1)$ occurs for $N=3$ but is absent for $N=2$.

\subsection{Special limits}
\label{subsec:appC-limits}

At $q=0$, the constant terms in
\eqref{eq:appC-U2-r2-short-series}--\eqref{eq:appC-U3-r3-short-series}
are the Hall--Littlewood specializations of the finite-$N$ Wilson pairings.  Their powers of $\mathfrak t$ encode the Kostka--Foulkes data of the Macdonald spectral labels.

At the Schur specialization one must set $\mathfrak t=q$ in the exact rational expressions.  The transition coefficients vanish unless $\mu=(r)$, and
\begin{equation}
  \widehat{\mathbb I}^{U(N)}_{(r)}(q,q)=1.
  \label{eq:appC-schur-limit}
\end{equation}

\end{document}